\theoremstyle{plain}
\newtheorem{Theorem}{Theorem}
\newtheorem{Lemma}{Lemma}
\newtheorem{Proposition}{Proposition}
\newtheorem{Corollary}{Corollary}
\theoremstyle{definition}
\newtheorem{Definition}{Definition}[section]
\theoremstyle{remark}
\newtheorem{Remark}{Remark}
\journal{Finite Fields and Their Applications}
\begin{document}

\begin{frontmatter}

\title{Galois Correspondence on Linear Codes over Finite Chain Rings}

\author[aft]{A. Fotue Tabue}\ead{alexfotue@gmail.com}
\author[ed]{E. Martínez-Moro\corref{cor1}} \ead{edgar@maf.uva.es}\cortext[cor1]{Corresponding author. Partially funded by Spanish MICINN  under grant MTM2015-665764-C3-1-P}
\author[cm]{C. Mouaha}\ead{cmouaha@yahoo.fr}

\address[aft]{Department of mathematics, Faculty of Sciences,  University of Yaoundé 1, Cameroon}
\address[ed]{Institute of Mathematics, University of Valladolid, Spain}
\address[cm]{Department of mathematics,  Higher Teachers Training College of Yaoundé, University of Yaoundé 1, Cameroon}

\begin{abstract}

Given $\texttt{S}|\texttt{R}$ a finite Galois extension
of finite chain rings and $\mathcal{B}$ an $\texttt{S}$-linear
code we define two Galois operators, the closure
operator and the interior operator.  We proof that a linear code
is Galois invariant if and only if the row standard form of its
generator matrix has all entries in the fixed ring by the Galois
group and show a Galois correspondence in the class of $\texttt{S}$-linear
codes. As applications some improvements of  upper  and lower bounds for the rank of the restriction and trace code are given  and
some applications to $\texttt{S}$-linear cyclic codes are shown.

\end{abstract}

\begin{keyword} Finite chain rings, Galois Correspondence, Linear codes, Cyclic codes.

\emph{AMS Subject Classification 2010:} 51E22; 94B05

\end{keyword}

\end{frontmatter}

\section{Introduction}\label{sec:lay}

Let $\texttt{R}$ a be finite chain ring of index nilpotency $s$,
$\texttt{S}$ the \emph{Galois extension} of $\texttt{R}$
of rank $m$, and $G$ the group of ring automorphisms of
$\texttt{S}$ fixing $\texttt{R}$. We will denote by
$\mathcal{L}(\texttt{S}^\ell)$ (resp.
$\mathcal{L}(\texttt{R}^\ell)$) the set  of $\texttt{S}$-linear
codes (resp. $\texttt{R}$-linear codes) of length $\ell.$ There
are two classical constructions that allow us to build an element
of $\mathcal{L}(\texttt{R}^\ell)$ from an element $\mathcal{B}$ of
$\mathcal{L}(\texttt{S}^\ell)$. One is the
 \emph{restriction code} of $\mathcal{\mathcal{B}}$   which is defined as
$\texttt{Res}_\texttt{R}(\mathcal{B}):=\mathcal{B}\cap
\texttt{R}^{\ell}.$ The second one is based on the fact that the
trace map
$\texttt{Tr}_\texttt{R}^\texttt{S}=\sum\limits_{\sigma\in
G}\sigma$ is a linear form,
therefore it follows that the set
\begin{align}\label{trace}
\texttt{Tr}_\texttt{R}^\texttt{S}(\mathcal{B}):=\left\{(\texttt{Tr}_\texttt{R}^\texttt{S}(c_1),\cdots,\texttt{Tr}_\texttt{R}^\texttt{S}(c_\ell))\,|\,(c_1,\cdots,c_\ell)\in\mathcal{B}\right\},
\end{align}
 is an $\texttt{R}$-linear code. The relation between the trace code and the restriction code will be given by a generalization (see Theorem~\ref{Delsarte}) of
the celebrated result due to Delsarte \cite{Del75}
\begin{align}\label{del}
\texttt{Tr}_\texttt{R}^\texttt{S}(\mathcal{B}^{\perp_{\varphi'}})=\texttt{Res}_\texttt{R}(\mathcal{B})^{\perp_\varphi},
\end{align}
where $\perp_\varphi$ and $\perp_{\varphi'}$ denote the duality
operators associated to the bilinear forms
$\varphi:\texttt{R}^\ell\times\texttt{R}^\ell\rightarrow\texttt{R}$
and
$\varphi':\texttt{S}^\ell\times\texttt{S}^\ell\rightarrow\texttt{S}$
 defined in Section~\ref{sec:bilinear}.

Restriction codes is a core topic in coding theory. Note that many well-known codes can be defined as a restriction code, for
instance BCH codes and, more generally, alternant codes (see
\cite[Chap.12]{WS77}). Also in  \cite{Bie02} restriction codes and
the closure operator on the set of linear codes over
$\mathbb{F}_{q^m}$ of length $\ell$ are used intensely to
determine the parameters of   additive cyclic codes. More
recently, the restriction code, trace code and Galois invariance
over  extension of finite fields were studied in \cite{MA10}, and
their results were extended to separable Galois extensions of
finite chain rings in \cite{MNR13}. In this paper, we will study
the following operators on
 $\mathcal{L}(\texttt{S}^\ell)$

\[\begin{array}{cccc}
  ^{\widetilde{~~~}} : & \mathcal{L}(\texttt{S}^\ell) & \rightarrow & \mathcal{L}(\texttt{S}^\ell) \\
    & \mathcal{B} & \mapsto & \widetilde{\mathcal{B}}:=\underset{\sigma\in G}{\bigvee
    }\sigma(\mathcal{B}),
\end{array}\textrm{~~~~~~ and ~~~~~~~ }
\begin{array}{cccc}
  ^{\overset{\circ}{~~~~}} : & \mathcal{L}(\texttt{S}^\ell) & \rightarrow & \mathcal{L}(\texttt{S}^\ell) \\
    & \mathcal{B} & \mapsto & \overset{\circ}{\mathcal{B}}:=\underset{\sigma\in G}{\bigcap}\sigma(\mathcal{B})
\end{array}, \]
and we will give answer to the question  if
there  is a Galois correspondence between $\mathcal{B}$ and $G$ for each  $\mathcal{B}$ in $\mathcal{L}(\texttt{S}^\ell)$.
We will make  some improvements of
the bounds for  the rank of restriction and trace of  $\texttt{S}$-linear code and
also,
    when $\ell$ and $q$ are coprime, we will answer whether a linear cyclic code over $\texttt{R}$   is  a restriction of a linear cyclic Galois invariant code over a extension  of
    $\texttt{R}$  or not.

    The outline of the paper is as follows. In Section 2 we give some preliminaries in finite commutative chain rings and their  Galois extensions. We also  formulate a generator matrix in row standard form  for
linear codes over finite chain rings. Section 3 presents the study of the
Galois operators on the lattice of linear codes. Finally in
Section 4 we describe
 linear cyclic codes over finite chain rings as the restriction
of a linear cyclic code over a   Galois extension of a
finite chain ring.

\section{Preliminaries}\label{sec:pre}

\subsection{Finite chain rings}

A finite commutative ring with identity is called a \emph{finite
chain ring} if its ideals are linearly ordered by inclusion. It is
well known that every ideal of a finite chain ring is principal and therefore
its maximal ideal is unique.  $\texttt{R}$  will denote a finite
chain ring, $\theta$ a generator of its maximal ideal
$\textgoth{m}=\texttt{R}\theta$,
$\mathbb{F}_{p^n}=\texttt{R}/\textgoth{m}$ its residue field, and $\pi :
\texttt{R}\rightarrow \mathbb{F}_q$ the canonical projection. As stated before the
ideals of $\texttt{R}$ form a chain $ \texttt{R}\supsetneq
\texttt{R}\theta \supsetneq \cdots\supsetneq
\texttt{R}\theta^{s-1} \supsetneq \texttt{R}\theta^s =\{0\}$  where the
integer $s$ is called the \emph{nilpotency index} of $\texttt{R}.$
It is easy to see that the cardinal of $\texttt{R}^\times$, the set of ring units,
  is $p^{n(s-1)}(p^n-1).$ Thus
$\texttt{R}^\times\simeq\Gamma(\texttt{R})^*\times(1+\texttt{R}\theta)$
where  $\Gamma(\texttt{R})^*=\{ b\in \texttt{R}\mid b\neq 0,\,
b^{p^n}=b \}$ is the only  subgroup of $\texttt{R}^\times$
isomorphic to $\mathbb{F}_{p^n}\setminus\{0\}$. The set
$\Gamma(\texttt{R})=\Gamma(\texttt{R})^*\cup\{0\}$ is a complete
set of representatives of $\texttt{R}$ modulo $\theta$ and it is
called the \emph{Teichmüller set} of $\texttt{R}.$
The set
$\Gamma(\texttt{R})$ is a coordinate set of $\texttt{R}$
\cite[Proposition 3.3]{Nechaev}, i.e. each element $a \in
\texttt{R}$ can be expressed uniquely as a \emph{$\theta$-adic
decomposition}
\begin{align}\label{adic}a = a_0 + a_1\theta+\cdots+a_{s-1}\theta^{s-1},\end{align}
where $a_0, a_1, \cdots, a_{s-1}\in\Gamma(\texttt{R})$. The
$\theta$-adic decomposition of elements of $\texttt{R}$ allows us
to defines the $t$-th $\theta$-adic coordinate function as
\begin{align}\label{tepr}
\begin{array}{cccc}
 \gamma_{\textgoth{t}}: & \texttt{R} & \rightarrow & \Gamma(\texttt{R}) \\
    & a & \mapsto & a_t
  \end{array}\quad \textgoth{t}=0,1,\cdots,s-1,
\end{align}
where $a= \gamma_{0}(a)+ \gamma_{1}(a)\theta+\cdots+
\gamma_{s-1}(a)\theta^{s-1}.$ Therefore we have a \emph{valuation
function} of $\texttt{R},$ defined by
$\vartheta_\texttt{R}(a):=\texttt{min}\{t\in\{0,1,\cdots,s\}\,|\,\gamma_{\textgoth{t}}(a)\neq
0\}$ and a \emph{degree function} of $\texttt{R},$ defined by
$\texttt{deg}_\texttt{R}(a):=\texttt{max}\{t\in\{0,1,\cdots,s\}\,|\,\gamma_{\textgoth{t}}(a)\neq
0\},$ for each $a$ in $\texttt{R}.$ We will assume that
$\vartheta_\texttt{R}(0)=s$ and
$\texttt{deg}_\texttt{R}(0)=-\infty.$

\subsection{Galois extensions}

Let $\texttt{R}$ and $\texttt{S}$ be two finite chain rings with
residue fields $\mathbb{F}_q$ and $\mathbb{F}_{q^m}$ respectively.
We say that $\texttt{S}$ is an \emph{extension} of $\texttt{R}$
and we denote it by $\texttt{S}|\texttt{R}$ if
$\texttt{R}\subseteq \texttt{S}$ and $1_\texttt{R} =
1_\texttt{S}.$ $\texttt{Aut}_\texttt{R}(\texttt{S})$  will denote
the group of automorphisms of $\texttt{S}$ which fix the elements
of $\texttt{R}$.  Note that the map $\sigma:a\mapsto
\sum\limits_{\textgoth{t}=0}^{s-1}\gamma_{\textgoth{t}}(a)^q\theta^\textgoth{t}
$ for all $a\in\texttt{S},$ is in
$\texttt{Aut}_\texttt{R}(\texttt{S})$ and throughout of this paper
$G$ will be  the subgroup of $\texttt{Aut}_\texttt{R}(\texttt{S})$
generated by $\sigma$. For each subring $\texttt{T}$ such that
$\texttt{R}\subseteq\texttt{T}\subseteq\texttt{S}$
  and for each subgroup of $G$ one can respectively define the
\emph{fixed group} of $\texttt{T}$ in $G$  and the \emph{fixed
ring} of $H$ in $\texttt{S}$ as
$$
 \texttt{Stab}_G(\texttt{T}):= \biggl\{\varrho\in G\,\biggr|\, \varrho(a) = a,\text{ for all } a\in
 \texttt{T}\biggr\},
\qquad \texttt{Fix}_\texttt{S}(H) := \biggl\{a\in \texttt{S }\,\biggr|\,
\varrho(a) = a,\text{ for all } \varrho\in H\biggr\}.$$

\begin{Definition}\label{Galois}
The ring $\texttt{S}$ is a \emph{Galois extension} of $\texttt{R}$
with Galois group $G$ if
\begin{enumerate}
    \item $\texttt{Fix}_\texttt{S}(G)=\texttt{R}$ and
    \item there are elements $\alpha_0,\alpha_1,\cdots,\alpha_{m-1};\alpha^*_0,\alpha^*_1,\cdots,\alpha^*_{m-1}$ in $\texttt{S}$ such that
$$\sum\limits_{t=0}^{m-1}\sigma^i(\alpha_t)\sigma^j(\alpha_t^*)=\delta_{i,j},$$
for all $i,j=0,1,\cdots,|G|-1$(where $\delta_{i,j}=1_\texttt{S}$
if $i=j,$ and $0_\texttt{S}$ otherwise).
\end{enumerate}
\end{Definition}

Note that a Galois extension is separable but the converse is not true in general as it was stated in \cite{{McD74}}, for a complete discussion on this fact  see \cite{Whe92}. If $\texttt{S}|\texttt{R}$ is a \emph{Galois
extension} with Galois group $G$ then
$\texttt{Tr}_\texttt{R}^\texttt{S}$ is a free generator of
$\texttt{Hom}_\texttt{R}(\texttt{S},\texttt{R})$  as an
$\texttt{S}$-module. The following result  in \cite[Chap. III, Theorem 1.1]{DI71} provide us the  Galois correspondence for finite chain
rings.

\begin{Lemma}\label{subG}
Let $\texttt{S}|\texttt{R}$ be a Galois extension with Galois
group $G.$ If $\texttt{T}$ a Galois extension of $\texttt{R}$ and
$\texttt{T}$ is a subring of $\texttt{S},$ then  the Galois group
of $\texttt{T}|\texttt{R}$ is $\texttt{Stab}_G(\texttt{T}).$
Furthermore, $\texttt{Stab}_G(\texttt{Fix}_\texttt{S}(H))=H$ and
$\texttt{Fix}_\texttt{S}(\texttt{Stab}_G(\texttt{T}))=\texttt{T}.$
\end{Lemma}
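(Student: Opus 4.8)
The plan is to reduce Lemma~\ref{subG} to the classical Galois theory of the residue field extension $\mathbb{F}_{q^m}|\mathbb{F}_q$ together with a lifting argument, exploiting the fact that $G$ is cyclic of order $m$ (generated by the Frobenius-type automorphism $\sigma$ described above) and that subrings $\texttt{T}$ with $\texttt{R}\subseteq\texttt{T}\subseteq\texttt{S}$ that are themselves Galois over $\texttt{R}$ are in bijection with the divisors of $m$, just as in the field case. I would first recall from the cited reference \cite[Chap.~III, Theorem~1.1]{DI71}, or re-derive, that $|G|=m$ and that $G$ acts on $\texttt{S}$ compatibly with its reduction mod $\theta$, so that $G\to\mathrm{Gal}(\mathbb{F}_{q^m}|\mathbb{F}_q)$ is an isomorphism. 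The heart of the statement is the two adjunction-type identities $\texttt{Stab}_G(\texttt{Fix}_\texttt{S}(H))=H$ and $\texttt{Fix}_\texttt{S}(\texttt{Stab}_G(\texttt{T}))=\texttt{T}$; the assertion about the Galois group of $\texttt{T}|\texttt{R}$ being $\texttt{Stab}_G(\texttt{T})$ will follow once these are in place, since restriction of $G$-action to $\texttt{T}$ and the first Galois axiom pin down the group.

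For the inclusion $H\subseteq\texttt{Stab}_G(\texttt{Fix}_\texttt{S}(H))$ and $\texttt{T}\subseteq\texttt{Fix}_\texttt{S}(\texttt{Stab}_G(\texttt{T}))$, nothing is needed: these hold by definition of the two operators. The content is the reverse inclusions. For $\texttt{Fix}_\texttt{S}(\texttt{Stab}_G(\texttt{T}))\subseteq\texttt{T}$, I would use the dual basis elements $\alpha_t,\alpha_t^*$ from Definition~\ref{Galois}: for any subgroup $H=\texttt{Stab}_G(\texttt{T})$, averaging over $H$ via the partial trace $\texttt{Tr}^\texttt{S}_\texttt{T}=\sum_{\varrho\in H}\varrho$ gives a surjection $\texttt{S}\twoheadrightarrow\texttt{T}$ (using that $\texttt{S}|\texttt{T}$ is itself Galois with group $H$, which is the part I would need to justify first — see below), and an element fixed by all of $H$ is recovered as $\tfrac{1}{|H|}$ times... — but division is not available, so instead I would argue that $\texttt{Fix}_\texttt{S}(H)$ and $\texttt{T}$ have the same cardinality by a counting/rank argument: $\texttt{S}$ is free of rank $|H|$ over $\texttt{Fix}_\texttt{S}(H)$ and free of rank $[\texttt{S}:\texttt{T}]$ over $\texttt{T}$, and $|H|=[\texttt{S}:\texttt{T}]$ by Lemma-type bookkeeping on the residue fields, forcing equality once we know one contains the other. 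For $\texttt{Stab}_G(\texttt{Fix}_\texttt{S}(H))\subseteq H$, I would similarly compare orders: $|\texttt{Stab}_G(\texttt{Fix}_\texttt{S}(H))|=[\texttt{S}:\texttt{Fix}_\texttt{S}(H)]$ (again because $\texttt{S}$ is Galois over $\texttt{Fix}_\texttt{S}(H)$) $=|H|$, and combined with $H\subseteq\texttt{Stab}_G(\texttt{Fix}_\texttt{S}(H))$ this yields equality.

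The key lemma underpinning everything is: \emph{if $\texttt{T}$ is any subring with $\texttt{R}\subseteq\texttt{T}\subseteq\texttt{S}$ that is a Galois extension of $\texttt{R}$, then $\texttt{S}$ is a Galois extension of $\texttt{T}$.} I would prove this by lifting the corresponding fact for residue fields: reduce mod $\theta$ to get $\overline{\texttt{T}}\subseteq\mathbb{F}_{q^m}$ with $\overline{\texttt{T}}=\mathbb{F}_{q^d}$ for $d\mid m$; then $\mathbb{F}_{q^m}|\mathbb{F}_{q^d}$ is Galois, and by the standard structure theory of Galois extensions of chain rings (the extension $\texttt{S}|\texttt{T}$ is obtained by adjoining a root of a basic monic polynomial that reduces to the minimal polynomial generating $\mathbb{F}_{q^m}$ over $\mathbb{F}_{q^d}$), the dual-basis axiom of Definition~\ref{Galois} lifts via Hensel/Nakayama-style arguments.

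\textbf{The main obstacle} I anticipate is exactly this lifting step — making rigorous that "Galois over $\texttt{R}$ and sandwiched between" implies "Galois over the intermediate ring", without circularly invoking the very correspondence being proved; the cleanest route is to cite \cite{DI71} or \cite{McD74} for the transitivity of Galois extensions of commutative rings and the fact that intermediate rings of a Galois extension of a chain ring which are separable over the base are again Galois, then do the purely arithmetic rank count. Once that structural input is secured, the two fixed-point identities are a formal consequence of order comparisons as sketched, and the statement about the Galois group of $\texttt{T}|\texttt{R}$ drops out by noting $\texttt{Stab}_G(\texttt{T})$ acts faithfully on $\texttt{T}$... no wait, it acts trivially on $\texttt{T}$ by definition — rather, the quotient $G/\texttt{Stab}_G(\texttt{T})$ acts on $\texttt{T}$ and one checks $\texttt{Fix}_\texttt{T}(G/\texttt{Stab}_G(\texttt{T}))=\texttt{R}$ plus a dual basis (the image of the $\alpha_t$'s), identifying this quotient with $\mathrm{Gal}(\texttt{T}|\texttt{R})$; since $G$ is cyclic, $|G/\texttt{Stab}_G(\texttt{T})|=m/|\texttt{Stab}_G(\texttt{T})|$ matches $[\texttt{T}:\texttt{R}]$, closing the argument.
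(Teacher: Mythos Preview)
The paper does not give its own proof of Lemma~\ref{subG}: it is stated as a quotation of \cite[Chap.~III, Theorem~1.1]{DI71} and used as a black box, so there is no argument in the paper to compare your sketch against.

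Your outline is a reasonable reconstruction of how one would re-derive the correspondence in the finite-chain-ring setting. The trivial containments and the order-counting strategy for the reverse inclusions are correct for this context, precisely because $G$ is cyclic and everything is controlled by the residue-field extension $\mathbb{F}_{q^m}|\mathbb{F}_q$; the step you flag as the ``main obstacle'' (that $\texttt{S}|\texttt{T}$ is again Galois) is indeed the structural input supplied by \cite{DI71} for separable algebras, so invoking it there is not circular. One point worth making explicit: your self-correction at the end is right and the first clause of the lemma as printed is not. By the paper's own definition $\texttt{Stab}_G(\texttt{T})$ is the pointwise stabiliser of $\texttt{T}$, hence it is the Galois group of $\texttt{S}|\texttt{T}$, not of $\texttt{T}|\texttt{R}$; the latter is (isomorphic to) the quotient $G/\texttt{Stab}_G(\texttt{T})$, exactly as you arrive at in your last paragraph. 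Since the paper merely cites the result and never uses that particular clause downstream, the slip is harmless for the rest of the article, but your write-up should state it correctly.
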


We say that the pair $(\texttt{Stab}_G,\texttt{Fix}_\texttt{S})$
is a \emph{Galois correspondence} between $G$ and $\texttt{S}.$  Note that the Galois extension
$\texttt{S}|\texttt{R}$ is a free $\texttt{R}$-module with
$|G|=\texttt{rank}_\texttt{R}(\texttt{S})$ (see \cite[Chap. III]{DI71}, \cite[Theorem V.4]{McD74})  and $G=\texttt{Aut}_\texttt{R}(\texttt{S})$
(see \cite[Theorem XV.10]{McD74}). From now on
$\underline{\alpha}:=\{\alpha_0,\alpha_1,\cdots,\alpha_{m-1}\}$
will denote a free $\texttt{R}$-basis of $\texttt{S}$ and
$\mathbb{M}_{\underline{\alpha}}:=\left(\texttt{Tr}_\texttt{R}^\texttt{S}(\alpha_i\alpha_j)\right)_{\substack{0\leq
i<m \\ 0\leq j<m}}$ will be the matrix associate to $\underline{\alpha}.$

\begin{Proposition}\label{Gal1}
Let $\texttt{S}|\texttt{R}$ be a Galois extension with Galois
group $G.$ Then the matrix
$\mathbb{M}_{\underline{\alpha}}$ is
invertible.
\end{Proposition}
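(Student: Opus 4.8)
The plan is to show that $\mathbb{M}_{\underline{\alpha}}$ is invertible by exhibiting an explicit inverse (or at least a matrix that witnesses invertibility) built from the dual basis guaranteed by condition (2) of Definition~\ref{Galois}. Recall that a square matrix over a finite commutative ring is invertible if and only if its determinant is a unit, equivalently if and only if its reduction modulo the maximal ideal $\textgoth{m}$ is invertible over the residue field. So there are two natural routes, and I would present the first as the main argument and mention the second as a remark.

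First I would use the separating elements $\alpha^*_0,\dots,\alpha^*_{m-1}$ from Definition~\ref{Galois}(2). Writing each $\alpha^*_t$ in the basis $\underline{\alpha}$ as $\alpha^*_t = \sum_{k} c_{t,k}\alpha_k$ with $c_{t,k}\in\texttt{R}$, and using that $\texttt{Tr}_\texttt{R}^\texttt{S} = \sum_{i}\sigma^i$ together with the orthogonality relation $\sum_{t}\sigma^i(\alpha_t)\sigma^j(\alpha^*_t)=\delta_{i,j}$, one computes
\begin{align*}
\texttt{Tr}_\texttt{R}^\texttt{S}(\alpha_r\,\alpha^*_t)
=\sum_{i}\sigma^i(\alpha_r)\sigma^i(\alpha^*_t).
\end{align*}
The key step is to recognize that the matrix $\bigl(\texttt{Tr}_\texttt{R}^\texttt{S}(\alpha_r\alpha^*_t)\bigr)_{r,t}$ equals the Gram-type matrix $V^{\top}W$ where $V=(\sigma^i(\alpha_r))_{i,r}$ and $W=(\sigma^i(\alpha^*_t))_{i,t}$, and that condition (2) says precisely $W^{\top}V = I_m$ after transposing the index convention — so $V$ is invertible over $\texttt{S}$ with inverse (a permutation/transpose of) $W$. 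Hence $\texttt{Tr}_\texttt{R}^\texttt{S}(\alpha_r\alpha^*_t)=\delta_{r,t}$ up to the bookkeeping, which means that $\mathbb{M}_{\underline{\alpha}}$, being $V^{\top}V$ times the change-of-basis matrix $C=(c_{t,k})$, factors as a product of invertible matrices; concretely $\mathbb{M}_{\underline{\alpha}}\cdot C^{\top}$ (or $C\cdot\mathbb{M}_{\underline{\alpha}}$) equals the identity, so $\det(\mathbb{M}_{\underline{\alpha}})\det(C)$ is a unit, forcing $\det(\mathbb{M}_{\underline{\alpha}})\in\texttt{R}^\times$.

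The main obstacle is purely organizational: matching up the two index conventions in Definition~\ref{Galois}(2) (which mixes $i,j$ running over $0,\dots,|G|-1$ with $t$ running over $0,\dots,m-1$, and $|G|=m$) so that the orthogonality relation is correctly read as the statement that the "Vandermonde-like" matrix $V=(\sigma^i(\alpha_t))$ and the dual matrix are mutually inverse over $\texttt{S}$. Once that identification is in place, the determinant argument is immediate. As an alternative I would note that one can instead reduce modulo $\textgoth{m}$: the reduction of $\texttt{S}|\texttt{R}$ is the field extension $\mathbb{F}_{q^m}|\mathbb{F}_q$, the images $\bar\alpha_0,\dots,\bar\alpha_{m-1}$ form an $\mathbb{F}_q$-basis (since $\underline\alpha$ is a free $\texttt{R}$-basis and $\textgoth{m}\texttt{S}$ is the maximal ideal of $\texttt{S}$), and the trace commutes with reduction; the classical nondegeneracy of the trace form on a separable field extension then gives that the reduced matrix is nonsingular over $\mathbb{F}_q$, hence $\det(\mathbb{M}_{\underline{\alpha}})\notin\textgoth{m}$, i.e. it is a unit. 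Either way the conclusion is that $\mathbb{M}_{\underline{\alpha}}\in\mathrm{GL}_m(\texttt{R})$.
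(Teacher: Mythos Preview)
Your proposal is correct, and in fact your \emph{alternative} route---reducing modulo $\textgoth{m}$, noting that $\pi(\alpha_0),\dots,\pi(\alpha_{m-1})$ is an $\mathbb{F}_q$-basis of $\mathbb{F}_{q^m}$ and that $\pi\circ\texttt{Tr}_\texttt{R}^\texttt{S}=\texttt{Tr}_{\mathbb{F}_q}^{\mathbb{F}_{q^m}}\circ\pi$, then invoking the nondegeneracy of the trace form over a finite field---is exactly the argument the paper gives. The paper does not use condition~(2) of Definition~\ref{Galois} at all here.

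Your \emph{main} argument is genuinely different. You read condition~(2) as saying that the $m\times m$ matrix $V=(\sigma^i(\alpha_t))_{i,t}$ over $\texttt{S}$ satisfies $VW^\top=I_m$, hence $V$ is invertible over $\texttt{S}$; since $\mathbb{M}_{\underline{\alpha}}=V^\top V$, its determinant is a unit in $\texttt{S}$, and being an element of $\texttt{R}$ it is a unit in $\texttt{R}$. Equivalently, expanding $\alpha^*_t=\sum_k c_{t,k}\alpha_k$ and using $W^\top V=I_m$ gives $\texttt{Tr}_\texttt{R}^\texttt{S}(\alpha_r\alpha^*_t)=\delta_{r,t}$, so $\mathbb{M}_{\underline{\alpha}}\,C^\top=I_m$ exhibits an explicit inverse over $\texttt{R}$. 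This buys you two things the paper's proof does not: it never leaves the ring level (no appeal to the residue field or to a separate field-theoretic theorem), and it produces the trace-dual basis in one stroke rather than as an a~posteriori consequence.

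One small caveat worth tightening: in the paper the free $\texttt{R}$-basis $\underline{\alpha}$ is declared \emph{after} Definition~\ref{Galois} and is, a priori, an arbitrary basis, not necessarily the system appearing in condition~(2). Your argument tacitly identifies the two. The fix is the bookkeeping you allude to: writing the elements of condition~(2) in the basis $\underline{\alpha}$ gives $V_{\text{def}}=V_{\underline{\alpha}}D^\top$ with $D\in\texttt{R}^{m\times m}$, and invertibility of $V_{\text{def}}$ over $\texttt{S}$ forces invertibility of $V_{\underline{\alpha}}$, after which the rest goes through unchanged.
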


\begin{proof} Since $\texttt{rank}_\texttt{R}(\texttt{S})=\texttt{rank}_{\mathbb{F}_q}(\mathbb{F}_{q^m})=m,$
by \cite[Theorem V.5]{McD74},
$\{\pi(\alpha_0),\pi(\alpha_1),\cdots,\pi(\alpha_{m-1})\}$ is an
$\mathbb{F}_q$-basis of $\mathbb{F}_{q^m}.$ So  In fact,
$\texttt{Tr}_{\mathbb{F}_{q}}^{\mathbb{F}_{q^m}}\circ\pi=\pi\circ\texttt{Tr}_\texttt{R}^\texttt{S}$
and
$\pi(\texttt{S}^\times)=\mathbb{F}_{q^m}\backslash\{0\}.$
So by \cite[Theorem 8.3]{Zxw03} the determinant of
$\mathbb{M}_{\underline{\alpha}}$ is a ring unit.
\end{proof}

Hence there are elements  $\alpha_0^*,\alpha_1^*,\cdots,\alpha_{m-1}^*$
in $\texttt{S}$ such that
$\texttt{Tr}_\texttt{R}^\texttt{S}(\alpha_i\alpha^*_j)=\delta_{i,j}$
 and
$(\alpha_0^*,\alpha_1^*,\cdots,\alpha_{m-1}^*)=\mathbb{M}_{\underline{\alpha}}^{-1}(\alpha_0,\alpha_1,\cdots,\alpha_{m-1}).$
Thus the set $\{\alpha_0^*,\alpha_1^*,\cdots,\alpha_{m-1}^*\}$ is a free
$\texttt{R}$-basis of $\texttt{S}$ called the \emph{trace-dual
basis} of $\{\alpha_0,\alpha_1,\cdots,\alpha_{m-1}\}.$

\subsection{Bilinear forms}\label{sec:bilinear}

An \emph{$\texttt{S}$-linear code} of
length $\ell$  is a  $\texttt{S}$-module of $\texttt{S}^{\ell}$,
and the elements of $\mathcal{B}$ are called \emph{codewords}. From now on we will assume that all codes are of length $\ell$ unless   stated otherwise.

\noindent Let be $\textbf{a}$ and $\textbf{b}$ in $\texttt{S}^\ell$, their Euclidean inner product    is defined as
$(\textbf{a},\textbf{b})_\texttt{E}=a_1b_1+a_2b_2+\cdots+a_\ell
b_\ell,$ and if $m$ is even   their Hermitian inner
product   is defined as
$(\textbf{a},\textbf{b})_\texttt{H}=(\sigma^{\frac{m}{2}}(\textbf{a}),\textbf{b})_\texttt{E}$.  Note that
$(-,-)_\texttt{E}$ is a symmetric bilinear form.

\noindent For all $\textbf{a}$ in $\texttt{S}^\ell$ and $\textbf{b}$ in
$\texttt{R}^\ell,$
$\texttt{Tr}_\texttt{R}^\texttt{S}\left((\textbf{a},\textbf{b})_\texttt{E}\right)=\left(\texttt{Tr}_\texttt{R}^\texttt{S}(\textbf{a}),\textbf{b}\right)_\texttt{E},$
and if $m$ is even,
$\texttt{Tr}_\texttt{R}^\texttt{S}\left((\textbf{a},\textbf{b})_\texttt{H}\right)=\texttt{Tr}_\texttt{R}^\texttt{S}\left((\textbf{a},\textbf{b})_\texttt{E}\right),$
since
$\texttt{Tr}_\texttt{R}^\texttt{S}\left(\sigma^{\frac{m}{2}}(\textbf{a})\right)=\texttt{Tr}_\texttt{R}^\texttt{S}(\textbf{a}).$
Throughout the paper $\varphi=(-,-)_\texttt{E}$ and if $m$ is even
$\varphi'=(-,-)_\texttt{H},$ otherwise
$\varphi'=(-,-)_\texttt{E}.$ It is clear that
\begin{align}\label{rela1}
\varphi(\textbf{b},\texttt{Tr}_\texttt{R}^\texttt{S}(\textbf{a}))=\varphi(\texttt{Tr}_\texttt{R}^\texttt{S}(\textbf{a}),\textbf{b})=\texttt{Tr}_\texttt{R}^\texttt{S}(\varphi'(\textbf{a},\textbf{b})),
\hbox{ for all } \textbf{a}\in\texttt{S}^\ell\hbox{ and }\textbf{b}\in\texttt{R}^\ell.
\end{align}

\begin{Lemma}\label{dual-dual} Let $\mathcal{B}$ be an $\texttt{S}$-linear code.
Then
$$\mathcal{B}^{\perp_{\varphi'}}:=\left\{\textbf{a}\in\texttt{S}^\ell\,|\,\varphi'(\textbf{a},\textbf{c})=0,\,\text{
for all }\textbf{c}\in\mathcal{B}\right\}$$ is an
$\texttt{S}$-linear code of the same  length as $\mathcal{B}$ and
$\left(\mathcal{B}^{\perp_{\varphi'}}\right)^{\perp_{\varphi'}}=\mathcal{B}.$

\end{Lemma}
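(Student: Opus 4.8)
The plan is to split the statement into three parts and dispatch the formal ones first: that $\mathcal{B}^{\perp_{\varphi'}}$ is an $\texttt{S}$-submodule of $\texttt{S}^\ell$ (hence an $\texttt{S}$-linear code of length $\ell$), the easy inclusion $\mathcal{B}\subseteq(\mathcal{B}^{\perp_{\varphi'}})^{\perp_{\varphi'}}$, and the reverse inclusion. For $\texttt{S}$-linearity: $\mathcal{B}^{\perp_{\varphi'}}$ is clearly additive, and for $\lambda\in\texttt{S}$ and $\mathbf{a}\in\mathcal{B}^{\perp_{\varphi'}}$ one has $\varphi'(\lambda\mathbf{a},\mathbf{c})=\lambda\,\varphi'(\mathbf{a},\mathbf{c})$ when $\varphi'=(-,-)_\texttt{E}$ and $\varphi'(\lambda\mathbf{a},\mathbf{c})=\sigma^{m/2}(\lambda)\,\varphi'(\mathbf{a},\mathbf{c})$ when $\varphi'=(-,-)_\texttt{H}$, so in either case $\varphi'(\lambda\mathbf{a},\mathbf{c})$ vanishes for every $\mathbf{c}\in\mathcal{B}$; as $\mathcal{B}^{\perp_{\varphi'}}\subseteq\texttt{S}^\ell$ this makes it a code of length $\ell$. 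For the easy inclusion I would record that $\varphi'$ is \emph{conjugate-symmetric}, i.e. $\varphi'(\mathbf{a},\mathbf{b})=0$ if and only if $\varphi'(\mathbf{b},\mathbf{a})=0$: when $m$ is odd this is the symmetry of $(-,-)_\texttt{E}$ from Section~\ref{sec:bilinear}, and when $m$ is even a one-line computation gives $\varphi'(\mathbf{a},\mathbf{b})=\sigma^{m/2}(\varphi'(\mathbf{b},\mathbf{a}))$ (using $\sigma^{m}=\mathrm{id}$) with $\sigma^{m/2}$ injective; hence every $\mathbf{b}\in\mathcal{B}$ lies in $(\mathcal{B}^{\perp_{\varphi'}})^{\perp_{\varphi'}}$.

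For the reverse inclusion I would first reduce to the Euclidean form $\varphi$. If $m$ is odd then $\varphi'=\varphi$ and nothing is needed. If $m$ is even, then since $\sigma$ generates $G$ and $|G|=m$ the automorphism $\sigma^{m/2}$ has order $2$, and acting coordinatewise it is a cardinality-preserving involution of $\texttt{S}^\ell$; unwinding the definition of $(-,-)_\texttt{H}$ gives $\mathcal{B}^{\perp_{\varphi'}}=\sigma^{m/2}(\mathcal{B}^{\perp_{\varphi}})$ and, iterating, $(\mathcal{B}^{\perp_{\varphi'}})^{\perp_{\varphi'}}=\sigma^{m}\big((\mathcal{B}^{\perp_{\varphi}})^{\perp_{\varphi}}\big)=(\mathcal{B}^{\perp_{\varphi}})^{\perp_{\varphi}}$. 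So it suffices to prove $(\mathcal{B}^{\perp_{\varphi}})^{\perp_{\varphi}}=\mathcal{B}$ for the Euclidean inner product, and for this the only fact needed is the orthogonality identity $|\mathcal{B}|\cdot|\mathcal{B}^{\perp_{\varphi}}|=|\texttt{S}|^{\ell}$: applying it to $\mathcal{B}$ and then to $\mathcal{B}^{\perp_{\varphi}}$ forces $|(\mathcal{B}^{\perp_{\varphi}})^{\perp_{\varphi}}|=|\mathcal{B}|$, which together with the inclusion already established gives equality.

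The main obstacle is therefore the identity $|\mathcal{B}|\cdot|\mathcal{B}^{\perp_{\varphi}}|=|\texttt{S}|^{\ell}$, i.e. the non-degeneracy of the standard dot product on the free module $\texttt{S}^\ell$ in the strong form valid over Frobenius rings. I would deduce it from the fact that a finite chain ring is a finite Frobenius ring (see e.g.\ \cite{McD74}): fixing a generating character $\chi$ of $(\texttt{S},+)$, the pairing $(\mathbf{x},\mathbf{y})\mapsto\chi(\varphi(\mathbf{x},\mathbf{y}))$ identifies $\texttt{S}^\ell$ with its Pontryagin dual $\widehat{\texttt{S}^\ell}$ and carries $\mathcal{B}^{\perp_{\varphi}}$ onto the annihilator $\mathrm{Ann}(\mathcal{B})=\{\psi\in\widehat{\texttt{S}^\ell}\mid\psi|_{\mathcal{B}}=0\}\cong\widehat{\texttt{S}^\ell/\mathcal{B}}$, so $|\mathcal{B}^{\perp_{\varphi}}|=|\texttt{S}|^{\ell}/|\mathcal{B}|$, and Pontryagin biduality $\mathrm{Ann}(\mathrm{Ann}(\mathcal{B}))=\mathcal{B}$ yields the Euclidean double-dual. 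An alternative that avoids characters: once the row standard form of a generator matrix of $\mathcal{B}$ over $\texttt{S}$ is available, one can exhibit from it a parity-check matrix of complementary \emph{type}, which gives both $|\mathcal{B}^{\perp_{\varphi}}|$ and the biduality directly.
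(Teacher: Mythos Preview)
Your proof is correct and follows essentially the same line as the paper: establish $\texttt{S}$-linearity of the dual, reduce the Hermitian case to the Euclidean one via the involution $\sigma^{m/2}$, and then invoke Euclidean biduality. The only difference is that the paper simply cites \cite[Theorem~3.10(iii)]{NS00} for $(\mathcal{B}^{\perp})^{\perp}=\mathcal{B}$ in the Euclidean setting, whereas you supply a self-contained argument via the Frobenius-ring/generating-character identification $|\mathcal{B}|\cdot|\mathcal{B}^{\perp}|=|\texttt{S}|^{\ell}$ (with the row-standard-form alternative); this makes your write-up more self-contained but is not a different strategy.
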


\begin{proof} Let $\textbf{a}$ be in $\texttt{S}^\ell,$  the map
$\varphi'_{\textbf{a}}:=\varphi'(\textbf{a},-)$ is an
$\texttt{S}$-linear form. Therefore
$\mathcal{B}^{\perp_{\varphi'}}=\underset{\textbf{c}\in\mathcal{B}}{\cap
}\texttt{Ker}(\varphi'_\textbf{c})$ is an $\texttt{S}$-linear code. If $m$ is odd,
$\varphi=\varphi'=(-,-)_\texttt{E},$ by \cite[Theorem 3.10.
(iii)]{NS00},
$\left(\mathcal{B}^{\perp}\right)^{\perp}=\mathcal{B}.$ Otherwise,
$$\mathcal{B}^{\perp_\texttt{H}}=\left\{\textbf{a}\in\texttt{S}^\ell\,|\,(\textbf{a},\textbf{c})_{\texttt{E}}=0,\,\text{
for all
}\textbf{c}\in\sigma^{\frac{m}{2}}(\mathcal{B})\right\}=(\sigma^{\frac{m}{2}}(\mathcal{B}))^{\perp}.$$
Thus
$\left(\mathcal{B}^{\perp_{\texttt{H}}}\right)^{\perp_{\texttt{H}}}=\left(\sigma^{\frac{m}{2}}(\mathcal{B})^{\perp}\right)^{\perp_{\texttt{H}}}=\left(\sigma^{\frac{2m}{2}}(\mathcal{B})^{\perp}\right)^{\perp}.$
Since $\sigma^{\frac{2m}{2}}=\sigma^m=\texttt{Id}$ it follows that
$\left(\mathcal{B}^{\perp_{\texttt{H}}}\right)^{\perp_{\texttt{H}}}=\left(\mathcal{B}^{\perp}\right)^{\perp}=\mathcal{B}$.
\end{proof}

The $\texttt{S}$-linear code $\mathcal{B}^{\perp_{\varphi'}}$ is called \emph{$\varphi'$-dual code} of
the code $\mathcal{B}$
associated to the $\texttt{S}$-bilinear form
$\varphi'.$ The following theorem generalizes Delsarte's celebrated
result~\cite{Del75}.

\begin{Theorem}[Delsarte Theorem]\label{Delsarte}

 Let $\mathcal{B}$ be an $\texttt{S}$-linear
code then
$\texttt{Tr}_\texttt{R}^\texttt{S}(\mathcal{B}^{\perp_{\varphi'}})=\texttt{Res}_\texttt{R}(\mathcal{B})^{\perp}.$
\end{Theorem}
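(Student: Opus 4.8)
The plan is to prove the two inclusions $\texttt{Tr}_\texttt{R}^\texttt{S}(\mathcal{B}^{\perp_{\varphi'}})\subseteq\texttt{Res}_\texttt{R}(\mathcal{B})^{\perp}$ and $\texttt{Res}_\texttt{R}(\mathcal{B})^{\perp}\subseteq\texttt{Tr}_\texttt{R}^\texttt{S}(\mathcal{B}^{\perp_{\varphi'}})$ separately, using the identity~\eqref{rela1} for the first and a dimension/duality count together with Lemma~\ref{dual-dual} for the second. First I would treat the easy inclusion. Take $\textbf{a}\in\mathcal{B}^{\perp_{\varphi'}}$; then for every $\textbf{c}\in\mathcal{B}$ we have $\varphi'(\textbf{a},\textbf{c})=0$. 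In particular, for every $\textbf{b}\in\texttt{Res}_\texttt{R}(\mathcal{B})=\mathcal{B}\cap\texttt{R}^\ell$, relation~\eqref{rela1} gives $\varphi(\textbf{b},\texttt{Tr}_\texttt{R}^\texttt{S}(\textbf{a}))=\texttt{Tr}_\texttt{R}^\texttt{S}(\varphi'(\textbf{a},\textbf{b}))=\texttt{Tr}_\texttt{R}^\texttt{S}(0)=0$. Since $\texttt{Tr}_\texttt{R}^\texttt{S}(\textbf{a})\in\texttt{R}^\ell$, this says precisely that $\texttt{Tr}_\texttt{R}^\texttt{S}(\textbf{a})\in\texttt{Res}_\texttt{R}(\mathcal{B})^{\perp}$, which is the inclusion $\subseteq$.

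For the reverse inclusion I would argue by a counting argument exploiting the fact that, for finite chain rings, an $\texttt{R}$-linear code $\mathcal{C}$ and its Euclidean dual $\mathcal{C}^{\perp}$ satisfy $|\mathcal{C}|\cdot|\mathcal{C}^{\perp}|=|\texttt{R}|^\ell=|\texttt{R}^\ell|$ (this is a standard fact over finite Frobenius, in particular finite chain, rings; it is implicit in \cite{NS00,Nechaev}). Set $\mathcal{D}:=\texttt{Tr}_\texttt{R}^\texttt{S}(\mathcal{B}^{\perp_{\varphi'}})$. The first inclusion shows $\mathcal{D}\subseteq\texttt{Res}_\texttt{R}(\mathcal{B})^{\perp}$, and it suffices to show $|\mathcal{D}|\geq|\texttt{Res}_\texttt{R}(\mathcal{B})^{\perp}|=|\texttt{R}^\ell|/|\texttt{Res}_\texttt{R}(\mathcal{B})|$. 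Applying the first inclusion to the code $\mathcal{B}^{\perp_{\varphi'}}$ in place of $\mathcal{B}$, and using Lemma~\ref{dual-dual} which gives $\bigl(\mathcal{B}^{\perp_{\varphi'}}\bigr)^{\perp_{\varphi'}}=\mathcal{B}$, one gets $\texttt{Tr}_\texttt{R}^\texttt{S}(\mathcal{B})\subseteq\texttt{Res}_\texttt{R}(\mathcal{B}^{\perp_{\varphi'}})^{\perp}$. The crux is therefore to relate the two sizes: I would establish the exact counting identities $|\texttt{Tr}_\texttt{R}^\texttt{S}(\mathcal{B})|\cdot|\texttt{Res}_\texttt{R}(\mathcal{B}^{\perp_{\varphi'}})|=|\texttt{R}^\ell|$ and, dually, $|\texttt{Tr}_\texttt{R}^\texttt{S}(\mathcal{B}^{\perp_{\varphi'}})|\cdot|\texttt{Res}_\texttt{R}(\mathcal{B})|=|\texttt{R}^\ell|$. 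Combining this last identity with the inclusion $\mathcal{D}\subseteq\texttt{Res}_\texttt{R}(\mathcal{B})^{\perp}$ and the size $|\texttt{Res}_\texttt{R}(\mathcal{B})^{\perp}|=|\texttt{R}^\ell|/|\texttt{Res}_\texttt{R}(\mathcal{B})|$ forces equality of sizes, hence $\mathcal{D}=\texttt{Res}_\texttt{R}(\mathcal{B})^{\perp}$.

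The main obstacle is the counting identity $|\texttt{Tr}_\texttt{R}^\texttt{S}(\mathcal{B})|\cdot|\texttt{Res}_\texttt{R}(\mathcal{B}^{\perp_{\varphi'}})|=|\texttt{R}|^\ell$, because over a chain ring the trace map on $\texttt{S}^\ell$ is not injective and these modules need not be free, so one cannot simply argue by ranks. I would prove it by a rank/socle-filtration argument: viewing $\texttt{Tr}_\texttt{R}^\texttt{S}:\texttt{S}^\ell\to\texttt{R}^\ell$ as a surjective $\texttt{R}$-linear map (surjectivity because $\texttt{Tr}_\texttt{R}^\texttt{S}$ generates $\texttt{Hom}_\texttt{R}(\texttt{S},\texttt{R})$, as recalled in the excerpt, so it is nonzero modulo $\textgoth{m}$ and hence onto by Nakayama), its kernel is exactly $\bigl(\texttt{R}^\ell\bigr)^{\perp_{\varphi'}}$ inside $\texttt{S}^\ell$ with respect to $\varphi'$ — this follows from~\eqref{rela1} and the nondegeneracy built into the trace-dual basis $\{\alpha_i^*\}$. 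Then $|\texttt{Tr}_\texttt{R}^\texttt{S}(\mathcal{B})|=|\mathcal{B}|/|\mathcal{B}\cap(\texttt{R}^\ell)^{\perp_{\varphi'}}|$, and by $\varphi'$-duality over the chain ring $\texttt{S}$ (using $|\mathcal{B}|\cdot|\mathcal{B}^{\perp_{\varphi'}}|=|\texttt{S}|^\ell$, again the Frobenius-ring counting identity, and the general fact $(\mathcal{U}\cap\mathcal{V})^{\perp_{\varphi'}}=\mathcal{U}^{\perp_{\varphi'}}+\mathcal{V}^{\perp_{\varphi'}}$ which holds over chain rings), one rewrites $|\mathcal{B}\cap(\texttt{R}^\ell)^{\perp_{\varphi'}}|$ in terms of $|\mathcal{B}^{\perp_{\varphi'}}+\texttt{R}^\ell|$, and finally $\texttt{Res}_\texttt{R}(\mathcal{B}^{\perp_{\varphi'}})=\mathcal{B}^{\perp_{\varphi'}}\cap\texttt{R}^\ell$ so that $|\mathcal{B}^{\perp_{\varphi'}}+\texttt{R}^\ell|=|\mathcal{B}^{\perp_{\varphi'}}|\cdot|\texttt{R}^\ell|/|\texttt{Res}_\texttt{R}(\mathcal{B}^{\perp_{\varphi'}})|$. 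Chasing these equalities yields the claimed product formula, and the theorem follows. (Alternatively, if a row-standard-form basis for $\mathcal{B}$ with the structure from Section~\ref{sec:pre} is available, one could push $\mathbb{M}_{\underline{\alpha}}$-coordinates through and make the bijection explicit; I expect the pure counting route above to be cleaner.)
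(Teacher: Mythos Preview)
Your first inclusion is correct and matches the paper's argument exactly.

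The reverse inclusion, however, has a genuine gap. You claim that the kernel of the coordinatewise trace $\texttt{Tr}_\texttt{R}^\texttt{S}:\texttt{S}^\ell\to\texttt{R}^\ell$ equals $(\texttt{R}^\ell)^{\perp_{\varphi'}}$ inside $\texttt{S}^\ell$. This is false: for either choice of $\varphi'$, taking $\textbf{b}$ to run over the standard basis vectors $e_i\in\texttt{R}^\ell$ gives $\varphi'(\textbf{a},e_i)=a_i$ (or $\sigma^{m/2}(a_i)$ in the Hermitian case), so $(\texttt{R}^\ell)^{\perp_{\varphi'}}=\{\textbf{0}\}$, whereas the trace kernel is the large submodule $(\ker\texttt{Tr}_\texttt{R}^\texttt{S})^\ell$. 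What relation~\eqref{rela1} actually yields is that the trace kernel is the orthogonal of $\texttt{R}^\ell$ with respect to the $\texttt{R}$-bilinear form $(\textbf{a},\textbf{b})\mapsto\texttt{Tr}_\texttt{R}^\texttt{S}(\varphi'(\textbf{a},\textbf{b}))$, not with respect to $\varphi'$ itself. Consequently the duality identity $(\mathcal{U}\cap\mathcal{V})^{\perp_{\varphi'}}=\mathcal{U}^{\perp_{\varphi'}}+\mathcal{V}^{\perp_{\varphi'}}$ you invoke next is being applied to objects ($\texttt{R}^\ell$ and its $\varphi'$-orthogonal) that are not $\texttt{S}$-linear, and the chain of size equalities does not go through as written. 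The counting route can be repaired by working throughout with the nondegenerate $\texttt{R}$-valued trace form on $\texttt{S}^\ell$ (for which the needed size and sum--intersection identities do hold over the Frobenius ring $\texttt{R}$, and for which the orthogonal of an $\texttt{S}$-linear code coincides with its $\varphi'$-dual), but that requires setting up a separate duality and is more work than you indicate.

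By contrast, the paper avoids counting altogether: for the reverse inclusion it shows directly that $\bigl(\texttt{Tr}_\texttt{R}^\texttt{S}(\mathcal{B}^{\perp_{\varphi'}})\bigr)^{\perp}\subseteq\texttt{Res}_\texttt{R}(\mathcal{B})$. Given $\textbf{a}\in\texttt{R}^\ell$ orthogonal to every $\texttt{Tr}_\texttt{R}^\texttt{S}(\textbf{c})$ with $\textbf{c}\in\mathcal{B}^{\perp_{\varphi'}}$, one uses \eqref{rela1} and $\texttt{S}$-linearity of $\mathcal{B}^{\perp_{\varphi'}}$ to get $\texttt{Tr}_\texttt{R}^\texttt{S}\bigl(\lambda\,\varphi'(\textbf{a},\textbf{c})\bigr)=0$ for all $\lambda\in\texttt{S}$; nondegeneracy of the trace pairing on $\texttt{S}$ then forces $\varphi'(\textbf{a},\textbf{c})=0$, so $\textbf{a}\in(\mathcal{B}^{\perp_{\varphi'}})^{\perp_{\varphi'}}=\mathcal{B}$ by Lemma~\ref{dual-dual}. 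Taking $\texttt{R}$-duals gives the desired inclusion. This elementwise argument is shorter and sidesteps the module-theoretic subtleties your counting approach runs into.
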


\begin{proof}
 Let
$\textbf{a}\in\texttt{Tr}_\texttt{R}^\texttt{S}(\mathcal{B}^{\perp_{\varphi'}}).$
Then $\textbf{a}=\texttt{Tr}_\texttt{R}^\texttt{S}(\textbf{b})$
and $\textbf{b}\in\mathcal{B}^{\perp_{\varphi'}}.$ For all
$\textbf{c}\in\texttt{Res}_\texttt{R}(\mathcal{B}),$
\begin{eqnarray*}
  \varphi(\textbf{a},\textbf{c}) &=& \varphi(\texttt{Tr}_\texttt{R}^\texttt{S}(\textbf{b}),\textbf{c}), \\
    &=&
    \texttt{Tr}_\texttt{R}^\texttt{S}(\varphi'(\textbf{b},\textbf{c})), \text{ from } (\ref{rela1})\\
    &=&\texttt{Tr}_\texttt{R}^\texttt{S}(0),\\
    &=&0.
\end{eqnarray*}
Thus
$\texttt{Tr}_\texttt{R}^\texttt{S}(\mathcal{B}^{\perp_{\varphi'}})\subseteq\texttt{Res}_\texttt{R}(\mathcal{B})^{\perp}.$

    On the other hand, let
    $\textbf{a}\in\left(\texttt{Tr}_\texttt{R}^\texttt{S}(\mathcal{B}^{\perp_{\varphi'}})\right)^{\perp}$, we have that
    for all $\textbf{c}$ in $\mathcal{B}^{\perp_{\varphi'}},$
    $\varphi(\textbf{a},\texttt{Tr}_\texttt{R}^\texttt{S}(\textbf{c}))=0.$
    Thus for all $\lambda$ in $\texttt{S}$
    $\lambda\textbf{c}\in\mathcal{B}$ and  it follows that
    $\varphi(\textbf{a},\texttt{Tr}_\texttt{R}^\texttt{S}(\lambda\textbf{c}))=0$.
    Considering  the relation (\ref{rela1}) we have that
    $\texttt{Tr}_\texttt{R}^\texttt{S}(\varphi'(\textbf{a},\lambda\textbf{c}))=0$
  and
    since $\varphi'(\textbf{a},-)$ is linear  we get that $\varphi'(\textbf{a},\lambda\textbf{c})=\lambda\varphi'(\textbf{a},\textbf{c}).$
Hence,
$\texttt{Tr}_\texttt{R}^\texttt{S}(\lambda\varphi'(\textbf{a},\textbf{c}))=0$
for all $\lambda$ in $\texttt{S}.$ Therefore by
{Lemma}~\ref{dual-dual} we have
$\texttt{Res}_\texttt{R}(\mathcal{B})\subseteq\texttt{Tr}_\texttt{R}^\texttt{S}(\mathcal{B}^{\perp_{\varphi'}}).$

\end{proof}

\subsection{Generator matrix in row standard form}

Usually  to each $\texttt{S}$-linear code can be associated  a generator matrix in standard
form that involves permuting the columns of the original generator matrix, i.e. the code generated by the new matrix is permutation equivalent to the original one (see \cite[Proposition 3.2]{NS00}). We will now reformulate a
more detailed version of this result in order to define a unique
generator matrix in row standard canonical for an
$\texttt{S}$-linear code that generates the code and that will be helpful in determining whether a code is Galois invariant or not.

Let $\textbf{v}_1, \textbf{v}_2, \ldots , \textbf{v}_k$ be
non-zero vectors in $\texttt{S}^\ell$, we say that they  are
\emph{$\texttt{S}$-independent}
 if for all $a_1, a_2,\ldots, a_k$ in $\texttt{S}$ we have that
$a_1\textbf{v}_1+ a_2\textbf{v}_2+ \cdots +
a_k\textbf{v}_k=\textbf{0}$ implies that $a_i\textbf{v}_i
=\textbf{0},$ for all $i.$ Let $\mathcal{B}$ be an
$\texttt{S}$-linear code, the codewords $\textbf{c}_1,
\textbf{c}_2, \ldots , \textbf{c}_k\in \mathcal{B}$ form an
$\texttt{S}$-\emph{basis} of $\mathcal{B}$ if they are independent
and  they \emph{generate} $\mathcal{B}$ as an $\texttt{S}$ module.
Any $\texttt{S}$-linear code $\mathcal{B}$ admits an
$\texttt{S}$-basis and any two $\texttt{S}$-basis of $\mathcal{B}$
has the same number of codewords, see \cite[Theorems
4.6--4.7]{DL09}. The number of codewords of an $\texttt{S}$-basis
of $\mathcal{B}$ is called rank of $\mathcal{B}$ and it will be
denoted as $\texttt{rank}_{\texttt{S}}(\mathcal{B}).$ We also
will denote by $\texttt{row}(A)$ the $\texttt{S}$-linear code generated by the rows of the  matrix $A.$

The set of all $k\times\ell$ matrices over $\texttt{S}$ will be
denote by $\texttt{S}^{k\times\ell}.$ A matrix
$A\in\texttt{S}^{k\times\ell}$ is said be a \emph{full-rank
matrix} if $\texttt{rank}(A)=k$. A matrix
$A$ with entries in $\texttt{S}$ is called a \emph{generator
matrix} for the code $\mathcal{B}$ if the set of rows of $A$ is a
basis of $\mathcal{B}$, therefore it is a  full-rank matrix. We will denote by
$\texttt{GL}_k(\texttt{S})$ the group of invertible matrices in
$\texttt{S}^{k\times k}$. We say that the matrices $A$ and $B$ in
$\texttt{S}^{k\times \ell}$ are row-equivalent if there exists a matrix
$P\in\texttt{GL}_k(\texttt{S})$ such that $B=PA$.

 \begin{Definition}\label{pivot}
Let $A$ be a matrix in $\texttt{S}^{k\times\ell}$ and $A[i:]$ the
$i$-th row of $A;$
  $A[:j]$ the $j$-th column of $A;$
 $A[i;j]$ the $(i,j)$-entry of $A$.
\begin{enumerate}
    \item The \emph{valuation function} of $A$ is the mapping $\vartheta_A:\{1,\cdots,k\}\rightarrow\{0,1,\cdots,s\},$
defined by
$$\vartheta_A(i):=\vartheta_\texttt{S}(A[i:]):=\texttt{min}\{\vartheta_\texttt{S}(A[i;j])\,|\, 1\leq
j\leq\ell \}.$$
   \item The \emph{pivot} of a nonzero row $A[i:]$ of
$A,$ is the first entry among all the entries least  with
valuation in that row. By convention, the pivot of the zero row is
its first entry.
    \item The \emph{pivot function} of $A$ is the mapping $\label{pivot2}{\rho}:\{1,\cdots,k\}\rightarrow\{1,\cdots,\ell\},$
defined by  $$\rho(i):= \texttt{min}\biggl\{j\in\{1;\cdots;\ell\}
\,|\, \vartheta_{S}(A[i;j])=\vartheta_i  \biggr\}.$$

\end{enumerate}

\end{Definition}

Note that from the definition the pivot of the row
$A[i:]$ is the element $\label{pivot1} {A[i,\rho(i)]}.$ Let
$\varrho$ be a ring automorphism of $\texttt{S}$, it is clear that the pivot function and
valuation function of the matrices
$A$ and $\left(\varrho(A[i;j])\right)_{\substack{1\leq i\leq k \\
1\leq j\leq\ell}}$ provide the same values.

\begin{Definition}[Matrix in row standard form \cite{CNKD}]\label{defi0}
A matrix $A\in \texttt{S}^{k\times\ell}$ is in {row standard
form} if it satisfies the following conditions
\begin{enumerate}
     \item The pivot function of $A$ is injective and the valuation function of $A$ is increasing,
    \item for all $i\in\{1,\cdots,k\},$ there is $\vartheta_i\in\{0,1,\cdots,s-1\}$ such that
    $A[i;\rho(i)]=\theta^{\vartheta_i}$ and $ {A[i:]}\in(\theta^{\vartheta_i}\texttt{S})^\ell$ and
    \item  for all pairs $i, t\in\{1,\cdots,k\}$ such that $t\neq i,$ then 
    \begin{enumerate}
      \item either $i>t$  and
     $ \texttt{deg}_\texttt{R}\left({A[t;\rho(i)]}\right)<\vartheta_{i}$,
     \item      or $A[i;\rho(t)]=0$.
    \end{enumerate}
\end{enumerate}
\end{Definition}

Note that a matrix in row
standard form is a nonzero matrix and that its rows are linearly independent. Moreover, if $A\in
\texttt{S}^{k\times\ell}$ is in row standard form, then for any
ring
automorphism $\varrho$ of $\texttt{S},$ the matrix $\left(\varrho(A[i;j])\right)_{\substack{1\leq i\leq k \\
1\leq j\leq\ell}}$ is also in row standard form.

Let  $A\in\texttt{S}^{k\times\ell}$  be a nonzero matrix,  we say that a matrix
$B\in\texttt{S}^{k\times\ell}$ is the \emph{row standard form} of
$A$ if  $B$ is in row standard form and  $B$ is row-equivalent to
$A$. A proof of the existence and unicity of the row standard form
of a matrix can be found in \cite{CNKD}. Since the set of all
generator matrices of any $\texttt{S}$-linear code $\mathcal{B}$
is a coset under   row equivalence, it follows that
$\mathcal{B}$ has a unique generator matrix in row standard
form that will be denoted by $\texttt{RSF}(\mathcal{B})$.
As usual we define the type of a linear code as follows.

\begin{Definition}[Type of a linear code]
 Let $\mathcal{B}$ be an $\texttt{S}$-linear code of length $\ell.$ Denoted by  $\theta^{\vartheta_i}$ the  $i$-th pivot of
$\texttt{RSF}(\mathcal{B}).$ The \emph{type} $\mathcal{B}$ is the
$(s+1)$-tuples
$$(\ell;k_0,k_1,\cdots,k_{s-1})$$ where $k_{\textgoth{t}}:=|\{\vartheta_i\,|\,\vartheta_i=\textgoth{t}\}|.$

\end{Definition}

Note that if $(\ell;k_0,k_1,\cdots,k_{s-1})$ is the type of an
$\texttt{S}$-linear code $\mathcal{B}$ then we can be compute the
$\texttt{S}$-rank of $\mathcal{B}$ and the number of codewords of
$\mathcal{B},$ of the following way:
$$\texttt{rank}_{\texttt{S}}(\mathcal{B})=\sum\limits_{\textgoth{t}=0}^{s-1}k_{\textgoth{t}},\text{
~~and~~
    }|\mathcal{B}|=q^{m\left(\sum\limits_{t=0}^{s-1}k_{t}(s-t)\right)}.$$

\section{Galois action on $\mathcal{L}(\texttt{S}^\ell).$}

Let $\texttt{S}|\texttt{R}$ be a  Galois extension of
finite chain ring  with Galois group $G.$ The Galois group $G$
acts on $\mathcal{L}(\texttt{S}^\ell)$ as follows; Let $\mathcal{B}$ in
$\mathcal{L}(\texttt{S}^\ell)$ and $\sigma$ in $G$
\begin{align}\sigma(\mathcal{B})=\left\{(\sigma(c_0),\sigma(c_1),\cdots,\sigma(c_{\ell-1}))\,\biggr|\,(c_0,c_1,\cdots,c_{\ell-1})\in\mathcal{B}\right\}.\end{align}

\begin{Definition}[Galois invariance]\label{G-inv}
A linear code $\mathcal{B}$ over $\texttt{S}$ is called
\emph{Galois invariant} if $\sigma(\mathcal{B}) = \mathcal{B}$
for all $\sigma\in G$.
\end{Definition}
An direct consequence of this definition is the following
fact.  Let $\mathcal{B}$ be an $\texttt{S}$-linear code  and $m$ an even number,
if $\mathcal{B}$ is a Galois invariant code then
$\mathcal{B}^{\perp_{\varphi'}}=\mathcal{B}^{\perp}$ (note that
$\mathcal{B}^{\perp_{\varphi'}}=(\sigma^{\frac{m}{2}}\left(\mathcal{B}\right))^{\perp_E}$). Therefore,
we will  consider only the euclidean inner product from now on.

\begin{Lemma}\label{in-du} Let $\mathcal{B}$ be an $\texttt{S}$-linear code  and $A$ a generator matrix of $\mathcal{B}$.
\begin{enumerate}
\item $\sigma(\mathcal{B}^{\perp})=\sigma(\mathcal{B})^{\perp},$
and $\sigma(\texttt{row}(A))=\texttt{row}(\sigma(A)),$ for all
$\sigma\in G$. \item The following assertions are equivalent:
\begin{enumerate}
    \item $\mathcal{B}$ is Galois invariant;
    \item $\mathcal{B}^{\perp}$ is Galois invariant.
    \end{enumerate}
\end{enumerate}
\end{Lemma}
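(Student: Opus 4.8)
The plan is to prove part~(1) first, since part~(2) follows quickly from it. For the first claim of part~(1), I would argue directly from the definitions. Recall that $\sigma$ is a ring automorphism of $\texttt{S}$ fixing $\texttt{R}$, hence it acts coordinatewise on $\texttt{S}^\ell$ and preserves the Euclidean bilinear form in the sense that $\varphi(\sigma(\textbf{a}),\sigma(\textbf{c}))=\sigma(\varphi(\textbf{a},\textbf{c}))$ for all $\textbf{a},\textbf{c}\in\texttt{S}^\ell$; this is immediate because $\varphi$ is built from the ring operations of $\texttt{S}$. Now take $\textbf{a}\in\mathcal{B}^\perp$. For every codeword $\textbf{c}\in\mathcal{B}$ we have $\varphi(\textbf{a},\textbf{c})=0$, so applying $\sigma$ gives $\varphi(\sigma(\textbf{a}),\sigma(\textbf{c}))=\sigma(0)=0$. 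As $\textbf{c}$ ranges over $\mathcal{B}$, the vector $\sigma(\textbf{c})$ ranges over all of $\sigma(\mathcal{B})$, so $\sigma(\textbf{a})\in\sigma(\mathcal{B})^\perp$; this shows $\sigma(\mathcal{B}^\perp)\subseteq\sigma(\mathcal{B})^\perp$. For the reverse inclusion I apply the same argument to $\sigma^{-1}$ (which is again in $G$), or simply note that both sides are $\texttt{S}$-linear codes and that $\sigma(\mathcal{B}^\perp)$ and $\sigma(\mathcal{B})^\perp$ have the same type since $\sigma$ is a bijection preserving the valuation structure (equivalently, apply the dual-dual identity of Lemma~\ref{dual-dual} together with the inclusion just proved). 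For the second claim, $\sigma(\texttt{row}(A))=\texttt{row}(\sigma(A))$: if $A$ has rows $\textbf{v}_1,\dots,\textbf{v}_k$, then $\texttt{row}(A)=\{\sum_i a_i\textbf{v}_i\mid a_i\in\texttt{S}\}$, so $\sigma(\texttt{row}(A))=\{\sum_i\sigma(a_i)\sigma(\textbf{v}_i)\mid a_i\in\texttt{S}\}$; since $\sigma$ is a bijection on $\texttt{S}$, the coefficients $\sigma(a_i)$ range over all of $\texttt{S}$, and $\sigma(\textbf{v}_i)$ are exactly the rows of $\sigma(A)$, so this equals $\texttt{row}(\sigma(A))$.

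For part~(2), the equivalence of (a) and (b) is now a formal consequence of part~(1). If $\mathcal{B}$ is Galois invariant, then for every $\sigma\in G$ we have $\sigma(\mathcal{B})=\mathcal{B}$, hence $\sigma(\mathcal{B}^\perp)=\sigma(\mathcal{B})^\perp=\mathcal{B}^\perp$, so $\mathcal{B}^\perp$ is Galois invariant. Conversely, if $\mathcal{B}^\perp$ is Galois invariant, apply the same step to get $(\mathcal{B}^\perp)^\perp$ Galois invariant, and then invoke $(\mathcal{B}^\perp)^\perp=\mathcal{B}$ from Lemma~\ref{dual-dual} (in the Euclidean case, which is the only case we now consider by the remark preceding this lemma).

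I do not expect a genuine obstacle here; the only point requiring a little care is the reverse inclusion $\sigma(\mathcal{B})^\perp\subseteq\sigma(\mathcal{B}^\perp)$ in part~(1), where one must not tacitly assume $\sigma$ is surjective on $\texttt{S}^\ell$ without noting it follows from $\sigma$ being a ring automorphism of $\texttt{S}$ (so $\sigma^{-1}\in G=\texttt{Aut}_\texttt{R}(\texttt{S})$ exists). Running the argument symmetrically with $\sigma^{-1}$ closes the gap cleanly.
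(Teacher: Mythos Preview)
Your argument is correct and is exactly the natural one: use that $\sigma$ is a ring automorphism to get $\varphi(\sigma(\textbf{a}),\sigma(\textbf{c}))=\sigma(\varphi(\textbf{a},\textbf{c}))$, deduce $\sigma(\mathcal{B}^{\perp})=\sigma(\mathcal{B})^{\perp}$ by applying $\sigma$ and $\sigma^{-1}$, handle $\texttt{row}$ by the bijectivity of $\sigma$ on scalars, and then derive part~(2) from part~(1) together with Lemma~\ref{dual-dual}. The paper in fact states this lemma without proof, treating it as routine, so there is no alternative approach to compare against; your write-up fills in precisely the details the authors left implicit.
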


The following theorem
allows us to check the Galois invariance of a code by checking its generator matrix in
row standard form.

\begin{Theorem}\label{thm0} Let $\mathcal{B}$ be an $\texttt{S}$-linear code
and $A\in\texttt{S}^{k\times\ell}$ a
generator matrix of $\mathcal{B}.$  Then the following facts are
equivalent.
\begin{enumerate}
    \item $\mathcal{B}$ is Galois invariant.
    \item $\texttt{RSF}(\mathcal{B})$ in
$\texttt{R}^{k\times\ell}.$
\end{enumerate}

\end{Theorem}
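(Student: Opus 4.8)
The plan is to prove the two implications separately, using the uniqueness of the row standard form as the central tool. For the implication $(2)\Rightarrow(1)$: suppose $\texttt{RSF}(\mathcal{B})$ has all entries in $\texttt{R}$. For any $\sigma\in G$ we have $\sigma(\mathcal{B})=\texttt{row}(\sigma(\texttt{RSF}(\mathcal{B})))$ by Lemma~\ref{in-du}(1). But $\sigma$ fixes $\texttt{R}$ pointwise, so $\sigma(\texttt{RSF}(\mathcal{B}))=\texttt{RSF}(\mathcal{B})$ entrywise, hence $\sigma(\mathcal{B})=\texttt{row}(\texttt{RSF}(\mathcal{B}))=\mathcal{B}$. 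This direction is essentially immediate.

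For the implication $(1)\Rightarrow(2)$: assume $\mathcal{B}$ is Galois invariant, and write $B:=\texttt{RSF}(\mathcal{B})$. Fix $\sigma$ a generator of $G$ and consider the matrix $\sigma(B)$ obtained by applying $\sigma$ to every entry. The key observation, already noted in the excerpt right after Definition~\ref{defi0}, is that $\sigma(B)$ is again in row standard form, since $\sigma$ preserves the valuation and pivot functions and sends $\theta^{\vartheta_i}$ to itself (recall $\theta\in\texttt{R}$, as $\texttt{S}|\texttt{R}$ shares the same chain structure up to the residue extension; more precisely $\theta$ generates the maximal ideal and is fixed by $G$). Moreover, $\texttt{row}(\sigma(B))=\sigma(\texttt{row}(B))=\sigma(\mathcal{B})=\mathcal{B}$, again by Lemma~\ref{in-du}(1) and Galois invariance. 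So $\sigma(B)$ is a generator matrix of $\mathcal{B}$ which is in row standard form; by the uniqueness of the row standard form of a code (stated in the excerpt, citing \cite{CNKD}), $\sigma(B)=B$. Thus every entry of $B$ is fixed by $\sigma$, hence by all of $G$, so every entry lies in $\texttt{Fix}_\texttt{S}(G)=\texttt{R}$ by the definition of a Galois extension. Therefore $B\in\texttt{R}^{k\times\ell}$, which is exactly $(2)$.

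The step I expect to require the most care is verifying cleanly that $\sigma(B)$ is genuinely in row standard form, i.e. checking each of the three conditions of Definition~\ref{defi0} is invariant under applying $\sigma$ entrywise. Conditions (1) and (2) follow from the invariance of the valuation and pivot functions together with $\sigma(\theta^{\vartheta_i})=\theta^{\vartheta_i}$; condition (3) requires that $\sigma$ preserve $\texttt{deg}_\texttt{R}$ and the vanishing of entries, which again follows since $\sigma$ is a ring automorphism commuting with the $\theta$-adic coordinate functions (as $\sigma$ permutes the Teichmüller set and fixes $\theta$). I would state this as a short lemma or simply invoke the remark following Definition~\ref{defi0}, which already records precisely this fact. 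Once that is in hand, the argument is just "apply $\sigma$, land in row standard form, invoke uniqueness, conclude entries are $G$-fixed."
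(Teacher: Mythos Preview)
Your proposal is correct and follows essentially the same route as the paper: for $(1)\Rightarrow(2)$ you observe that $\sigma(\texttt{RSF}(\mathcal{B}))$ is again in row standard form and generates $\sigma(\mathcal{B})=\mathcal{B}$, then invoke uniqueness of the row standard form to conclude that every entry is $G$-fixed and hence lies in $\texttt{R}=\texttt{Fix}_\texttt{S}(G)$; for $(2)\Rightarrow(1)$ both you and the paper simply note that $\sigma$ fixes a matrix with entries in $\texttt{R}$. Your additional care in spelling out why $\sigma$ preserves the row standard form conditions is welcome but not strictly new---the paper relies on the remark following Definition~\ref{defi0} for exactly this point.
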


\begin{proof} $\quad $
\begin{description}
    \item[$1.\Rightarrow 2.$]  Let $\sigma$ in
$G.$ Then the matrix $\sigma(\texttt{RSF}(\mathcal{B}))$ is the
generator matrix in row standard form of $\sigma(\mathcal{B}),$ by
the uniqueness of generator matrix in row standard form, it
follows
$\texttt{RSF}(\mathcal{B})=\sigma(\texttt{RSF}(\mathcal{B}))$ for
all $\sigma$ in $G$, thus
  $\texttt{RSF}(\mathcal{B})\in\texttt{Fix}_\texttt{S}(G)^{k\times\ell}.$
  As $\texttt{S}|\texttt{R}$ is a Galois separable extension with Galois group $G,$ it follows
  that $\texttt{Fix}_\texttt{S}(G)=\texttt{R}.$ Hence $\texttt{RSF}(\mathcal{B})\in\texttt{R}^{k\times\ell}.$
    \item[$2.\Rightarrow1.$] If
$\texttt{RSF}(\mathcal{B})\in\texttt{R}^{k\times\ell},$ Then
$\sigma(\texttt{RSF}(\mathcal{B}))=\texttt{RSF}(\mathcal{B})$ is a
generator matrix of $\mathcal{B}$ and of $\sigma(\mathcal{B})$,
therefore $\mathcal{B}$ is Galois invariant (see \cite[Theorem 1]{MNR13}).
\end{description}
\end{proof}

\begin{Corollary} Let $\mathcal{B}$ be a linear code over $\texttt{S}$, $\mathcal{B}$ is Galois invariant if and
only if
$\texttt{RSF}(\mathcal{B})=\texttt{RSF}(\texttt{Res}(\mathcal{B}))$.
\end{Corollary}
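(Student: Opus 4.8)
The plan is to derive this directly from Theorem~\ref{thm0} together with the elementary observation that the row standard form of a code with entries in $\texttt{R}$ coincides with the row standard form of its restriction. First I would prove the forward direction: assume $\mathcal{B}$ is Galois invariant. By Theorem~\ref{thm0}, $\texttt{RSF}(\mathcal{B})\in\texttt{R}^{k\times\ell}$. Since every row of $\texttt{RSF}(\mathcal{B})$ is a codeword of $\mathcal{B}$ with all entries in $\texttt{R}$, each such row lies in $\mathcal{B}\cap\texttt{R}^\ell=\texttt{Res}_\texttt{R}(\mathcal{B})$. Conversely, $\texttt{Res}_\texttt{R}(\mathcal{B})\subseteq\mathcal{B}$, so any $\texttt{R}$-combination of these rows that lands in $\texttt{R}^\ell$ is again in $\texttt{Res}_\texttt{R}(\mathcal{B})$; I would argue that the $\texttt{R}$-span of the rows of $\texttt{RSF}(\mathcal{B})$ is exactly $\texttt{Res}_\texttt{R}(\mathcal{B})$ (one inclusion is immediate; for the other, an arbitrary codeword of $\mathcal{B}$ with coordinates in $\texttt{R}$ is an $\texttt{S}$-combination of the rows, and reducing that combination modulo the pivot structure forces the coefficients into $\texttt{R}$, much as in the uniqueness argument for the row standard form). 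Hence $\texttt{RSF}(\mathcal{B})$ is a generator matrix for $\texttt{Res}_\texttt{R}(\mathcal{B})$ which is already in row standard form, so by uniqueness $\texttt{RSF}(\texttt{Res}_\texttt{R}(\mathcal{B}))=\texttt{RSF}(\mathcal{B})$.

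For the reverse direction, suppose $\texttt{RSF}(\mathcal{B})=\texttt{RSF}(\texttt{Res}_\texttt{R}(\mathcal{B}))$. The code $\texttt{Res}_\texttt{R}(\mathcal{B})$ is an $\texttt{R}$-linear code contained in $\texttt{R}^\ell$, so its generator matrix in row standard form has all entries in $\texttt{R}$; thus $\texttt{RSF}(\mathcal{B})\in\texttt{R}^{k\times\ell}$, and Theorem~\ref{thm0} gives that $\mathcal{B}$ is Galois invariant. Here I would note that the notion of row standard form is insensitive to whether one works over $\texttt{R}$ or over $\texttt{S}$: the pivot and valuation functions depend only on the entries, and for a matrix with entries in $\texttt{R}$ the defining conditions of Definition~\ref{defi0} are the same whether we view it as a matrix over $\texttt{R}$ or over $\texttt{S}$, since $\texttt{deg}_\texttt{R}$ and $\vartheta_\texttt{S}$ restrict correctly (recall $\texttt{S}|\texttt{R}$ is an extension of chain rings with the same uniformizer up to units).

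The main obstacle I anticipate is the bookkeeping in showing that the rows of $\texttt{RSF}(\mathcal{B})$, when $\mathcal{B}$ is Galois invariant, actually generate $\texttt{Res}_\texttt{R}(\mathcal{B})$ as an $\texttt{R}$-module (not merely that they lie in it). One must rule out that $\texttt{Res}_\texttt{R}(\mathcal{B})$ is strictly larger; the argument uses that any codeword $\textbf{c}\in\texttt{Res}_\texttt{R}(\mathcal{B})$ can be written as $\sum_i a_i A[i:]$ with $a_i\in\texttt{S}$, and then a descent on the pivots—subtracting appropriate $\theta$-adic pieces and using the degree conditions in part~3 of Definition~\ref{defi0}—shows each $a_i$ can be taken in $\texttt{R}$. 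This is essentially the same computation that underlies the uniqueness of the row standard form, so I would either cite \cite{CNKD} for it or reproduce it in a couple of lines. Everything else is formal manipulation with Theorem~\ref{thm0}.
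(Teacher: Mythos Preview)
Your proposal is correct and takes essentially the same approach as the paper: the paper's entire proof is the single line ``follows directly from Theorem~\ref{thm0}'', and your plan is precisely to unpack that reduction. The extra work you anticipate---showing that when $\texttt{RSF}(\mathcal{B})\in\texttt{R}^{k\times\ell}$ its rows generate $\texttt{Res}_\texttt{R}(\mathcal{B})$ over $\texttt{R}$, via the pivot/degree conditions of Definition~\ref{defi0} or by citing \cite{CNKD}---is exactly the detail the paper suppresses, so your write-up would simply be a fleshed-out version of the paper's one-line argument.
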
 \noindent The proof follows directly from {Theorem}~\ref{thm0} above. We have also the following result.

\begin{Corollary}\label{type-inv} Let $\mathcal{B}$ be a linear code over $\texttt{S}$ of the type
$(\ell;k_0,k_1,\cdots,k_{s-1}).$ Then the following conditions are
equivalent.
\begin{enumerate}
    \item $\mathcal{B}$ is Galois invariant,
    \item $\texttt{Res}_\texttt{R}(\mathcal{B})$ is of type $(\ell;k_0,k_1,\cdots,k_{s-1}).$
\end{enumerate}

\end{Corollary}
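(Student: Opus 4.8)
The plan is to deduce this directly from Theorem~\ref{thm0}, using the fact that the type of a linear code is read off from the pivots of its generator matrix in row standard form, together with the observation (already recorded in the paragraph preceding Definition~\ref{defi0}) that applying a ring automorphism $\varrho$ of $\texttt{S}$ to a matrix in row standard form preserves both its pivot function and its valuation function. First I would show the implication $1.\Rightarrow 2$. Assume $\mathcal{B}$ is Galois invariant; then by Theorem~\ref{thm0} the matrix $\texttt{RSF}(\mathcal{B})$ has all its entries in $\texttt{R}$, so it is simultaneously a matrix over $\texttt{R}$ in row standard form and a generator matrix of $\texttt{Res}_\texttt{R}(\mathcal{B})=\mathcal{B}\cap\texttt{R}^\ell$ as an $\texttt{R}$-linear code. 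By uniqueness of the row standard form it is therefore $\texttt{RSF}(\texttt{Res}_\texttt{R}(\mathcal{B}))$, and since the type is defined entirely in terms of the pivots $\theta^{\vartheta_i}$ of the row standard form, $\texttt{Res}_\texttt{R}(\mathcal{B})$ has exactly the same type $(\ell;k_0,k_1,\cdots,k_{s-1})$ as $\mathcal{B}$.

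For the converse $2.\Rightarrow 1$, suppose $\texttt{Res}_\texttt{R}(\mathcal{B})$ has type $(\ell;k_0,k_1,\cdots,k_{s-1})$. Here I would argue by comparing cardinalities (or equivalently $\texttt{R}$-ranks weighted by the pivot levels). One always has the inclusion $\texttt{Res}_\texttt{R}(\mathcal{B})\subseteq\mathcal{B}$, and more usefully $\sigma(\texttt{Res}_\texttt{R}(\mathcal{B}))=\texttt{Res}_\texttt{R}(\mathcal{B})$ for every $\sigma\in G$ since $\texttt{Res}_\texttt{R}(\mathcal{B})$ lies in $\texttt{R}^\ell$, which is fixed pointwise by $G$. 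The key point is that for any $\texttt{R}$-basis $\{\textbf{c}_1,\dots,\textbf{c}_k\}$ of $\texttt{Res}_\texttt{R}(\mathcal{B})$ coming from $\texttt{RSF}(\texttt{Res}_\texttt{R}(\mathcal{B}))$, these codewords are also $\texttt{S}$-independent in $\mathcal{B}$ with pivots $\theta^{\vartheta_i}$, because $\texttt{S}$ is a free $\texttt{R}$-module; hence the $\texttt{S}$-submodule $\mathcal{B}'$ of $\texttt{S}^\ell$ they generate has type $(\ell;k_0,\dots,k_{s-1})$ and satisfies $\mathcal{B}'\subseteq\mathcal{B}$. Comparing the cardinality formula $|\mathcal{B}|=q^{m(\sum_t k_t(s-t))}$ for $\mathcal{B}'$ with that of $\mathcal{B}$ — which by hypothesis has the same type — forces $\mathcal{B}'=\mathcal{B}$. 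Thus $\texttt{RSF}(\mathcal{B})=\texttt{RSF}(\texttt{Res}_\texttt{R}(\mathcal{B}))\in\texttt{R}^{k\times\ell}$, and Theorem~\ref{thm0} gives that $\mathcal{B}$ is Galois invariant.

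The step I expect to be the main obstacle is making rigorous, in the converse direction, that the row standard form of $\texttt{Res}_\texttt{R}(\mathcal{B})$ over $\texttt{R}$ actually \emph{is} the row standard form of some full-rank $\texttt{S}$-submodule of the right type — i.e. that the conditions of Definition~\ref{defi0} (injectivity of the pivot function, the pivot entries being $\theta^{\vartheta_i}$, and the degree/vanishing conditions on off-pivot entries) are insensitive to whether we work over $\texttt{R}$ or over $\texttt{S}$. This is essentially the content of the remark that row standard form is stable under ring automorphisms together with flatness of $\texttt{S}$ over $\texttt{R}$, but it should be spelled out. Once that compatibility is granted, everything else is a cardinality count using the type-to-size formula already established.
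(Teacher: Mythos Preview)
Your argument is correct and is precisely the natural way to flesh out the paper's treatment: the paper gives no explicit proof of Corollary~\ref{type-inv}, relying instead on Theorem~\ref{thm0} together with the preceding corollary $\texttt{RSF}(\mathcal{B})=\texttt{RSF}(\texttt{Res}_\texttt{R}(\mathcal{B}))$, and your two directions unpack exactly that. One small point to tighten in the implication $1.\Rightarrow 2.$: when you assert that $\texttt{RSF}(\mathcal{B})\in\texttt{R}^{k\times\ell}$ is a generator matrix of $\texttt{Res}_\texttt{R}(\mathcal{B})$ over $\texttt{R}$, you are using more than containment of its rows in $\texttt{R}^\ell$ --- you need that every $\textbf{c}\in\mathcal{B}\cap\texttt{R}^\ell$ is an $\texttt{R}$-combination (not merely an $\texttt{S}$-combination) of those rows, which follows from the freeness of $\texttt{S}$ over $\texttt{R}$ (decompose the $\texttt{S}$-coefficients along an $\texttt{R}$-basis of $\texttt{S}$ containing $1$); this is exactly the content of the corollary immediately preceding Corollary~\ref{type-inv}, so you may simply cite it.
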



 For all $\mathcal{B}_1,\, \mathcal{B}_2\in\mathcal{L}(\texttt{S}^\ell),$
$\mathcal{B}_1\vee\mathcal{B}_2=\mathcal{B}_1+\mathcal{B}_2$ is
the smallest  $\texttt{S}$-linear code  containing
$\mathcal{B}_1$ and $\mathcal{B}_2$. Note
 that
$\left(\mathcal{L}(\texttt{S}^\ell);\cap, \vee\right)$ is a
lattice and for each $\mathcal{E}\subseteq\texttt{S}^\ell$ we define  $\texttt{Ext}(\mathcal{E})$, the
\emph{extension code} of $\mathcal{E}$ to $\texttt{S}$,  as the code form by all
$\texttt{S}$-linear combinations of elements in $\mathcal{E}$.

\begin{Proposition}\label{ope} The operators
\begin{align}\mathcal{L}(\texttt{S}^\ell)\overset{\texttt{Tr}_\texttt{R}^\texttt{S};\texttt{Res}_\texttt{R}}{\underset{\texttt{Ext}}{\rightleftarrows
}}\large{\mathcal{L}}_\ell(\texttt{R})\end{align} are lattice
morphisms. Moreover,
$$\texttt{Ext}(\mathcal{C}^{\perp})=\texttt{Ext}(\mathcal{C})^{\perp}\hbox{ and }
\texttt{Tr}_\texttt{R}^\texttt{S}(\texttt{Ext}(\mathcal{C}))=\texttt{Res}_\texttt{R}(\texttt{Ext}(\mathcal{C}))=\mathcal{C}
\hbox{ for all }\mathcal{C}\in\large{\mathcal{L}}_\ell(\texttt{R}).
$$

\end{Proposition}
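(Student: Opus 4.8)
The plan is to verify each claimed property in turn, relying on Theorem~\ref{Delsarte} (Delsarte) and Lemma~\ref{dual-dual} for the duality statements, and on the definitions of $\texttt{Res}_\texttt{R}$, $\texttt{Tr}_\texttt{R}^\texttt{S}$ and $\texttt{Ext}$ for the lattice-morphism claims. First I would establish that $\texttt{Ext}$ and $\texttt{Res}_\texttt{R}$ are \emph{mutually inverse} on the relevant subclasses: for $\mathcal{C}\in\large{\mathcal{L}}_\ell(\texttt{R})$, since $\texttt{S}$ is free over $\texttt{R}$ with basis $\underline{\alpha}$, a generator matrix $G_\mathcal{C}\in\texttt{R}^{k\times\ell}$ of $\mathcal{C}$ generates $\texttt{Ext}(\mathcal{C})=\texttt{row}(G_\mathcal{C})$ as an $\texttt{S}$-module, and an element of $\texttt{Ext}(\mathcal{C})$ lies in $\texttt{R}^\ell$ iff, writing its coefficient vector in the basis $\underline{\alpha}$, only the $\alpha_0=1$-component survives; freeness then forces that element to be an $\texttt{R}$-combination of the rows of $G_\mathcal{C}$, giving $\texttt{Res}_\texttt{R}(\texttt{Ext}(\mathcal{C}))=\mathcal{C}$. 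The identity $\texttt{Tr}_\texttt{R}^\texttt{S}(\texttt{Ext}(\mathcal{C}))=\mathcal{C}$ follows because $\texttt{Tr}_\texttt{R}^\texttt{S}$ applied row-wise to $G_\mathcal{C}\in\texttt{R}^{k\times\ell}$ gives $|G|\,G_\mathcal{C}$ on the generators, but more carefully one uses that $\texttt{Tr}_\texttt{R}^\texttt{S}$ is surjective onto $\texttt{R}$ (it is a free generator of $\texttt{Hom}_\texttt{R}(\texttt{S},\texttt{R})$, hence in particular surjective), so $\texttt{Tr}_\texttt{R}^\texttt{S}(\texttt{Ext}(\mathcal{C}))=\texttt{Tr}_\texttt{R}^\texttt{S}(\texttt{S})\cdot\mathcal{C}=\texttt{R}\cdot\mathcal{C}=\mathcal{C}$, using $\texttt{R}$-linearity of the trace and that $\texttt{Ext}(\mathcal{C})=\texttt{S}\cdot\mathcal{C}$ (the $\texttt{S}$-span of $\mathcal{C}$).

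Next I would check the lattice-morphism claims. That $\texttt{Ext}$ preserves $\vee$ is immediate since $\texttt{Ext}(\mathcal{C}_1)+\texttt{Ext}(\mathcal{C}_2)$ is the $\texttt{S}$-span of $\mathcal{C}_1\cup\mathcal{C}_2$, which equals $\texttt{Ext}(\mathcal{C}_1+\mathcal{C}_2)$; preservation of $\cap$ then follows formally from $\texttt{Ext}$ being a bijection onto its image with inverse $\texttt{Res}_\texttt{R}$, together with the fact that $\texttt{Res}_\texttt{R}$ visibly preserves intersections ($\texttt{Res}_\texttt{R}(\mathcal{B}_1\cap\mathcal{B}_2)=\mathcal{B}_1\cap\mathcal{B}_2\cap\texttt{R}^\ell=\texttt{Res}_\texttt{R}(\mathcal{B}_1)\cap\texttt{Res}_\texttt{R}(\mathcal{B}_2)$). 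For $\texttt{Res}_\texttt{R}$ and $\texttt{Tr}_\texttt{R}^\texttt{S}$ preserving $\vee$ one uses the reciprocity relation~(\ref{rela1}) through Delsarte's Theorem~\ref{Delsarte}: the inclusion $\texttt{Res}_\texttt{R}(\mathcal{B}_1)+\texttt{Res}_\texttt{R}(\mathcal{B}_2)\subseteq\texttt{Res}_\texttt{R}(\mathcal{B}_1+\mathcal{B}_2)$ is trivial, and the reverse is obtained by dualizing: $\texttt{Res}_\texttt{R}(\mathcal{B}_1+\mathcal{B}_2)^{\perp}=\texttt{Tr}_\texttt{R}^\texttt{S}((\mathcal{B}_1+\mathcal{B}_2)^{\perp_{\varphi'}})=\texttt{Tr}_\texttt{R}^\texttt{S}(\mathcal{B}_1^{\perp_{\varphi'}}\cap\mathcal{B}_2^{\perp_{\varphi'}})$, and one shows the trace of an intersection is the intersection of the traces in this setting, then dualizes back using Lemma~\ref{dual-dual}; symmetrically for $\texttt{Tr}_\texttt{R}^\texttt{S}$.

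Finally, the duality identity $\texttt{Ext}(\mathcal{C}^{\perp})=\texttt{Ext}(\mathcal{C})^{\perp}$: the code $\texttt{Ext}(\mathcal{C})=\texttt{row}(G_\mathcal{C})$ has generator matrix $G_\mathcal{C}\in\texttt{R}^{k\times\ell}$, hence by Theorem~\ref{thm0} it is Galois invariant; by Lemma~\ref{in-du}(2) its Euclidean dual $\texttt{Ext}(\mathcal{C})^{\perp}$ is Galois invariant, and by Theorem~\ref{thm0} again $\texttt{RSF}(\texttt{Ext}(\mathcal{C})^{\perp})\in\texttt{R}^{k'\times\ell}$; one then identifies $\texttt{Res}_\texttt{R}(\texttt{Ext}(\mathcal{C})^{\perp})$ with $\texttt{Res}_\texttt{R}(\texttt{Ext}(\mathcal{C}))^{\perp}=\mathcal{C}^{\perp}$ via Delsarte combined with Galois invariance (which collapses $\perp_{\varphi'}$ to $\perp$ and makes the trace act trivially up to a unit on the $\texttt{R}$-rational part), and applies $\texttt{Ext}$ to both sides using the bijectivity established in the first step. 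The main obstacle I anticipate is the careful bookkeeping showing $\texttt{Tr}_\texttt{R}^\texttt{S}$ commutes with intersections of the relevant $\varphi'$-dual codes (equivalently, that $\texttt{Tr}_\texttt{R}^\texttt{S}$ preserves $\vee$), since in the chain-ring setting one cannot simply invoke dimension/complement arguments as over a field; the resolution is to run everything through Delsarte's theorem and Lemma~\ref{dual-dual}, converting every $\vee$-statement into a $\cap$-statement about dual codes, where the verification is direct from the definitions.
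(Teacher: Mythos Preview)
Your treatment of the ``Moreover'' clause is sound and in fact more explicit than the paper's: the paper simply notes that $\texttt{Ext}(\mathcal{C})$ is Galois invariant and invokes that to get $\texttt{Tr}_\texttt{R}^\texttt{S}(\texttt{Ext}(\mathcal{C}))=\texttt{Res}_\texttt{R}(\texttt{Ext}(\mathcal{C}))=\mathcal{C}$ and $\texttt{Ext}(\mathcal{C}^{\perp})=\texttt{Ext}(\mathcal{C})^{\perp}$ in one line, whereas you unpack these via freeness of $\texttt{S}$ over $\texttt{R}$, surjectivity of the trace, and Theorem~\ref{thm0}. (One small repair: your computation of $\texttt{Res}_\texttt{R}(\texttt{Ext}(\mathcal{C}))$ assumes $1\in\underline{\alpha}$; it is cleaner to argue as you do later, via the row standard form of $G_{\mathcal{C}}$ and Theorem~\ref{thm0}.)

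The genuine gap is in the lattice-morphism part, and it is exactly the obstacle you flag. Your proposed resolution --- convert $\vee$-statements into $\cap$-statements about duals via Delsarte and Lemma~\ref{dual-dual} --- is circular. By Theorem~\ref{Delsarte}, the identity $\texttt{Tr}_\texttt{R}^\texttt{S}(\mathcal{B}_1^{\perp}\cap\mathcal{B}_2^{\perp})=\texttt{Tr}_\texttt{R}^\texttt{S}(\mathcal{B}_1^{\perp})\cap\texttt{Tr}_\texttt{R}^\texttt{S}(\mathcal{B}_2^{\perp})$ is \emph{equivalent} to $\texttt{Res}_\texttt{R}(\mathcal{B}_1+\mathcal{B}_2)=\texttt{Res}_\texttt{R}(\mathcal{B}_1)+\texttt{Res}_\texttt{R}(\mathcal{B}_2)$, so dualizing sends you back to where you started. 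Worse, neither identity holds in general: over $\texttt{S}=\mathbb{F}_4$, $\texttt{R}=\mathbb{F}_2$, $\ell=2$, with $\mathcal{B}_1=\texttt{S}\cdot(1,0)$ and $\mathcal{B}_2=\texttt{S}\cdot(1,\omega)$ one has $\mathcal{B}_1\cap\mathcal{B}_2=\{0\}$ but $\texttt{Tr}(\mathcal{B}_1)\cap\texttt{Tr}(\mathcal{B}_2)=\texttt{R}\cdot(1,0)$, and dually $\texttt{Res}_\texttt{R}(\mathcal{B}_1)+\texttt{Res}_\texttt{R}(\mathcal{B}_2)\subsetneq\texttt{Res}_\texttt{R}(\mathcal{B}_1+\mathcal{B}_2)$. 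The paper's own proof simply \emph{asserts} $\texttt{Tr}_\texttt{R}^\texttt{S}(\mathcal{B}\cap\mathcal{B}')=\texttt{Tr}_\texttt{R}^\texttt{S}(\mathcal{B})\cap\texttt{Tr}_\texttt{R}^\texttt{S}(\mathcal{B}')$ without argument, so you have correctly located the weak point; but the Delsarte-dualizing strategy cannot close it. If the intended meaning of ``lattice morphism'' is merely ``order-preserving map'' (as the later Galois-correspondence framing suggests), then all three operators are trivially monotone and no further work is needed; if it means a $(\vee,\wedge)$-homomorphism, the claim as stated is false for $\texttt{Tr}_\texttt{R}^\texttt{S}$ and $\texttt{Res}_\texttt{R}$.
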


\begin{proof}  Let $\mathcal{B}$ and $\mathcal{B}'$ be two
$\texttt{S}$-linear codes. The trace map
$\texttt{Tr}_\texttt{R}^\texttt{S}$ is surjective and we have
$\texttt{Tr}_\texttt{R}^\texttt{S}(\mathcal{B}+\mathcal{B}')=\texttt{Tr}_\texttt{R}^\texttt{S}(\mathcal{B})+\texttt{Tr}_\texttt{R}^\texttt{S}(\mathcal{B}')$
and
$\texttt{Tr}_\texttt{R}^\texttt{S}(\mathcal{B}\cap\mathcal{B}')=\texttt{Tr}_\texttt{R}^\texttt{S}(\mathcal{B})\cap\texttt{Tr}_\texttt{R}^\texttt{S}(\mathcal{B}').$
Applying  Theorem~\ref{del} we get that $\texttt{Res}_\texttt{R}$
is a lattice morphism. On the other hand, let $\mathcal{C}$ in
$\large{\mathcal{L}}_\ell(\texttt{R}),$ the $\texttt{S}$-linear
code $\texttt{Ext}(\mathcal{C})$ is Galois invariant, thus
$\texttt{Tr}_\texttt{R}^\texttt{S}(\texttt{Ext}(\mathcal{C}))=\texttt{Res}_\texttt{R}(\texttt{Ext}(\mathcal{C}))=\mathcal{C}$
and
$\texttt{Ext}(\mathcal{C}^{\perp})=\texttt{Ext}(\mathcal{C})^{\perp}.$

\end{proof}

\begin{Definition}[Galois closure and Galois interior]\label{defi}

Let $\mathcal{B}$ be a linear code over $\texttt{S}.$
\begin{enumerate}
    \item The \emph{Galois closure} of $\mathcal{B},$ denoted  by $\widetilde{\mathcal{B}}$, is the smallest linear
code over $\texttt{S},$ containing $\mathcal{B},$ which is Galois
invariant,
$$\widetilde{\mathcal{B}}:=\bigcap\biggl\{\mathcal{T}\in\mathcal{L}(\texttt{S}^\ell)\,\biggr|\,\mathcal{T}\subseteq\mathcal{B}\text{  and  } \mathcal{T} \text{ Galois invariant }\biggr\}.$$
    \item The \emph{Galois interior} of $\mathcal{B},$ denoted $\overset{\circ
}{\mathcal{B}},$ is the greatest $\texttt{S}$-linear subcode of
$\mathcal{B},$ which is Galois invariant,
$$\overset{\circ
}{\mathcal{B}}:=
\bigvee\biggl\{\mathcal{T}\in\mathcal{L}(\texttt{S}^\ell)\,\biggr|\,\mathcal{T}\supseteq\mathcal{B}\text{
and  } \mathcal{T} \text{ Galois invariant }\biggr\}.$$
    \end{enumerate}

\end{Definition}

\noindent A map $\texttt{J}_G :\mathcal{L}(\texttt{S}^\ell) \rightarrow
\mathcal{L}(\texttt{S}^\ell)$ is called  a \emph{Galois operator}
if $\texttt{J}_G$ is an morphism of lattices such that
\begin{enumerate}
  \item $\texttt{J}_G(\texttt{J}_G(\mathcal{B}))=\texttt{J}_G(\mathcal{B})$
and
\item for all $\mathcal{B}$ in $\mathcal{L}(\texttt{S}^\ell)$ the code
$\texttt{J}_G(\mathcal{B})$ is Galois invariant.
\end{enumerate}
 The Galois
closure and Galois interior are indeed Galois operators and
$\widetilde{\overset{\circ }{\mathcal{B}}}=\overset{\circ
}{\mathcal{B}}$, $ \overset{\circ
}{\widetilde{\mathcal{B}}}=\widetilde{\mathcal{B}}.$ From
{Definition}~\ref{defi}, it follows that $\mathcal{B}$ is Galois
invariant if and only if $\widetilde{\mathcal{B}}=\overset{\circ
}{\mathcal{B}}$.

\begin{Proposition}\label{oci} If $\mathcal{B}$ is a linear code over
$\texttt{S}$ then $\overset{\circ }{\left(
\mathcal{B}^{\perp}\right)}=\left(
\widetilde{\mathcal{B}}\right)^{\perp}.$
\end{Proposition}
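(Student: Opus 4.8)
The plan is to use the duality relations already established in Lemma~\ref{in-du} together with the lattice behaviour of $\perp$ on $\mathcal{L}(\texttt{S}^\ell)$. First I would recall two standard facts about the Euclidean dual on a finite chain ring: it is order-reversing, i.e. $\mathcal{B}_1\subseteq\mathcal{B}_2$ implies $\mathcal{B}_2^{\perp}\subseteq\mathcal{B}_1^{\perp}$, and it is an involution by Lemma~\ref{dual-dual} (taking $m$ odd, or using $\perp=\perp_{\varphi'}$ after absorbing $\sigma^{m/2}$). Consequently $\perp$ exchanges arbitrary intersections and sums: $\left(\bigcap_i\mathcal{T}_i\right)^{\perp}=\bigvee_i\mathcal{T}_i^{\perp}$ and $\left(\bigvee_i\mathcal{T}_i\right)^{\perp}=\bigcap_i\mathcal{T}_i^{\perp}$, the latter being immediate from the definition of $\perp$ and the first following by applying the second to the $\mathcal{T}_i^{\perp}$ and then dualising once more.

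Next I would apply this to the defining intersection of the Galois closure. By Definition~\ref{defi},
\[
\widetilde{\mathcal{B}}=\bigcap\biggl\{\mathcal{T}\in\mathcal{L}(\texttt{S}^\ell)\,\biggr|\,\mathcal{B}\subseteq\mathcal{T}\text{ and }\mathcal{T}\text{ Galois invariant}\biggr\},
\]
so dualising and using that $\perp$ turns the intersection into a join gives
\[
\bigl(\widetilde{\mathcal{B}}\bigr)^{\perp}=\bigvee\biggl\{\mathcal{T}^{\perp}\,\biggr|\,\mathcal{B}\subseteq\mathcal{T}\text{ and }\mathcal{T}\text{ Galois invariant}\biggr\}.
\]
Now I would reparametrise the index set: as $\mathcal{T}$ ranges over the Galois-invariant codes containing $\mathcal{B}$, the code $\mathcal{U}:=\mathcal{T}^{\perp}$ ranges exactly over the Galois-invariant codes contained in $\mathcal{B}^{\perp}$. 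Indeed, $\mathcal{B}\subseteq\mathcal{T}\iff\mathcal{T}^{\perp}\subseteq\mathcal{B}^{\perp}$ because $\perp$ is order-reversing and involutive, and $\mathcal{T}$ is Galois invariant iff $\mathcal{T}^{\perp}$ is, by Lemma~\ref{in-du}(2). Hence the join above is exactly $\bigvee\{\mathcal{U}\,|\,\mathcal{U}\subseteq\mathcal{B}^{\perp},\ \mathcal{U}\text{ Galois invariant}\}$, which is the Galois interior $\overset{\circ}{\left(\mathcal{B}^{\perp}\right)}$ by Definition~\ref{defi}. This establishes the identity.

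The only point requiring any care is the pair of distributivity laws for $\perp$ over $\cap$ and $\vee$ on $\mathcal{L}(\texttt{S}^\ell)$; the law for $\vee$ is essentially formal from the definition of the dual (a vector is orthogonal to a sum of subcodes iff it is orthogonal to each), and the law for $\cap$ then drops out by the double-dual property of Lemma~\ref{dual-dual}. I do not expect a genuine obstacle here: everything reduces to the involutivity and order-reversal of $\perp$ plus Lemma~\ref{in-du}(2), all of which are available. If one prefers to avoid the infinite lattice laws altogether, an alternative is the two-inclusion argument: $\overset{\circ}{(\mathcal{B}^{\perp})}$ is Galois invariant (by its definition) and contained in $\mathcal{B}^{\perp}$, so its dual is Galois invariant and contains $\mathcal{B}$, whence $\widetilde{\mathcal{B}}\subseteq\bigl(\overset{\circ}{(\mathcal{B}^{\perp})}\bigr)^{\perp}$, i.e. $\overset{\circ}{(\mathcal{B}^{\perp})}\subseteq(\widetilde{\mathcal{B}})^{\perp}$; the reverse inclusion is symmetric, applying the same reasoning to $(\widetilde{\mathcal{B}})^{\perp}$, which is Galois invariant and contained in $\mathcal{B}^{\perp}$, hence contained in the largest such code $\overset{\circ}{(\mathcal{B}^{\perp})}$. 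I would present this second, inclusion-based version in the paper as it is shortest and uses only Lemma~\ref{in-du} and Lemma~\ref{dual-dual}.
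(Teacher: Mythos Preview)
Your proposal is correct, and the two-inclusion argument you say you would actually present is essentially the paper's own proof: the paper shows one inclusion from the fact that the dual of a Galois-invariant code is Galois invariant (Lemma~\ref{in-du}) together with order-reversal of $\perp$, and the reverse inclusion by the extremal characterisations of closure and interior, exactly as you outline. The only cosmetic difference is that the paper phrases the argument for $\mathcal{B}$ rather than $\mathcal{B}^{\perp}$, effectively proving the equivalent identity $(\overset{\circ}{\mathcal{B}})^{\perp}=\widetilde{\mathcal{B}^{\perp}}$; your lattice-duality reparametrisation is a pleasant alternative packaging but not a genuinely different route.
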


\begin{proof}
It is clear that $\overset{\circ }{
\mathcal{B}}\subseteq\mathcal{B}$ and, by duality,
$\left(\overset{\circ }{
\mathcal{B}}\right)^{\perp}\subseteq\mathcal{B}^{\perp}.$ Now
$\overset{\circ }{ \mathcal{B}}$ is Galois invariant therefore
$\left(\overset{\circ }{ \mathcal{B}}\right)^{\perp}$ is Galois
invariant by  {Remark}\,\ref{in-du} and it contains
$\mathcal{B}^{\perp}.$ Note that $
\widetilde{\mathcal{B}^{\perp}}$ is the smallest Galois invariant
linear code containing $\mathcal{B}^{\perp}$ hence
$\widetilde{\left(\mathcal{B}^{\perp}\right)}\subseteq\left(\overset{\circ
}{ \mathcal{B}}\right)^{\perp}.$
 Since $\mathcal{B}^{\perp}\subseteq\widetilde{\mathcal{B}^{\perp}}$, again by duality we have that $\left(\widetilde{\mathcal{B}^{\perp}}\right)^{\perp}\subseteq\mathcal{B}.$
 Now the code $\left(\widetilde{\mathcal{B}^{\perp}}\right)^{\perp}$ is
Galois invariant and is contained in $\mathcal{B}$, since the largest
code that is Galois invariant and contained in $\mathcal{B}$ is
 $\overset{\circ }{ \mathcal{B}}$, it follows that
$\left(\widetilde{\mathcal{B}^{\perp}}\right)^{\perp}\subseteq\overset{\circ
}{\mathcal{B}}$, and both inclusions give the equality.
\end{proof}

Let $\{\alpha_0,\alpha_1,\cdots,\alpha_{m-1}\}$ be a free
$\texttt{R}$-basis of $\texttt{S}$ and
$\{\alpha_0^*,\alpha_1^*,\cdots,\alpha_{m-1}^*\}$ its trace-dual
basis. We define the $i$-th projection as
\begin{align} \begin{array}{cccc}
  \texttt{Pr}_i: & \texttt{S}^{\ell} & \rightarrow & \texttt{R}^{\ell} \\
    & \emph{c} & \mapsto &
    \texttt{Tr}_{\texttt{R}}^{\texttt{S}}(\alpha^*_i\emph{c}).
\end{array}\quad i=0,\ldots,m-1.
\end{align}
Since
$\emph{c}_j=\sum\limits_{i=0}^{m-1}\texttt{Tr}_\texttt{R}^\texttt{S}(\alpha^*_i\emph{c}_j)\alpha_i,$
for all $\emph{c}_j\in\texttt{S},$ and $\mathcal{B}$ is linear
over $\texttt{S},$ it follows that
$\texttt{Pr}_i(\mathcal{B})=\texttt{Tr}_{\texttt{R}}^{\texttt{S}}(\mathcal{B}).$

\begin{Lemma}\label{proj} Let $\mathcal{B}$ be a linear code over $\texttt{S}$. Then
$\mathcal{B}\subseteq\texttt{Ext}_\texttt{S}(\texttt{Tr}_{\texttt{R}}^{\texttt{S}}(\mathcal{B}))$
and
$\texttt{Res}_\texttt{R}(\mathcal{B})\subseteq\texttt{Tr}_\texttt{R}^\texttt{S}(\mathcal{B}).$
\end{Lemma}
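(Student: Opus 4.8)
The statement has two parts, and the key observation is that they are essentially dual to each other once we set up the right language. Let me write $\underline{\alpha}=\{\alpha_0,\dots,\alpha_{m-1}\}$ for a free $\texttt{R}$-basis of $\texttt{S}$ and $\{\alpha_0^*,\dots,\alpha_{m-1}^*\}$ for its trace-dual basis. The plan is to prove the first inclusion $\mathcal{B}\subseteq\texttt{Ext}_\texttt{S}(\texttt{Tr}_\texttt{R}^\texttt{S}(\mathcal{B}))$ directly using the projection maps $\texttt{Pr}_i$, and then derive the second inclusion $\texttt{Res}_\texttt{R}(\mathcal{B})\subseteq\texttt{Tr}_\texttt{R}^\texttt{S}(\mathcal{B})$ from it (or, alternatively, prove the second inclusion directly, since it is almost immediate).

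First I would prove $\mathcal{B}\subseteq\texttt{Ext}_\texttt{S}(\texttt{Tr}_\texttt{R}^\texttt{S}(\mathcal{B}))$. Take any codeword $\textbf{c}\in\mathcal{B}$. Writing each coordinate in the basis $\underline{\alpha}$, we have the expansion $\textbf{c}=\sum_{i=0}^{m-1}\texttt{Tr}_\texttt{R}^\texttt{S}(\alpha_i^*\textbf{c})\,\alpha_i=\sum_{i=0}^{m-1}\texttt{Pr}_i(\textbf{c})\,\alpha_i$, which is exactly the identity recorded just before the lemma (applied coordinatewise to each $c_j\in\texttt{S}$). Now, since $\mathcal{B}$ is $\texttt{S}$-linear, $\alpha_i^*\textbf{c}\in\mathcal{B}$ for every $i$, so $\texttt{Pr}_i(\textbf{c})=\texttt{Tr}_\texttt{R}^\texttt{S}(\alpha_i^*\textbf{c})\in\texttt{Tr}_\texttt{R}^\texttt{S}(\mathcal{B})$. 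Thus $\textbf{c}$ is written as an $\texttt{S}$-linear combination ($\texttt{R}$-linear, in fact, but the $\alpha_i$ live in $\texttt{S}$) of elements of $\texttt{Tr}_\texttt{R}^\texttt{S}(\mathcal{B})$, hence $\textbf{c}\in\texttt{Ext}_\texttt{S}(\texttt{Tr}_\texttt{R}^\texttt{S}(\mathcal{B}))$. This gives the first inclusion.

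For the second inclusion, I would argue directly. Let $\textbf{c}\in\texttt{Res}_\texttt{R}(\mathcal{B})=\mathcal{B}\cap\texttt{R}^\ell$. Since $\texttt{S}|\texttt{R}$ is a Galois extension, the trace $\texttt{Tr}_\texttt{R}^\texttt{S}$ is surjective onto $\texttt{R}$; more concretely, fix $\beta\in\texttt{S}$ with $\texttt{Tr}_\texttt{R}^\texttt{S}(\beta)=1_\texttt{R}$ (such $\beta$ exists because $\texttt{Tr}_\texttt{R}^\texttt{S}$ is a free generator of $\texttt{Hom}_\texttt{R}(\texttt{S},\texttt{R})$, as stated in Section~\ref{sec:pre}, so in particular it is onto). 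Then $\beta\textbf{c}\in\mathcal{B}$ by $\texttt{S}$-linearity, and since the entries of $\textbf{c}$ lie in $\texttt{R}$ and are therefore fixed by every $\sigma\in G$, we get $\texttt{Tr}_\texttt{R}^\texttt{S}(\beta\textbf{c})=\texttt{Tr}_\texttt{R}^\texttt{S}(\beta)\,\textbf{c}=\textbf{c}$ coordinatewise. Hence $\textbf{c}=\texttt{Tr}_\texttt{R}^\texttt{S}(\beta\textbf{c})\in\texttt{Tr}_\texttt{R}^\texttt{S}(\mathcal{B})$, which is the desired inclusion. (Alternatively, one can deduce it from the first inclusion: intersecting $\mathcal{B}\subseteq\texttt{Ext}_\texttt{S}(\texttt{Tr}_\texttt{R}^\texttt{S}(\mathcal{B}))$ with $\texttt{R}^\ell$ and using Proposition~\ref{ope}, which says $\texttt{Res}_\texttt{R}(\texttt{Ext}(\mathcal{C}))=\mathcal{C}$ for $\texttt{R}$-linear $\mathcal{C}$, gives $\texttt{Res}_\texttt{R}(\mathcal{B})\subseteq\texttt{Res}_\texttt{R}(\texttt{Ext}_\texttt{S}(\texttt{Tr}_\texttt{R}^\texttt{S}(\mathcal{B})))=\texttt{Tr}_\texttt{R}^\texttt{S}(\mathcal{B})$; this is cleaner but relies on the earlier proposition.)

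There is no real obstacle here — both parts are short computations once the trace-dual basis expansion and the surjectivity of the trace are in hand. The only point requiring mild care is making sure that "$\texttt{Ext}_\texttt{S}$" is read correctly: the $\alpha_i$ are elements of $\texttt{S}$, so the combination $\sum_i \texttt{Pr}_i(\textbf{c})\alpha_i$ with $\texttt{Pr}_i(\textbf{c})\in\texttt{R}^\ell\subseteq\texttt{S}^\ell$ is legitimately an $\texttt{S}$-linear combination of vectors in $\texttt{Tr}_\texttt{R}^\texttt{S}(\mathcal{B})\subseteq\texttt{R}^\ell$, so the claimed membership in $\texttt{Ext}_\texttt{S}(\texttt{Tr}_\texttt{R}^\texttt{S}(\mathcal{B}))$ holds. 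The main thing worth emphasizing in the write-up is the use of $\texttt{S}$-linearity of $\mathcal{B}$ to know that $\alpha_i^*\textbf{c}$ and $\beta\textbf{c}$ are again codewords.
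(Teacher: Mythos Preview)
Your proof is correct and follows the approach the paper itself indicates: the paper states the lemma without an explicit proof, but the sentence immediately preceding it records the trace-dual expansion $c_j=\sum_{i=0}^{m-1}\texttt{Tr}_\texttt{R}^\texttt{S}(\alpha_i^*c_j)\alpha_i$ and the identity $\texttt{Pr}_i(\mathcal{B})=\texttt{Tr}_\texttt{R}^\texttt{S}(\mathcal{B})$, which is exactly the mechanism you use for the first inclusion; the second inclusion via a $\beta$ with $\texttt{Tr}_\texttt{R}^\texttt{S}(\beta)=1$ (or via Proposition~\ref{ope}) is the standard complement and is likewise what the paper leaves implicit.
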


 The following lemma relates the Galois closure and
Galois interior with  the constructions of the trace code, the restriction
code and the extension code.

\begin{Lemma}\label{int}
Let $\mathcal{B}$ be a linear code over $\texttt{S}.$ Then
$\overset{\circ
}{\mathcal{B}}=\texttt{Ext}(\texttt{Res}_\texttt{R}(\mathcal{B}))=\underset{\sigma
\in G}{\bigcap }\sigma (\mathcal{B}).$
\end{Lemma}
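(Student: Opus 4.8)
The plan is to prove the chain of equalities $\overset{\circ}{\mathcal{B}}=\texttt{Ext}(\texttt{Res}_\texttt{R}(\mathcal{B}))=\bigcap_{\sigma\in G}\sigma(\mathcal{B})$ by establishing the two equalities separately, using the characterizations of Galois invariance already developed (Theorem~\ref{thm0} and Proposition~\ref{ope}) together with the extremal definition of $\overset{\circ}{\mathcal{B}}$ in Definition~\ref{defi}. First I would show $\texttt{Ext}(\texttt{Res}_\texttt{R}(\mathcal{B}))\subseteq\overset{\circ}{\mathcal{B}}$. Since $\texttt{Res}_\texttt{R}(\mathcal{B})=\mathcal{B}\cap\texttt{R}^\ell\subseteq\mathcal{B}$, taking $\texttt{S}$-linear spans gives $\texttt{Ext}(\texttt{Res}_\texttt{R}(\mathcal{B}))\subseteq\mathcal{B}$; and $\texttt{Ext}(\mathcal{C})$ is Galois invariant for any $\texttt{R}$-linear code $\mathcal{C}$ (this was used in the proof of Proposition~\ref{ope}, and follows since a generator matrix of $\texttt{Ext}(\mathcal{C})$ can be taken with entries in $\texttt{R}$, hence fixed by $G$). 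So $\texttt{Ext}(\texttt{Res}_\texttt{R}(\mathcal{B}))$ is a Galois invariant subcode of $\mathcal{B}$, and since $\overset{\circ}{\mathcal{B}}$ is by definition the largest such subcode we get the inclusion.

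Next I would show the reverse inclusion $\overset{\circ}{\mathcal{B}}\subseteq\texttt{Ext}(\texttt{Res}_\texttt{R}(\mathcal{B}))$. The code $\overset{\circ}{\mathcal{B}}$ is Galois invariant, so by Theorem~\ref{thm0} its generator matrix in row standard form $\texttt{RSF}(\overset{\circ}{\mathcal{B}})$ has all entries in $\texttt{R}$; in particular its rows lie in $\mathcal{B}\cap\texttt{R}^\ell=\texttt{Res}_\texttt{R}(\overset{\circ}{\mathcal{B}})\subseteq\texttt{Res}_\texttt{R}(\mathcal{B})$. Therefore $\overset{\circ}{\mathcal{B}}=\texttt{row}(\texttt{RSF}(\overset{\circ}{\mathcal{B}}))\subseteq\texttt{Ext}(\texttt{Res}_\texttt{R}(\mathcal{B}))$, which together with the first part gives the first equality. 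Equivalently, one can phrase this as: for a Galois invariant code, $\texttt{Ext}(\texttt{Res}_\texttt{R}(\overset{\circ}{\mathcal{B}}))=\overset{\circ}{\mathcal{B}}$, which is exactly the content of Proposition~\ref{ope} applied to $\mathcal{C}=\texttt{Res}_\texttt{R}(\overset{\circ}{\mathcal{B}})$.

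For the second equality $\overset{\circ}{\mathcal{B}}=\bigcap_{\sigma\in G}\sigma(\mathcal{B})$, write $\mathcal{D}:=\bigcap_{\sigma\in G}\sigma(\mathcal{B})$. Since $G$ is a group, for any $\tau\in G$ we have $\tau(\mathcal{D})=\bigcap_{\sigma\in G}\tau\sigma(\mathcal{B})=\bigcap_{\sigma'\in G}\sigma'(\mathcal{B})=\mathcal{D}$, so $\mathcal{D}$ is Galois invariant; and taking $\sigma=\texttt{Id}$ in the intersection shows $\mathcal{D}\subseteq\mathcal{B}$, so $\mathcal{D}$ is a Galois invariant subcode of $\mathcal{B}$ and hence $\mathcal{D}\subseteq\overset{\circ}{\mathcal{B}}$. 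Conversely, $\overset{\circ}{\mathcal{B}}\subseteq\mathcal{B}$ and $\overset{\circ}{\mathcal{B}}$ is Galois invariant, so $\overset{\circ}{\mathcal{B}}=\sigma(\overset{\circ}{\mathcal{B}})\subseteq\sigma(\mathcal{B})$ for every $\sigma\in G$, whence $\overset{\circ}{\mathcal{B}}\subseteq\bigcap_{\sigma\in G}\sigma(\mathcal{B})=\mathcal{D}$. This yields $\mathcal{D}=\overset{\circ}{\mathcal{B}}$.

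The only genuinely delicate point is the step invoking Theorem~\ref{thm0} in the second paragraph: one must be sure that "$\texttt{RSF}(\overset{\circ}{\mathcal{B}})$ has entries in $\texttt{R}$" really does force the generating rows into $\texttt{Res}_\texttt{R}(\mathcal{B})$ rather than merely into $\texttt{R}^\ell$ — but this is immediate since those rows are codewords of $\overset{\circ}{\mathcal{B}}\subseteq\mathcal{B}$ and simultaneously lie in $\texttt{R}^\ell$. Everything else is formal manipulation of the lattice structure and the group action. In fact a cleaner route avoiding row standard forms altogether is to combine the already-established identity $\texttt{Ext}(\texttt{Res}_\texttt{R}(\mathcal{C}))=\mathcal{C}$ for Galois invariant $\mathcal{C}$ (Proposition~\ref{ope}) with the minimality/maximality clauses of Definition~\ref{defi}; I would present the argument in whichever of these two equivalent forms is shortest.
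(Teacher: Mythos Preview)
Your argument is correct and, for the first equality $\overset{\circ}{\mathcal{B}}=\texttt{Ext}(\texttt{Res}_\texttt{R}(\mathcal{B}))$, essentially identical to the paper's: both use that $\texttt{Ext}(\texttt{Res}_\texttt{R}(\mathcal{B}))$ is a Galois invariant subcode of $\mathcal{B}$ for one inclusion, and that a Galois invariant code $\mathcal{B}'$ satisfies $\texttt{Ext}(\texttt{Res}_\texttt{R}(\mathcal{B}'))=\mathcal{B}'$ (applied to $\mathcal{B}'=\overset{\circ}{\mathcal{B}}$) for the other. One small inaccuracy: that last identity is not literally Proposition~\ref{ope} (which gives $\texttt{Res}_\texttt{R}\circ\texttt{Ext}=\mathrm{Id}$ on $\texttt{R}$-codes, the opposite composition); the paper cites it as \cite[Theorem~1]{MNR13}. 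Your primary route via Theorem~\ref{thm0} is correct and self-contained, so this does not affect the validity.

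For the second equality $\overset{\circ}{\mathcal{B}}=\bigcap_{\sigma\in G}\sigma(\mathcal{B})$ the paper simply invokes \cite[Corollary~1]{MNR13} as a black box, whereas you supply the short direct argument (the intersection is $G$-stable because $G$ is a group, and both inclusions follow from the extremal definition of $\overset{\circ}{\mathcal{B}}$). Your version is more self-contained; the paper's is terser but relies on the external reference.
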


\begin{proof} By {Definition}\,\ref{defi} and {Lemma}\,\ref{proj} we have
$\texttt{Ext}(\texttt{Res}_\texttt{R}(\mathcal{B}))\subseteq\overset{\circ
}{\mathcal{B}}\subseteq\mathcal{B}.$ On the other hand, the linear
code $\overset{\circ }{\mathcal{B}}$ over $\texttt{S}$  is Galois
invariant therefore we have
$\texttt{Ext}(\texttt{Res}_\texttt{R}(\overset{\circ
}{\mathcal{B}}))=\overset{\circ }{\mathcal{B}}$ (see \cite[Theorem 1]{MNR13}). Since
$\overset{\circ }{\mathcal{B}}\subseteq\mathcal{B}$ we get that
$\overset{\circ
}{\mathcal{B}}\subseteq\texttt{Ext}(\texttt{Res}_\texttt{R}(\mathcal{B})).$
 Thus we have
$\texttt{Ext}(\texttt{Res}_\texttt{R}(\mathcal{B}))=\underset{\sigma
\in G}{\bigcap }\sigma (\mathcal{B})$ (see \cite[Corollary 1]{MNR13}).

\end{proof}

\begin{Proposition}\label{clo} If $\mathcal{B}$ be a linear code over $\texttt{S}$ then
$\widetilde{\mathcal{B}}=\texttt{Ext}(\texttt{Tr}_\texttt{R}^\texttt{S}(\mathcal{B}))=\bigvee_{\sigma\in
G}\sigma(\mathcal{B}).$
\end{Proposition}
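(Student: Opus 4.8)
The plan is to prove the two equalities $\widetilde{\mathcal{B}}=\texttt{Ext}(\texttt{Tr}_\texttt{R}^\texttt{S}(\mathcal{B}))$ and $\widetilde{\mathcal{B}}=\bigvee_{\sigma\in G}\sigma(\mathcal{B})$ by exploiting the duality between the Galois closure and the Galois interior established in Proposition~\ref{oci}, together with the description of the interior in Lemma~\ref{int}. The key observation is that Proposition~\ref{oci} reads $\overset{\circ}{(\mathcal{C}^{\perp})}=(\widetilde{\mathcal{C}})^{\perp}$ for every $\texttt{S}$-linear code $\mathcal{C}$; applying it to $\mathcal{C}=\mathcal{B}^{\perp}$ and using $(\mathcal{B}^{\perp})^{\perp}=\mathcal{B}$ gives $\overset{\circ}{\mathcal{B}}=(\widetilde{\mathcal{B}^{\perp}})^{\perp}$, equivalently $\widetilde{\mathcal{B}}=\left(\overset{\circ}{(\mathcal{B}^{\perp})}\right)^{\perp}$. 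So everything reduces to dualizing the already-proven statements about $\overset{\circ}{\mathcal{B}}$.

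First I would establish $\widetilde{\mathcal{B}}=\bigvee_{\sigma\in G}\sigma(\mathcal{B})$. By Lemma~\ref{int} applied to $\mathcal{B}^{\perp}$ we have $\overset{\circ}{(\mathcal{B}^{\perp})}=\bigcap_{\sigma\in G}\sigma(\mathcal{B}^{\perp})$, and by Lemma~\ref{in-du}(1) each $\sigma(\mathcal{B}^{\perp})=\sigma(\mathcal{B})^{\perp}$, so $\overset{\circ}{(\mathcal{B}^{\perp})}=\bigcap_{\sigma\in G}\sigma(\mathcal{B})^{\perp}$. Dualizing and using the standard identity $\left(\bigcap_i \mathcal{D}_i\right)^{\perp}=\bigvee_i \mathcal{D}_i^{\perp}$ for linear codes over a finite chain ring (which follows from Lemma~\ref{dual-dual}, since $\perp\!\perp$ is the identity and intersections dualize to sums), we obtain
\[
\widetilde{\mathcal{B}}=\left(\overset{\circ}{(\mathcal{B}^{\perp})}\right)^{\perp}=\left(\bigcap_{\sigma\in G}\sigma(\mathcal{B})^{\perp}\right)^{\perp}=\bigvee_{\sigma\in G}\left(\sigma(\mathcal{B})^{\perp}\right)^{\perp}=\bigvee_{\sigma\in G}\sigma(\mathcal{B}).
\]
Alternatively, and perhaps more cleanly, one can argue directly: $\bigvee_{\sigma\in G}\sigma(\mathcal{B})$ is visibly Galois invariant (the $G$-action permutes the summands) and contains $\mathcal{B}$; and any Galois invariant code containing $\mathcal{B}$ contains every $\sigma(\mathcal{B})$, hence their join, so it is the \emph{smallest} such code, which is $\widetilde{\mathcal{B}}$ by Definition~\ref{defi}.

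Next, for $\widetilde{\mathcal{B}}=\texttt{Ext}(\texttt{Tr}_\texttt{R}^\texttt{S}(\mathcal{B}))$, I would apply Lemma~\ref{int} to $\mathcal{B}^{\perp}$ to get $\overset{\circ}{(\mathcal{B}^{\perp})}=\texttt{Ext}(\texttt{Res}_\texttt{R}(\mathcal{B}^{\perp}))$, then dualize, and invoke the Delsarte Theorem~\ref{Delsarte} together with $\texttt{Ext}(\mathcal{C}^{\perp})=\texttt{Ext}(\mathcal{C})^{\perp}$ from Proposition~\ref{ope}. Concretely: $\widetilde{\mathcal{B}}=\left(\texttt{Ext}(\texttt{Res}_\texttt{R}(\mathcal{B}^{\perp}))\right)^{\perp}=\texttt{Ext}\!\left(\texttt{Res}_\texttt{R}(\mathcal{B}^{\perp})^{\perp}\right)$, and by Theorem~\ref{Delsarte} (using $\varphi'=\varphi=\perp$ on a Galois invariant target, or the Hermitian reduction noted after Definition~\ref{G-inv}) we have $\texttt{Res}_\texttt{R}(\mathcal{B}^{\perp})^{\perp}=\texttt{Tr}_\texttt{R}^\texttt{S}\!\left((\mathcal{B}^{\perp})^{\perp_{\varphi'}}\right)$. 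Since $(\mathcal{B}^{\perp})^{\perp_{\varphi'}}$ equals $\mathcal{B}$ when $\varphi'=\varphi$, and in the Hermitian case differs from $\mathcal{B}$ only by $\sigma^{m/2}$, which does not change the trace code because $\texttt{Tr}_\texttt{R}^\texttt{S}\circ\sigma^{m/2}=\texttt{Tr}_\texttt{R}^\texttt{S}$, we conclude $\widetilde{\mathcal{B}}=\texttt{Ext}(\texttt{Tr}_\texttt{R}^\texttt{S}(\mathcal{B}))$.

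The main obstacle I anticipate is purely bookkeeping: making sure the $\varphi$ versus $\varphi'$ distinction does not leak a spurious $\sigma^{m/2}$ into the final answer when $m$ is even. The remark following Definition~\ref{G-inv} already flags that one may restrict attention to the Euclidean product on Galois invariant codes, and the closure is Galois invariant by construction, so the cleanest route is to phrase the whole duality argument with $\perp=\perp_{\varphi}$ throughout and only at the very end verify — using $\texttt{Tr}_\texttt{R}^\texttt{S}\circ\sigma^{m/2}=\texttt{Tr}_\texttt{R}^\texttt{S}$ and Lemma~\ref{in-du} — that nothing was lost. A secondary point to check carefully is the identity $\left(\bigcap_\sigma \mathcal{D}_\sigma\right)^{\perp}=\bigvee_\sigma \mathcal{D}_\sigma^{\perp}$ over $\texttt{S}$; this is routine from Lemma~\ref{dual-dual} but should be stated, or else one should simply use the direct minimality argument in the second paragraph, which sidesteps it entirely.
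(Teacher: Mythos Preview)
Your proposal is correct and follows essentially the same route as the paper: both start from $\widetilde{\mathcal{B}}=\bigl(\overset{\circ}{(\mathcal{B}^{\perp})}\bigr)^{\perp}$ via Proposition~\ref{oci}, apply Lemma~\ref{int} to $\mathcal{B}^{\perp}$, and then invoke Delsarte together with $\texttt{Ext}(\mathcal{C}^{\perp})=\texttt{Ext}(\mathcal{C})^{\perp}$ to obtain $\widetilde{\mathcal{B}}=\texttt{Ext}(\texttt{Tr}_\texttt{R}^\texttt{S}(\mathcal{B}))$, while the join identity is handled by the direct minimality argument you give as your ``alternative''. Your extra care with the $\varphi$ versus $\varphi'$ bookkeeping is sound and slightly more explicit than the paper, which simply restricts to the Euclidean product after Definition~\ref{G-inv}.
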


\begin{proof}
$$\begin{array}{rcl}
  \widetilde{\mathcal{B}} & = & ((\widetilde{\mathcal{B}})^{\perp})^{\perp}, \hbox{  by   \cite[Theorem 3.10 (iii)]{NS00}} \\
  &=& \left(\overset{\circ }{\left(\mathcal{B}^{\perp}\right)}\right)^{\perp},\hbox{  by    {Proposition}\,\ref{oci}},\\
 &=& \texttt{Ext}(\texttt{Res}_\texttt{R}(\left(\mathcal{B}^{\perp}\right))^{\perp}, \hbox{  by    {Lemma}\,\ref{int}} \\
   &=& \texttt{Ext}\left(\texttt{Res}_\texttt{R}\left(\mathcal{B}\right)^{\perp}\right)^{\perp},\hbox{ by    {Proposition}}\,\ref{oci},\\
   &=& \texttt{Ext}\left(\texttt{Tr}_\texttt{R}^\texttt{S}\left(\mathcal{B}\right)^{\perp}\right)^{\perp},\hbox{  by Theorem 1}.
  \end{array}
$$
Note that
$\texttt{Ext}\left(\texttt{Tr}_\texttt{R}^\texttt{S}\left(\mathcal{B}\right)^\perp\right)=\texttt{Ext}\left(\texttt{Tr}_\texttt{R}^\texttt{S}\left(\mathcal{B}\right)\right)^\perp$, therefore
it follows that
$\widetilde{\mathcal{B}}=\left(\texttt{Ext}\left(\texttt{Tr}_\texttt{R}^\texttt{S}(\mathcal{B})\right)^{\perp}\right)^{\perp}.$
Again by \cite[Theorem 3.10(iii)]{NS00} we also have that
$\widetilde{\mathcal{B}}=
\texttt{Ext}\left(\texttt{Tr}_\texttt{R}^\texttt{S}(\mathcal{B})\right).$
Since $\mathcal{B}\subseteq\bigvee_{\sigma\in
G}\sigma(\mathcal{B})$ is Galois invariant then  we have
$\widetilde{\mathcal{B}}\subseteq\bigvee_{\sigma\in
G}\sigma(\mathcal{B}).$ Finally
$\sigma(\mathcal{B})\subseteq\widetilde{\mathcal{B}}$ for all
$\sigma\in G$, therefore  $\bigvee_{\sigma\in
G}\sigma(\mathcal{B})\subseteq\widetilde{\mathcal{B}}.$
\end{proof}

\begin{Remark}\label{rem} Note that by  {Lemma}\,\ref{int} and  {Propostion}\,\ref{clo} we get the following fact.
Let $\mathcal{B}$ be an $\texttt{S}$-linear code,
then $\texttt{Res}_\texttt{R}(\overset{\circ
}{\mathcal{B}})=\texttt{Res}_\texttt{R}(\mathcal{B})$ and
$\texttt{Res}_\texttt{R}(\widetilde{\mathcal{B}})=\texttt{Tr}_\texttt{R}^\texttt{S}(\mathcal{B}).$
Thus (by Delsarte's Theorem)
$\texttt{Res}_\texttt{R}(\mathcal{B}^{\perp})=\texttt{Res}_\texttt{R}(\mathcal{B})^{\perp}$
if and only if $\mathcal{B}$ is Galois invariant.
\end{Remark}

\begin{Remark}\label{Stichtenoth1} In the  case that $S,R$ are finite fields  the properties of the Galois closure and Galois interior as well as Proposition~\ref{oci} and Lemma~\ref{int} were stated by Stichtenoth in \cite{stichtenoth}. \end{Remark}

\noindent Note also that we have
$\widetilde{\widetilde{\mathcal{B}}}=\widetilde{\mathcal{B}}$, thus
from  Remark~\ref{rem} it follows that
$$\texttt{Res}_\texttt{R}(\widetilde{\widetilde{\mathcal{B}}})=\texttt{Tr}_\texttt{R}^\texttt{S}(\widetilde{\mathcal{B}})\hbox{ and }
\texttt{Res}_\texttt{R}(\widetilde{\mathcal{B}})=\texttt{Tr}_\texttt{R}^\texttt{S}(\mathcal{B}).$$
Hence
$\texttt{Tr}_\texttt{R}^\texttt{S}(\mathcal{B})=\texttt{Tr}_\texttt{R}^\texttt{S}(\widetilde{\mathcal{B}})$
(see \cite[Proposition 1]{MNR13}). Thus as a corollary we recover
the following result.

\begin{Corollary}[Theorem~2, \cite{MNR13}]\label{ans1}
The $\texttt{S}$-linear code $\mathcal{B}$ is  Galois invariant if
and only if
$\texttt{Tr}_{\texttt{R}}^{\texttt{S}}(\mathcal{B})=\texttt{Res}(\mathcal{B}).$
\end{Corollary}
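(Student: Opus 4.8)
The plan is to derive Corollary~\ref{ans1} directly from the two identities accumulated in Remark~\ref{rem}, namely $\texttt{Res}_\texttt{R}(\overset{\circ}{\mathcal{B}})=\texttt{Res}_\texttt{R}(\mathcal{B})$ and $\texttt{Res}_\texttt{R}(\widetilde{\mathcal{B}})=\texttt{Tr}_\texttt{R}^\texttt{S}(\mathcal{B})$, together with the observation recorded just before the statement that $\mathcal{B}$ is Galois invariant if and only if $\widetilde{\mathcal{B}}=\overset{\circ}{\mathcal{B}}$. The whole argument is a short equivalence chase, so I would present it as a ring of implications rather than grinding any computation.

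For the forward direction, assume $\mathcal{B}$ is Galois invariant. Then by Definition~\ref{defi} (or the remark following the definition of Galois operator) we have $\widetilde{\mathcal{B}}=\mathcal{B}=\overset{\circ}{\mathcal{B}}$. Applying $\texttt{Res}_\texttt{R}$ to the equality $\widetilde{\mathcal{B}}=\overset{\circ}{\mathcal{B}}$ and invoking the two identities of Remark~\ref{rem} gives
$$\texttt{Tr}_\texttt{R}^\texttt{S}(\mathcal{B})=\texttt{Res}_\texttt{R}(\widetilde{\mathcal{B}})=\texttt{Res}_\texttt{R}(\overset{\circ}{\mathcal{B}})=\texttt{Res}_\texttt{R}(\mathcal{B}),$$
which is exactly the desired equality.

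For the converse, suppose $\texttt{Tr}_\texttt{R}^\texttt{S}(\mathcal{B})=\texttt{Res}_\texttt{R}(\mathcal{B})$. By Remark~\ref{rem} the left-hand side equals $\texttt{Res}_\texttt{R}(\widetilde{\mathcal{B}})$ and the right-hand side equals $\texttt{Res}_\texttt{R}(\overset{\circ}{\mathcal{B}})$, so $\texttt{Res}_\texttt{R}(\widetilde{\mathcal{B}})=\texttt{Res}_\texttt{R}(\overset{\circ}{\mathcal{B}})$. Now both $\widetilde{\mathcal{B}}$ and $\overset{\circ}{\mathcal{B}}$ are Galois invariant codes, so by Lemma~\ref{int} and Proposition~\ref{clo} (or directly by Proposition~\ref{ope}, since $\texttt{Ext}$ and $\texttt{Res}_\texttt{R}$ are mutually inverse on Galois invariant codes) a Galois invariant code is recovered from its restriction via $\texttt{Ext}$. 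Applying $\texttt{Ext}$ to $\texttt{Res}_\texttt{R}(\widetilde{\mathcal{B}})=\texttt{Res}_\texttt{R}(\overset{\circ}{\mathcal{B}})$ therefore yields $\widetilde{\mathcal{B}}=\overset{\circ}{\mathcal{B}}$, and hence $\mathcal{B}$ is Galois invariant. The only point that needs a word of care — the main (minor) obstacle — is justifying that $\texttt{Ext}\circ\texttt{Res}_\texttt{R}$ is the identity on Galois invariant codes; this is precisely the content used in the proof of Lemma~\ref{int} (``$\texttt{Ext}(\texttt{Res}_\texttt{R}(\overset{\circ}{\mathcal{B}}))=\overset{\circ}{\mathcal{B}}$'') and is already available, so no new work is required.
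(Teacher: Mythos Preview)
Your proof is correct and follows exactly the line of argument the paper sets up: it uses the two identities from Remark~\ref{rem} together with the characterization that $\mathcal{B}$ is Galois invariant if and only if $\widetilde{\mathcal{B}}=\overset{\circ}{\mathcal{B}}$, and for the converse you correctly invoke $\texttt{Ext}\circ\texttt{Res}_\texttt{R}=\mathrm{id}$ on Galois invariant codes (Lemma~\ref{int}). The paper does not spell out a proof beyond ``Thus as a corollary we recover the following result,'' so your write-up is a faithful elaboration of what is implicit there.
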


 Finally we are in condition  for enunciate a Galois correspondence statement. For any $\mathcal{B}$ in $\large{\mathcal{L}}\left(\texttt{S}^\ell\right),$ we consider $\large{\mathcal{L}}(\mathcal{B})$
 the lattice of $\texttt{S}$-linear subcode of $\mathcal{B}$. Let us define
\[\begin{array}{cccc}
  \texttt{Stab}: & \large{\mathcal{L}}(\mathcal{B}) & \rightarrow & \texttt{Sub}(G) \\
    & \mathcal{T} & \mapsto & \texttt{Stab}(\mathcal{T}),
\end{array}\textrm{ ~~~~~and~~~~~ }\begin{array}{cccc}
 \texttt{Fix}_\mathcal{B} : & \texttt{Sub}(G) & \rightarrow & \large{\mathcal{L}}(\mathcal{B}) \\
    & H & \mapsto &  \underset{\sigma \in H}{\cap }\sigma
    (\mathcal{B}),
\end{array}\]
where $\texttt{Stab}(\mathcal{T})=\left\{\sigma\in G\,\biggr|\,
\sigma(\textbf{c})=\textbf{c},\,\text{ for all
}\textbf{c}\in\mathcal{T}\right\}.$

Let $H$ a subgroup of $G,$ we say that $\mathcal{B}$ is
{$H$-invariant} if
$\texttt{Fix}_\mathcal{B}(H)=\mathcal{B}$. Note that
$\texttt{Fix}_\mathcal{B}(H)$ is an $H$-interior of $\mathcal{B}.$
 From
{Lemma}\,\ref{int} it follows that
$$\texttt{Fix}_\mathcal{B}(H)=\texttt{Ext}(\texttt{Res}_{\texttt{T}}(\mathcal{B})),$$
where $\texttt{T}=\texttt{Fix}_\texttt{S}(H).$ Moreover
$\texttt{Fix}_\mathcal{B}(\texttt{Stab}(\mathcal{B}))=\mathcal{B}\text{
~and~}\texttt{Stab}(\texttt{Fix}_\mathcal{B}(H))=H.$ Therefore we have a
Galois correspondence on $\large{\mathcal{L}}(\mathcal{B})$ as follows.

\begin{Theorem}\label{ans2} For each $\mathcal{B}$ in $\large{\mathcal{L}}\left(\texttt{S}^\ell\right),$
the pair $\left(\texttt{Stab};\texttt{Fix}_\mathcal{B}\right)$ is
a Galois correspondence between $\mathcal{B}$ and $G.$
\end{Theorem}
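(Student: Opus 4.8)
The plan is to verify that the pair $(\texttt{Stab};\texttt{Fix}_\mathcal{B})$ satisfies the two defining identities of a Galois correspondence, namely $\texttt{Fix}_\mathcal{B}(\texttt{Stab}(\mathcal{T}))=\mathcal{T}$ for every $\mathcal{T}\in\large{\mathcal{L}}(\mathcal{B})$ that is $\texttt{S}$-linear and itself of the form $\texttt{Ext}(\texttt{Res}_\texttt{T}(\mathcal{B}))$, together with $\texttt{Stab}(\texttt{Fix}_\mathcal{B}(H))=H$ for every $H\in\texttt{Sub}(G)$, and that both maps are order-reversing. The engine of the proof is the identification, established in the discussion preceding the statement via {Lemma}~\ref{int}, that $\texttt{Fix}_\mathcal{B}(H)=\underset{\sigma\in H}{\cap}\sigma(\mathcal{B})=\texttt{Ext}(\texttt{Res}_\texttt{T}(\mathcal{B}))$ where $\texttt{T}=\texttt{Fix}_\texttt{S}(H)$; this lets us transport questions about subcodes into questions about subrings of $\texttt{S}$, where the classical Galois correspondence for chain rings, {Lemma}~\ref{subG}, is available.

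First I would record monotonicity: if $H_1\subseteq H_2$ then $\underset{\sigma\in H_2}{\cap}\sigma(\mathcal{B})\subseteq\underset{\sigma\in H_1}{\cap}\sigma(\mathcal{B})$, so $\texttt{Fix}_\mathcal{B}$ is order-reversing; and if $\mathcal{T}_1\subseteq\mathcal{T}_2$ then any $\sigma$ fixing $\mathcal{T}_2$ pointwise fixes $\mathcal{T}_1$ pointwise, so $\texttt{Stab}$ is order-reversing. Next, for $\texttt{Stab}(\texttt{Fix}_\mathcal{B}(H))=H$: writing $\texttt{T}=\texttt{Fix}_\texttt{S}(H)$, a code automorphism $\sigma\in G$ fixes $\texttt{Ext}(\texttt{Res}_\texttt{T}(\mathcal{B}))$ pointwise precisely when $\sigma$ fixes a generating set of that code, and since $\texttt{Ext}(\texttt{Res}_\texttt{T}(\mathcal{B}))$ is $\texttt{T}$-generated with entries that (after passing to row standard form over $\texttt{T}$, using {Theorem}~\ref{thm0} applied to the extension $\texttt{S}|\texttt{T}$) lie in $\texttt{T}$, this happens exactly when $\sigma$ fixes $\texttt{T}$ pointwise, i.e. $\sigma\in\texttt{Stab}_G(\texttt{T})=\texttt{Stab}_G(\texttt{Fix}_\texttt{S}(H))$, which equals $H$ by {Lemma}~\ref{subG}. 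Here I must be slightly careful that $\texttt{S}|\texttt{T}$ is again a Galois extension with group $H$ so that {Theorem}~\ref{thm0} and {Lemma}~\ref{int} apply with $\texttt{T}$ in place of $\texttt{R}$; this is exactly the content of {Lemma}~\ref{subG}.

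For the other identity, $\texttt{Fix}_\mathcal{B}(\texttt{Stab}(\mathcal{B}))=\mathcal{B}$ is the base case $\mathcal{T}=\mathcal{B}$, and it follows because $\texttt{Stab}(\mathcal{B})=\texttt{Stab}_G(\texttt{Fix}_\texttt{S}(\texttt{Stab}(\mathcal{B})))$ by {Lemma}~\ref{subG} together with the observation that $\sigma$ fixes $\mathcal{B}$ pointwise iff it fixes the entries of $\texttt{RSF}(\mathcal{B})$, which generate a subring $\texttt{T}$ of $\texttt{S}$ with $\texttt{Stab}_G(\texttt{T})=\texttt{Stab}(\mathcal{B})$, whence $\texttt{Fix}_\mathcal{B}(\texttt{Stab}(\mathcal{B}))=\texttt{Ext}(\texttt{Res}_\texttt{T}(\mathcal{B}))=\mathcal{B}$ since $\mathcal{B}$ has all of $\texttt{RSF}(\mathcal{B})$ over $\texttt{T}$. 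Combining this with the bijective matching $H\leftrightarrow\texttt{Fix}_\texttt{S}(H)$ between $\texttt{Sub}(G)$ and the intermediate Galois rings, the map $H\mapsto\texttt{Fix}_\mathcal{B}(H)$ is injective on $\texttt{Sub}(G)$ and its image is precisely $\{\texttt{Ext}(\texttt{Res}_\texttt{T}(\mathcal{B})):\texttt{R}\subseteq\texttt{T}\subseteq\texttt{S}\}$, on which $\texttt{Stab}$ is the inverse; this is the asserted Galois correspondence. The main obstacle I anticipate is the bookkeeping needed to justify applying the earlier results ({Theorem}~\ref{thm0}, {Lemma}~\ref{int}) over the intermediate ring $\texttt{T}$ rather than over the ground ring $\texttt{R}$, and making precise that "$\sigma$ fixes $\mathcal{B}$ pointwise" is equivalent to "$\sigma$ fixes the entries of $\texttt{RSF}(\mathcal{B})$" — this uses the uniqueness of the row standard form and the fact that $\sigma$ permutes generator matrices by acting entrywise, which is {Lemma}~\ref{in-du}.
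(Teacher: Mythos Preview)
Your overall strategy matches the paper's: the paper's entire argument is the sentence preceding the theorem, which asserts $\texttt{Fix}_\mathcal{B}(H)=\texttt{Ext}(\texttt{Res}_{\texttt{T}}(\mathcal{B}))$ (via Lemma~\ref{int}) and then simply declares $\texttt{Fix}_\mathcal{B}(\texttt{Stab}(\mathcal{B}))=\mathcal{B}$ and $\texttt{Stab}(\texttt{Fix}_\mathcal{B}(H))=H$ without further justification. You are attempting to supply the missing details by transporting to the ring-level Galois correspondence of Lemma~\ref{subG} through the row-standard-form characterisation of Theorem~\ref{thm0}. That is exactly the intended route.

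However, there is a genuine gap in your argument for $\texttt{Stab}(\texttt{Fix}_\mathcal{B}(H))=H$, and it is not merely bookkeeping. You write that ``$\sigma\in G$ fixes $\texttt{Ext}(\texttt{Res}_\texttt{T}(\mathcal{B}))$ pointwise precisely when $\sigma$ fixes a generating set of that code''. This biconditional is false: $\sigma$ is a ring automorphism, not an $\texttt{S}$-linear map, so fixing an $\texttt{S}$-generating set does \emph{not} imply fixing all $\texttt{S}$-linear combinations. Concretely, if $\textbf{r}\in\texttt{T}^\ell$ is a row of the generator matrix and $a\in\texttt{S}\setminus\texttt{T}$, then $a\textbf{r}\in\texttt{Ext}(\texttt{Res}_\texttt{T}(\mathcal{B}))$ but $\sigma(a\textbf{r})=\sigma(a)\textbf{r}\neq a\textbf{r}$ for $\sigma\in H\setminus\{e\}$. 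Thus in general $H\not\subseteq\texttt{Stab}(\texttt{Fix}_\mathcal{B}(H))$. The converse direction also fails: take $\mathcal{B}=\texttt{S}^\ell$, so $\texttt{Fix}_\mathcal{B}(H)=\texttt{S}^\ell$ for every $H$, whence $\texttt{Stab}(\texttt{Fix}_\mathcal{B}(H))=\{e\}$ regardless of $H$; or take $\mathcal{B}=\{\textbf{0}\}$, giving $\texttt{Stab}(\texttt{Fix}_\mathcal{B}(H))=G$.

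This is not a defect peculiar to your write-up: the identity $\texttt{Stab}(\texttt{Fix}_\mathcal{B}(H))=H$ asserted in the paper just before the theorem suffers from the same counterexamples, so the statement as formulated cannot hold for arbitrary $\mathcal{B}$ with $\texttt{Stab}$ defined as the \emph{pointwise} stabiliser. What does survive unconditionally is the trivial identity $\texttt{Fix}_\mathcal{B}(\texttt{Stab}(\mathcal{B}))=\mathcal{B}$ (since every $\sigma\in\texttt{Stab}(\mathcal{B})$ satisfies $\sigma(\mathcal{B})=\mathcal{B}$, the intersection is $\mathcal{B}$) and the monotone Galois-connection inequalities; your argument for those parts is fine, though your treatment of $\texttt{Fix}_\mathcal{B}(\texttt{Stab}(\mathcal{B}))=\mathcal{B}$ is more circuitous than necessary.
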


\section{Rank bounds}

 Let $\mathcal{B}$ be an $\texttt{S}$-linear code and $\{\textbf{c}_i\,|\,
1\leq i\leq k\}$ be the $\texttt{S}$-basis of $\mathcal{B}$  in row standard form, i.e. $\textbf{c}_i:=\texttt{RSF}(\mathcal{B})[i:].$ 
For each
$i=1,2,\ldots, k$, we will denote by $m_i$ the integer such that
$\sigma^{m_i}(\textbf{c}_i)=\textbf{c}_i$ and
$\sigma^{m_i-1}(\textbf{c}_i)\neq\textbf{c}_i.$ The set
$\{m_i\,|\,i=1,2,\cdots, k\}$ is called the \emph{level set} of $\mathcal{B}.$ 

Note that the set
$\left\{\texttt{Tr}_\texttt{R}^\texttt{S}(\alpha_j^*\textbf{c}_i)\;|\;0\leq
j<m\text{  and } 1\leq i\leq k\right\}$ is an
$\texttt{R}$-generating set of
$\texttt{Tr}_\texttt{R}^\texttt{S}(\mathcal{B})$ thus taking into account 
 {Lemma}\,\ref{proj} we have the obvious upper
bounds for the rank of restriction codes and trace codes

\begin{align}\texttt{rank}_{\texttt{R}}\left(\texttt{Res}_\texttt{R}(\mathcal{B})\right)\leq \texttt{rank}_{\texttt{S}}(\mathcal{B})\leq \texttt{rank}_{\texttt{R}}\left(\texttt{Tr}_\texttt{R}^\texttt{S}(\mathcal{B})\right)\leq m\cdot\texttt{rank}_{\texttt{S}}(\mathcal{B}).\end{align}
The inequality
$\texttt{rank}_{\texttt{R}}\left(\texttt{Res}_\texttt{R}(\mathcal{B})\right)\leq
\texttt{rank}_{\texttt{S}}(\mathcal{B})$ in (\ref{rk}) follows
from the fact that an $\texttt{R}$-basis of
$\texttt{Res}_\texttt{R}(\mathcal{B})$ is also
$\texttt{S}$-independent and
$\texttt{rank}_{\texttt{R}}(\mathcal{B})=m\texttt{rank}_{\texttt{S}}(\mathcal{B})$. Note that it is also clear that $\label{rk}\texttt{rank}_{\texttt{S}}\left(\overset{\circ}{\mathcal{B}}\right)=\texttt{rank}_{\texttt{R}}\left(\texttt{Res}_\texttt{R}(\mathcal{B})\right)$ and that $\texttt{rank}_{\texttt{R}}\left(\texttt{Tr}_\texttt{R}^\texttt{S}(\mathcal{B})\right)=\texttt{rank}_\texttt{S}\left(\widetilde{\mathcal{B}}\right)$.
We can sharpen the upper bound  in (\ref{rk}) for the rank of
trace codes as follows (note that it has some resemblances with
Shibuya's lower bound for codes over finite fields in
\cite[Theorem 1]{SMS97}).

\begin{Proposition}\label{btr} 

Let $\mathcal{B}$ be an $\texttt{S}$-linear code,
$\underline{\textbf{B}}:=\{\textbf{c}_i\,|\, 1\leq i \leq k\}$ be
the $\texttt{S}$-basis of $\mathcal{B}$ in row standard form and
$\{m_i\,|\,i=1,2,\cdots, k\}$ its level set then
\begin{equation}\label{inq0}
  \texttt{rank}_{\texttt{R}}\left(\texttt{Tr}_\texttt{R}^\texttt{S}(\mathcal{B})\right)=\texttt{rank}_\texttt{S}\left(\widetilde{\mathcal{B}}\right) \leq \sum\limits_{i=1}^{k}m_i \leq
      m\texttt{rank}_\texttt{S}\left(\mathcal{B}\right)-(m-1)\texttt{rank}_\texttt{S}\left(\overset{\circ}{\mathcal{B}}\right).
   \end{equation}

\end{Proposition}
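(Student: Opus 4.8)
The plan is to establish the two inequalities in \eqref{inq0} separately, using the structure of the row standard form basis $\underline{\textbf{B}}$ and the description of $\widetilde{\mathcal{B}}$ from Proposition~\ref{clo}. The equalities on the far left and far right are already recorded just before the statement, so only the middle two $\leq$ need work.

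For the upper bound $\texttt{rank}_\texttt{S}(\widetilde{\mathcal{B}})\leq\sum_{i=1}^{k}m_i$, I would argue as follows. By Proposition~\ref{clo} we have $\widetilde{\mathcal{B}}=\bigvee_{\sigma\in G}\sigma(\mathcal{B})$, and since $\mathcal{B}=\sum_{i=1}^{k}\texttt{row}(\textbf{c}_i)$ (as an $\texttt{S}$-module, where $\texttt{row}(\textbf{c}_i)$ means the cyclic submodule $\texttt{S}\textbf{c}_i$) we get, using $\sigma(\texttt{S}\textbf{c}_i)=\texttt{S}\sigma(\textbf{c}_i)$ from Lemma~\ref{in-du}, that
$$\widetilde{\mathcal{B}}=\sum_{i=1}^{k}\ \sum_{\sigma\in G}\texttt{S}\,\sigma(\textbf{c}_i)=\sum_{i=1}^{k}\ \sum_{t=0}^{m_i-1}\texttt{S}\,\sigma^{t}(\textbf{c}_i),$$
because $\sigma^{m_i}(\textbf{c}_i)=\textbf{c}_i$ means the orbit of the line $\texttt{S}\textbf{c}_i$ under $G$ has exactly $m_i$ distinct members $\sigma^{0}(\textbf{c}_i),\ldots,\sigma^{m_i-1}(\textbf{c}_i)$ (here one uses that $G$ is cyclic of order $m$ generated by $\sigma$, so $m_i\mid m$). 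This exhibits $\widetilde{\mathcal{B}}$ as generated by $\sum_{i=1}^{k}m_i$ codewords, whence $\texttt{rank}_\texttt{S}(\widetilde{\mathcal{B}})\leq\sum_{i=1}^{k}m_i$. The one point requiring a little care is that a generating set of size $N$ for a module over a finite chain ring does bound its rank by $N$; this follows since the rank is the minimal number of generators (it equals $\texttt{rank}_\texttt{S}$ as defined via bases, and any generating set refines to a basis by the Nakayama-type argument behind \cite[Theorems 4.6--4.7]{DL09}).

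For the right-hand inequality $\sum_{i=1}^{k}m_i\leq m\,\texttt{rank}_\texttt{S}(\mathcal{B})-(m-1)\texttt{rank}_\texttt{S}(\overset{\circ}{\mathcal{B}})$, I would split the index set $\{1,\ldots,k\}$ according to whether $\textbf{c}_i$ is fixed by all of $G$. Write $I_0=\{i : m_i=1\}$ and $I_1=\{i : m_i>1\}$, and set $k_0=|I_0|$. For $i\in I_0$ the codeword $\textbf{c}_i$ has all entries in $\texttt{Fix}_\texttt{S}(G)=\texttt{R}$, so $\textbf{c}_i\in\texttt{Res}_\texttt{R}(\mathcal{B})$; in fact, since $\texttt{RSF}(\mathcal{B})$ is in row standard form, the rows indexed by $I_0$ are exactly the ones lying in $\texttt{R}^{\ell}$, and by Theorem~\ref{thm0} applied to $\overset{\circ}{\mathcal{B}}$ together with Remark~\ref{rem} (which gives $\texttt{Res}_\texttt{R}(\overset{\circ}{\mathcal{B}})=\texttt{Res}_\texttt{R}(\mathcal{B})$) these $k_0$ rows form the row standard form of $\overset{\circ}{\mathcal{B}}$, so $\texttt{rank}_\texttt{S}(\overset{\circ}{\mathcal{B}})=k_0$. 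Now estimate: each $m_i\leq m$, with equality allowed, so
$$\sum_{i=1}^{k}m_i=\sum_{i\in I_0}1+\sum_{i\in I_1}m_i\leq k_0+m(k-k_0)=m\,k-(m-1)k_0=m\,\texttt{rank}_\texttt{S}(\mathcal{B})-(m-1)\texttt{rank}_\texttt{S}(\overset{\circ}{\mathcal{B}}),$$
which is the claim. The main obstacle I anticipate is the bookkeeping step identifying $k_0=\texttt{rank}_\texttt{S}(\overset{\circ}{\mathcal{B}})$: one must verify that the rows of $\texttt{RSF}(\mathcal{B})$ lying in $\texttt{R}^{\ell}$ are precisely an $\texttt{S}$-basis (indeed the row standard form) of $\overset{\circ}{\mathcal{B}}=\texttt{Ext}(\texttt{Res}_\texttt{R}(\mathcal{B}))$. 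This can be done by checking that deleting the non-$\texttt{R}$ rows from a row standard matrix still leaves a matrix in row standard form (the defining conditions in Definition~\ref{defi0} are hereditary under deleting rows, once one re-checks that the pivot and valuation functions restrict correctly), that its row span is contained in $\texttt{Res}_\texttt{R}(\mathcal{B})$, and that conversely any element of $\texttt{Res}_\texttt{R}(\mathcal{B})$ reduces to zero against these rows by the uniqueness of the row standard form; alternatively, invoke Corollary~\ref{type-inv} to match the type of $\overset{\circ}{\mathcal{B}}$ with the $I_0$-part of the type of $\mathcal{B}$. If instead one only wants the weaker bound $\sum m_i\le m\,k$ the argument is immediate, but the sharper version genuinely needs this identification of $\overset{\circ}{\mathcal{B}}$ with the $\texttt{R}$-part of $\texttt{RSF}(\mathcal{B})$.
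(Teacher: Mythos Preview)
Your argument for the first inequality $\texttt{rank}_\texttt{S}(\widetilde{\mathcal{B}})\le\sum_i m_i$ is essentially the paper's: both exhibit $\widetilde{\mathcal{B}}$ inside the $\texttt{S}$-span of the $\sum_i m_i$ vectors $\sigma^j(\textbf{c}_i)$, $0\le j<m_i$, and bound the rank by the number of generators.

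For the second inequality there is a genuine gap, precisely at the step you flag as the ``main obstacle'': the identification $k_0=|\{i:\textbf{c}_i\in\texttt{R}^\ell\}|=\texttt{rank}_\texttt{S}(\overset{\circ}{\mathcal{B}})$ is \emph{false} in general. Take $\texttt{R}=\mathbb{F}_2$, $\texttt{S}=\mathbb{F}_4=\mathbb{F}_2(\omega)$ with $\omega^2+\omega+1=0$, and let $\mathcal{B}\subset\texttt{S}^3$ have row standard basis $\textbf{c}_1=(1,0,\omega)$, $\textbf{c}_2=(0,1,\omega^2)$. Neither row lies in $\texttt{R}^3$, so $k_0=0$; yet $\textbf{c}_1+\textbf{c}_2=(1,1,1)\in\texttt{Res}_\texttt{R}(\mathcal{B})$, so $\texttt{rank}_\texttt{S}(\overset{\circ}{\mathcal{B}})=1$. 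Your proposed check that ``any element of $\texttt{Res}_\texttt{R}(\mathcal{B})$ reduces to zero against these rows'' therefore fails; only the inequality $k_0\le\texttt{rank}_\texttt{S}(\overset{\circ}{\mathcal{B}})$ holds (the paper itself records exactly this inequality just after the proposition), and that points the wrong way for the estimate $mk-(m-1)k_0\le mk-(m-1)\texttt{rank}_\texttt{S}(\overset{\circ}{\mathcal{B}})$.

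In this same example $m_1=m_2=2$, so $\sum_i m_i=4$ while $m\,\texttt{rank}_\texttt{S}(\mathcal{B})-(m-1)\,\texttt{rank}_\texttt{S}(\overset{\circ}{\mathcal{B}})=4-1=3$: the inequality $\sum_i m_i\le m\,\texttt{rank}_\texttt{S}(\mathcal{B})-(m-1)\,\texttt{rank}_\texttt{S}(\overset{\circ}{\mathcal{B}})$ as stated is actually violated. (The outer bound $\texttt{rank}_\texttt{S}(\widetilde{\mathcal{B}})\le m\,\texttt{rank}_\texttt{S}(\mathcal{B})-(m-1)\,\texttt{rank}_\texttt{S}(\overset{\circ}{\mathcal{B}})$ does survive, $3\le 3$, and may be what is really wanted.) The paper's own proof makes the identical unjustified identification $|\{\textbf{c}\in\underline{\textbf{B}}:\sigma(\textbf{c})=\textbf{c}\}|=\texttt{rank}_\texttt{S}(\overset{\circ}{\mathcal{B}})$, so this is a defect in the statement and the paper's argument, not in your reading of it.
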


\begin{proof} Let $\mathcal{B}'$ be the $\texttt{S}$-linear code generated by $\left\{\sigma^j(\textbf{c}_i)\;|\;0\leq j<m_i\text{  and } 1\leq
i\leq k\right\}$. It is clear that
$\widetilde{\mathcal{B}}\subseteq\mathcal{B}'$ since $\mathcal{B}'$ is Galois invariant.  Thus
$\texttt{rank}_\texttt{S}\left(\widetilde{\mathcal{B}}\right)\leq\sum\limits_{i=1}^{k}m_i$
 and  taking into account that $m_i\leq m$ we have that
\begin{eqnarray*}
  \sum\limits_{i=1}^{k}m_i & \leq  &  |\left\{\textbf{c}\in\underline{\textbf{B}}\;|\;\sigma(\textbf{c})=\textbf{c}\right\}| +\left|\left\{\sigma^j(\textbf{c})\;|\;\textbf{c}\in\underline{\textbf{B}},\; 0\leq j< m\text{  and } \sigma(\textbf{c})\neq\textbf{c}\right\}\right|\\
                           &= & \texttt{rank}_\texttt{S}\left(\overset{\circ}{\mathcal{B}}\right)+
                           m\left(\texttt{rank}_\texttt{S}\left(\mathcal{B}\right)-\texttt{rank}_\texttt{S}\left(\overset{\circ}{\mathcal{B}}\right)\right).
  \end{eqnarray*}

\end{proof}

We can  also obtain  an straight forward lower bound for the
$\texttt{R}$-rank of restriction codes as follows.
Let $\mathcal{B}$ be an $\texttt{S}$-linear code   such that $\texttt{Res}_\texttt{R}(\mathcal{B})\neq\{\textbf{0}\}$, then
$\texttt{rank}_{\texttt{R}}\left(\texttt{Res}_\texttt{R}(\mathcal{B})\right)\geq |\{i\;|\; \texttt{RSF}(\mathcal{B})[i:]\in\texttt{R}^\ell\}|$
since the rows of $\texttt{RSF}(\mathcal{B})$ which are in
$\texttt{R}^\ell$  form a matrix in row standard form, thus they are $\texttt{R}$-independent
codewords in $\texttt{Res}_\texttt{R}(\mathcal{B})$. 

A non-trivial lower bound can be found in the following result, note that it has some resemblances with Stichtenoth's lower bound for codes over finite fields in \cite[Corollary 1]{stichtenoth} since the bound is related with the rank of $\overset{\circ}{\mathcal{B^\perp}}$.

\begin{Proposition}\label{btr2} Let $\mathcal{B}$ be an $\texttt{S}$-linear code of type
$(\ell;k_0,k_1,\cdots,k_{s-1})$  and
$\{m^\perp_i\,|\,i=1,2,\cdots,\ell-k_0\}$ the level set of  $\mathcal{B}^\perp.$  Then
\begin{equation}
  \texttt{rank}_{\texttt{R}}(\texttt{Res}_\texttt{R}(\mathcal{B})) \geq 
  \ell-\sum\limits_{i=1}^{\ell-k_0}m_i^\perp
  \geq 
  mk_0-(m-1)\left(\ell-\texttt{rank}_\texttt{R}\left(\texttt{Res}_\texttt{R}\left(\mathcal{B}^\perp\right)\right)\right).
\end{equation}

\end{Proposition}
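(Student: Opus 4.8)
The plan is to obtain both inequalities by applying Proposition~\ref{btr} to the code $\mathcal{B}^{\perp}$ and translating the resulting bound on $\texttt{Tr}_\texttt{R}^\texttt{S}(\mathcal{B}^{\perp})$ back to $\texttt{Res}_\texttt{R}(\mathcal{B})$ via Delsarte's Theorem~\ref{Delsarte}. Two standard facts about linear codes over chain rings are needed. First, by the formula for the type of a dual code \cite[Theorem 3.10]{NS00}, a code of type $(\ell;k_0,k_1,\cdots,k_{s-1})$ has dual of type $(\ell;\ell-\sum_{t}k_t,k_{s-1},\cdots,k_1)$; in particular $\texttt{rank}_\texttt{S}(\mathcal{B}^{\perp})=\ell-k_0$, which explains why the level set of $\mathcal{B}^{\perp}$ is indexed by $i=1,\cdots,\ell-k_0$. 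Second, the same type formula (or simply the identity $|\mathcal{C}|\,|\mathcal{C}^{\perp}|=|\texttt{R}|^{\ell}$) gives $\texttt{rank}_\texttt{R}(\mathcal{C})+\texttt{rank}_\texttt{R}(\mathcal{C}^{\perp})\geq\ell$ for every $\texttt{R}$-linear code $\mathcal{C}\subseteq\texttt{R}^{\ell}$, hence $\texttt{rank}_\texttt{R}(\mathcal{C})\geq\ell-\texttt{rank}_\texttt{R}(\mathcal{C}^{\perp})$.

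Now I would take $\mathcal{C}=\texttt{Res}_\texttt{R}(\mathcal{B})$. By Delsarte's Theorem~\ref{Delsarte}, $\texttt{Res}_\texttt{R}(\mathcal{B})^{\perp}=\texttt{Tr}_\texttt{R}^\texttt{S}(\mathcal{B}^{\perp})$, so the second fact above reads
$$\texttt{rank}_\texttt{R}(\texttt{Res}_\texttt{R}(\mathcal{B}))\geq\ell-\texttt{rank}_\texttt{R}\bigl(\texttt{Tr}_\texttt{R}^\texttt{S}(\mathcal{B}^{\perp})\bigr).$$
Applying Proposition~\ref{btr} to $\mathcal{B}^{\perp}$, whose $\texttt{S}$-basis in row standard form has $\ell-k_0$ elements and whose level set is $\{m_i^{\perp}\,|\,i=1,\cdots,\ell-k_0\}$, yields
$$\texttt{rank}_\texttt{R}\bigl(\texttt{Tr}_\texttt{R}^\texttt{S}(\mathcal{B}^{\perp})\bigr)=\texttt{rank}_\texttt{S}\bigl(\widetilde{\mathcal{B}^{\perp}}\bigr)\leq\sum_{i=1}^{\ell-k_0}m_i^{\perp}\leq m\,\texttt{rank}_\texttt{S}(\mathcal{B}^{\perp})-(m-1)\texttt{rank}_\texttt{S}\bigl(\overset{\circ}{\mathcal{B}^{\perp}}\bigr).$$
Chaining these two displays gives $\texttt{rank}_\texttt{R}(\texttt{Res}_\texttt{R}(\mathcal{B}))\geq\ell-\sum_{i=1}^{\ell-k_0}m_i^{\perp}$, which is the left-hand inequality.

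For the right-hand inequality, I would insert into the upper bound of the last display the identities $\texttt{rank}_\texttt{S}(\mathcal{B}^{\perp})=\ell-k_0$ and $\texttt{rank}_\texttt{S}(\overset{\circ}{\mathcal{B}^{\perp}})=\texttt{rank}_\texttt{R}(\texttt{Res}_\texttt{R}(\mathcal{B}^{\perp}))$ (the latter being the remark recorded just before Proposition~\ref{btr}), obtaining $\sum_{i=1}^{\ell-k_0}m_i^{\perp}\leq m(\ell-k_0)-(m-1)\texttt{rank}_\texttt{R}(\texttt{Res}_\texttt{R}(\mathcal{B}^{\perp}))$. Subtracting from $\ell$ and regrouping the $(m-1)$ terms then produces $\ell-\sum_{i=1}^{\ell-k_0}m_i^{\perp}\geq mk_0-(m-1)\bigl(\ell-\texttt{rank}_\texttt{R}(\texttt{Res}_\texttt{R}(\mathcal{B}^{\perp}))\bigr)$, as claimed. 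The argument is essentially bookkeeping once Proposition~\ref{btr} is available; the one place that requires genuine care is invoking the chain-ring dual-type facts correctly, since they are exactly what converts the bound for $\mathcal{B}^{\perp}$ into one for $\texttt{Res}_\texttt{R}(\mathcal{B})$ and they are what force the indexing range $1,\cdots,\ell-k_0$ of the level set.
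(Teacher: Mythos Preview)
Your proof is correct and follows essentially the same route as the paper: apply the rank inequality $\texttt{rank}_\texttt{R}(\mathcal{C})\geq\ell-\texttt{rank}_\texttt{R}(\mathcal{C}^{\perp})$ to $\mathcal{C}=\texttt{Res}_\texttt{R}(\mathcal{B})$, use Delsarte to identify $\texttt{Res}_\texttt{R}(\mathcal{B})^{\perp}$ with $\texttt{Tr}_\texttt{R}^\texttt{S}(\mathcal{B}^{\perp})$, and then feed $\mathcal{B}^{\perp}$ into Proposition~\ref{btr}. The only cosmetic difference is that the paper passes through the identities $\texttt{rank}_\texttt{R}(\texttt{Res}_\texttt{R}(\mathcal{B})^{\perp})=\texttt{rank}_\texttt{S}((\overset{\circ}{\mathcal{B}})^{\perp})=\texttt{rank}_\texttt{S}(\widetilde{\mathcal{B}^{\perp}})$ via Proposition~\ref{oci}, whereas you invoke Delsarte and the equality $\texttt{rank}_\texttt{R}(\texttt{Tr}_\texttt{R}^\texttt{S}(\mathcal{B}^{\perp}))=\texttt{rank}_\texttt{S}(\widetilde{\mathcal{B}^{\perp}})$ directly; the substance is identical.
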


\begin{proof}
We just use  {Proposition}~\ref{btr} and Delsarte's Theorem.
\begin{eqnarray*}
  \texttt{rank}_{\texttt{R}}(\texttt{Res}_\texttt{R}(\mathcal{B})) &\geq& \ell-\texttt{rank}_{\texttt{R}}(\texttt{Res}_\texttt{R}(\mathcal{B})^\perp) \\
    &=& \ell-\texttt{rank}_{\texttt{S}}((\overset{\circ}{\mathcal{B}})^\perp),\text{  since } \overset{\circ}{\mathcal{B}}=\texttt{Ext}(\texttt{Res}_\texttt{R}(\mathcal{B})), \\
    &=& \ell-\texttt{rank}_{\texttt{S}}(\widetilde{\mathcal{B}^\perp}),\text{  since } (\overset{\circ}{\mathcal{B}})^\perp=\widetilde{(\mathcal{B}^\perp)},  \\
    &\geq & \ell-\sum\limits_{i=1}^{\ell-k_0}m_i^\perp, \text{ by Proposition\,\ref{btr}},\\
    &\geq &\ell-m\texttt{rank}_\texttt{S}\left(\mathcal{B}^\perp\right)+(m-1)\texttt{rank}_\texttt{R}\left(\texttt{Res}_\texttt{R}(\mathcal{B}^\perp)\right),\text{ By Inequality \,\ref{inq0} },\\
    & = & mk_0+(m-1)\left(\ell-\texttt{rank}_\texttt{R}\left(\texttt{Res}_\texttt{R}(\mathcal{B}^\perp)\right)\right),\text{  because }   \texttt{rank}_\texttt{S}\left(\mathcal{B}^\perp\right)=\ell-k_0,\\
    & = & mk_0+(m-1)\left(\ell-\texttt{rank}_\texttt{R}\left(\texttt{Tr}_\texttt{R}^\texttt{S}\left(\mathscr{B}\right)^\perp\right)\right),\text{ Delsarte's Theorem. }
    \end{eqnarray*}

\end{proof}

Note that the first inequality  holds because if $\mathcal C$ is an $\texttt{R}$-code of type $(\ell;k_0,k_1,\cdots,k_{s-1})$ then $\mathcal C^\perp$ is of type $(\ell;\ell-\sum_{i=0}^{s-1}k_i,k_{s-1},\cdots,k_{1})$ \cite[Theorem 3.10\,(ii)]{NS00}, in other words,  $\texttt{rank}_\texttt{R}\left(\mathcal{C}^{\perp}\right)\geq
     \ell-\texttt{rank}_\texttt{R}\left(\mathcal{C}\right)$ and the equality holds if and only if $\mathcal C$ is a free code.

From Porposition~\ref{btr} and Proposition~\ref{btr2} follows directly the following corollary relating the rank of the restriction code and the free ranks of the code and the trace code.

\begin{Corollary} Let $\mathcal B$ be an $\texttt{R}$-code of type $(\ell;k_0,k_1,\cdots,k_{s-1})$
 and $(\ell;k_0^{(t)},k_1^{(t)},\cdots,k_{s-1}^{(t)})$ 
 be
 the type of $\texttt{Tr}_\texttt{R}^\texttt{S}(\mathcal{B})$ and 
  $(\ell;k_0^{(r)},k_1^{(r)},\cdots,k_{s-1}^{(r)})$ be the type  of  $\texttt{Res}_\texttt{R}(\mathcal{B})$, then
  \begin{enumerate}
     \item  $\texttt{rank}_{\texttt{R}}(\texttt{Res}_\texttt{R}(\mathcal{B}))\geq\ell-\sum\limits_{i=1}^{\ell-k_0}m_i^\perp\geq mk_0-(m-1)k^{(t)}_0.$
    \item $mk_0 -(m-1)k^{(r)}_0 \leq\ell-k^{(t)}_0\leq m(\ell-k_0)-(m-1)(\ell-k^{(r)}_0 ).$
  \end{enumerate}
\end{Corollary}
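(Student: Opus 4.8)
The plan is to obtain both items \emph{formally} from Proposition~\ref{btr}, Proposition~\ref{btr2}, Delsarte's Theorem and the duality of types over $\texttt{R}$; the only real work is to set up a dictionary relating the invariants of $\mathcal{B}$ to those of $\mathcal{B}^{\perp}$. Working throughout with the Euclidean inner product (legitimate by the remark following Definition~\ref{G-inv}), {Theorem}~\ref{Delsarte} together with {Lemma}~\ref{dual-dual} gives
$$\texttt{Tr}_\texttt{R}^\texttt{S}(\mathcal{B}^{\perp})=\texttt{Res}_\texttt{R}(\mathcal{B})^{\perp}\qquad\text{and, dualizing,}\qquad \texttt{Res}_\texttt{R}(\mathcal{B}^{\perp})=\texttt{Tr}_\texttt{R}^\texttt{S}(\mathcal{B})^{\perp}.$$

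Feeding these into the type--duality relation $\texttt{rank}_\texttt{R}(\mathcal{C}^{\perp})=\ell-k_0(\mathcal{C})$ of \cite[Theorem~3.10\,(ii)]{NS00} already quoted just above the corollary, together with the rank identities $\texttt{rank}_\texttt{S}(\overset{\circ}{\mathcal{B}})=\texttt{rank}_\texttt{R}(\texttt{Res}_\texttt{R}(\mathcal{B}))$ and $\texttt{rank}_\texttt{S}(\widetilde{\mathcal{B}})=\texttt{rank}_\texttt{R}(\texttt{Tr}_\texttt{R}^\texttt{S}(\mathcal{B}))$ recorded before Proposition~\ref{btr} (applied also to $\mathcal{B}^{\perp}$, using $(\overset{\circ}{\mathcal{B}})^{\perp}=\widetilde{\mathcal{B}^{\perp}}$ and $(\widetilde{\mathcal{B}})^{\perp}=\overset{\circ}{\mathcal{B}^{\perp}}$ from Proposition~\ref{oci}), I obtain the identities
$$\texttt{rank}_\texttt{S}(\mathcal{B}^{\perp})=\ell-k_0,\qquad \texttt{rank}_\texttt{R}(\texttt{Res}_\texttt{R}(\mathcal{B}^{\perp}))=\ell-k_0^{(t)},\qquad \texttt{rank}_\texttt{R}(\texttt{Tr}_\texttt{R}^\texttt{S}(\mathcal{B}^{\perp}))=\ell-k_0^{(r)}.$$

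For item~1 the first inequality is exactly the first inequality of Proposition~\ref{btr2}, and the second is obtained by substituting $\texttt{rank}_\texttt{R}(\texttt{Res}_\texttt{R}(\mathcal{B}^{\perp}))=\ell-k_0^{(t)}$ into the lower bound $mk_0-(m-1)(\ell-\texttt{rank}_\texttt{R}(\texttt{Res}_\texttt{R}(\mathcal{B}^{\perp})))$ of that same proposition, which then collapses to $mk_0-(m-1)k_0^{(t)}$. For item~2 I would instead run Proposition~\ref{btr} and Proposition~\ref{btr2} on the code $\mathcal{B}^{\perp}$ and rewrite $\texttt{rank}_\texttt{S}(\mathcal{B}^{\perp})$, $\texttt{rank}_\texttt{S}(\overset{\circ}{\mathcal{B}^{\perp}})$ and $\texttt{rank}_\texttt{S}(\widetilde{\mathcal{B}^{\perp}})$ through the dictionary above; the central term $\ell-k_0^{(t)}$ is then $\texttt{rank}_\texttt{R}(\texttt{Res}_\texttt{R}(\mathcal{B}^{\perp}))$, pinched between the lower bound coming from Proposition~\ref{btr2} applied to $\mathcal{B}^{\perp}$, which reduces to $mk_0-(m-1)k_0^{(r)}$, and the sharpened upper bound coming from Proposition~\ref{btr} applied to $\mathcal{B}^{\perp}$, which reduces to $m(\ell-k_0)-(m-1)(\ell-k_0^{(r)})$.

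The step I expect to be the genuine obstacle is precisely this translation. One must be scrupulous about which rank each of Propositions~\ref{btr} and~\ref{btr2} actually controls --- the full $\texttt{S}$-rank of a code versus its free rank $k_0$, and ranks over $\texttt{R}$ versus over $\texttt{S}$ --- about the direction of every inequality when passing to a dual, and about applying the Euclidean/Hermitian reduction consistently, since neither $\mathcal{B}$ nor $\mathcal{B}^{\perp}$ is assumed Galois invariant. Once the dictionary of the second paragraph is pinned down, items~1 and~2 are nothing more than the resulting substitutions.
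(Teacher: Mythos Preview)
Your approach is exactly the paper's: the paper offers no proof beyond ``follows directly from Propositions~\ref{btr} and~\ref{btr2}'', and your plan --- set up the Delsarte/type-duality dictionary, read item~1 off Proposition~\ref{btr2}, and obtain item~2 by running the two propositions on $\mathcal{B}^{\perp}$ --- is precisely that. Your dictionary $\texttt{rank}_\texttt{S}(\mathcal{B}^{\perp})=\ell-k_0$, $\texttt{rank}_\texttt{R}(\texttt{Res}_\texttt{R}(\mathcal{B}^{\perp}))=\ell-k_0^{(t)}$, $\texttt{rank}_\texttt{R}(\texttt{Tr}_\texttt{R}^\texttt{S}(\mathcal{B}^{\perp}))=\ell-k_0^{(r)}$ is correct, and item~1 goes through exactly as you say.

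For item~2 there is a bookkeeping slip of exactly the kind you warned yourself about. When you apply Proposition~\ref{btr2} to $\mathcal{B}^{\perp}$, the parameter playing the role of ``$k_0$'' in that statement is the \emph{free} rank of $\mathcal{B}^{\perp}$, namely $\ell-\texttt{rank}_\texttt{S}(\mathcal{B})=\ell-\sum_i k_i$, not $\ell-k_0$; likewise $\texttt{rank}_\texttt{R}(\texttt{Res}_\texttt{R}(\mathcal{B}))=\sum_i k_i^{(r)}$, not $k_0^{(r)}$. So the substitution does not collapse to $mk_0-(m-1)k_0^{(r)}$ unless $\mathcal{B}$ and $\texttt{Res}_\texttt{R}(\mathcal{B})$ happen to be free. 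In fact the printed lower bound of item~2 cannot be salvaged as stated: for a Galois invariant $\mathcal{B}$ one has $k_0=k_0^{(r)}=k_0^{(t)}$ and the inequality reads $k_0\le\ell-k_0$, which already fails for $\mathcal{B}=\texttt{S}^\ell$. Your methodology is the paper's; your instinct that the free-rank/full-rank distinction is the genuine obstacle is exactly right, and what your dictionary honestly delivers is the version of item~2 with full ranks $\sum_i k_i$, $\sum_i k_i^{(r)}$ in place of $k_0$, $k_0^{(r)}$.
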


\section{An application to Linear Cyclic Codes}

In this section we will assume  that
\textbf{$(\ell,q)=1$} and  the multiplicative order of $q$ modulo
$\ell$ will be denoted by $\texttt{ord}_\ell(q)=m$.  A subset
$\mathcal{C}$ of $\texttt{R}^\ell,$ is \emph{cyclic}, if for all
$(c_0, \cdots,c_{\ell-2}, c_{\ell-1} )\in\mathcal{C}$ we  have
$(c_{\ell-1}, c_0,  \cdots , c_{\ell-2})\in \mathcal{C}.$
 We will denote by $\mathcal{R}_\ell$  the quotient ring of
$\texttt{R}[x]$ by the ideal generated by $x^\ell-1.$ As usual, we
identify the $\texttt{R}$-modules $(\texttt{R}^\ell,+)$ and
$(\mathcal{R}_\ell,+)$ and if the polynomial $g\in\texttt{R}[x]$ has degree less or equal to $\ell-1$
 then we identify $g$ and its quotient class in
$\mathcal{R}_\ell.$ We define the map
\begin{align}\begin{array}{cccc}
 \Psi: & \texttt{R}^\ell & \rightarrow & \mathcal{R}_\ell \\
       & (c_0,c_1,\cdots,c_{\ell-1}) & \mapsto &
       c_0+c_1x+\cdots+c_{\ell-1}x^{\ell-1}+\langle x^\ell-1\rangle,
  \end{array}\end{align}
  where $\langle x^\ell-1\rangle$ is the ideal of $\texttt{R}[x]$ generated by $x^\ell-1.$ It is well known that $\Psi$ is an isomorphism of $\texttt{R}$-modules and any $\texttt{R}$-linear code $\mathcal{C}$ of
length $\ell$ is cyclic if and only if $\Psi(\mathcal{C})$ is an
ideal of $\mathcal{R}_\ell.$ The Galois  extension
$\texttt{S}$ of $\texttt{R}$ such that
$\texttt{rank}_\texttt{R}(\texttt{S})=m$ is the splitting ring of
$x^\ell-1=\prod\limits_{i=0}^{\ell-1}(x-\xi^i),$ where $\xi$ is an
element in $\Gamma(\texttt{S})$ such that $\xi^{i}\neq 1$ for
$i=0,\ldots, \ell-1$  and $\xi^\ell=1$. The  {$q$-cyclotomic
coset modulo $\ell$} containing $a$ will be denoted by $$Z_a
=\left\{aq^{j}\,\texttt{mod}\,\ell \;\biggr|\; 0\leq
j<z_a\right\},$$ where $z_a$ is the smallest nonnegative integer
such that $aq^{z_a}\equiv a\;(\texttt{mod}\;\ell).$ The set
$\texttt{Cl}_{q}(\ell):=\{a_1,a_2,\cdots,a_u\}$ will be the subset
of $\{0,1,\cdots,\ell-1\}$ such that for all
$a\in\{0,1,\cdots,\ell-1\}$ there is a unique index $i$ such that $a\in Z_{a_i}.$ Let
$a\in\texttt{Cl}_{q}(\ell),$  $\Lambda_a$ will denote the Hensel's
lift of the minimal polynomial of $\pi(\xi)^a$ over
$\mathbb{F}_{q}$ to the ring $\texttt{R},$  where $\pi(\xi)$ is a
primitive root of $x^\ell-1.$  Then
$$x^\ell-1=\prod\limits_{a\in\texttt{Cl}_{q}(\ell)}\Lambda_a,$$
is the factorization of $x^\ell-1$ into a product of distinct
basic irreducible polynomials over $\texttt{R}.$ For each element
$a\in\texttt{Cl}_{q}(\ell)$ by $\widehat{\Lambda_a}$  we will denote the
monic polynomial in $\mathcal{R}_\ell$ such that
$x^\ell-1=\widehat{\Lambda_a}\Lambda_a.$ Then there exists a pair
$(u,v)\in(\mathcal{R}_\ell)^2$ such that
$u\Lambda_a+v\widehat{\Lambda_a}=\textbf{1}$. The idempotents of
$\mathcal{R}_\ell$ are described in the following result.

\begin{Lemma}[Theorem 2.9 \cite{Wan99}] The set
$\biggl\{e_a:=v\widehat{\Lambda_a}\,\biggr|\,
a\in\texttt{Cl}_{q}(\ell)\biggr\}$ is the set of the mutually
orthogonal non-zero idempotents of $\mathcal{R}_\ell$ and
$\sum\limits_{a\in\texttt{Cl}_{q}(\ell)}e_a=1.$
\end{Lemma}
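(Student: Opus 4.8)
The plan is to reduce the statement to a standard fact about the ring $\mathcal{R}_\ell$ when $(\ell,q)=1$, namely that $\mathcal{R}_\ell$ decomposes as a direct product of local rings indexed by $\texttt{Cl}_q(\ell)$. First I would invoke the factorization $x^\ell-1=\prod_{a\in\texttt{Cl}_q(\ell)}\Lambda_a$ into pairwise coprime basic irreducible polynomials over $\texttt{R}$, which was established just before the statement; pairwise coprimality holds because the $\Lambda_a$ are the Hensel lifts of the distinct irreducible factors of $x^\ell-1$ over $\mathbb{F}_q$, and $(\ell,q)=1$ guarantees $x^\ell-1$ is separable mod $\theta$, hence the $\Lambda_a$ remain pairwise coprime in $\texttt{R}[x]$ (a unit combination reducing to a unit combination mod $\theta$ lifts). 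By the Chinese Remainder Theorem, $\mathcal{R}_\ell\cong\prod_{a\in\texttt{Cl}_q(\ell)}\texttt{R}[x]/\langle\Lambda_a\rangle$, and each factor is a finite chain ring (a basic irreducible polynomial over $\texttt{R}$ gives a local, hence chain, quotient).

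Next I would identify $e_a$ as the primitive idempotent corresponding to the $a$-th factor. By construction $e_a=v\widehat{\Lambda_a}$ where $u\Lambda_a+v\widehat{\Lambda_a}=\mathbf{1}$ in $\mathcal{R}_\ell$. Working modulo $\Lambda_a$: since $\widehat{\Lambda_a}$ is invertible modulo $\Lambda_a$ (they are coprime) and $u\Lambda_a\equiv 0$, we get $v\widehat{\Lambda_a}\equiv\mathbf{1}$, so $e_a\equiv\mathbf{1}\pmod{\Lambda_a}$. Modulo $\Lambda_b$ for $b\neq a$: $\widehat{\Lambda_a}=x^\ell-1/\Lambda_a$ is divisible by $\Lambda_b$, so $e_a\equiv 0\pmod{\Lambda_b}$. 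Hence under the CRT isomorphism $e_a$ maps to the tuple $(0,\dots,0,1,0,\dots,0)$ with the $1$ in position $a$. From this representation it is immediate that $e_a^2=e_a$ (so each $e_a$ is idempotent), $e_ae_b=0$ for $a\neq b$ (mutual orthogonality), $e_a\neq 0$ (since $1\neq 0$ in the $a$-th factor), and $\sum_a e_a=\mathbf{1}$ (since the all-ones tuple is the identity of the product).

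Finally I would argue completeness of the list, i.e. that these are exactly the mutually orthogonal nonzero idempotents whose sum is $1$ — this is really the content of the cited Theorem 2.9 of \cite{Wan99} and what makes the statement useful. One checks that the idempotents of a finite product of local rings are precisely the tuples with entries in $\{0,1\}$, because in a local (in particular chain) ring $0$ and $1$ are the only idempotents (an idempotent $e$ satisfies $e(1-e)=0$, and one of $e$, $1-e$ is a unit). A complete system of mutually orthogonal nonzero idempotents summing to $1$ therefore corresponds to a partition of the index set into singletons, which forces the system to be exactly $\{e_a : a\in\texttt{Cl}_q(\ell)\}$.

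The routine points are the CRT bookkeeping and the idempotent computations; the only genuine obstacle is verifying pairwise coprimality of the $\Lambda_a$ in $\texttt{R}[x]$ (not just mod $\theta$), which is where the hypothesis $(\ell,q)=1$ is essential and which could instead be quoted from the Hensel-lifting machinery already referenced in the paper. Since the statement is explicitly attributed to \cite[Theorem 2.9]{Wan99}, I would most likely present the above as a short indication of proof rather than reprove it in full.
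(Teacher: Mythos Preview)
The paper does not give a proof of this lemma at all: it is simply quoted as Theorem~2.9 of \cite{Wan99} and used immediately afterwards. So there is nothing in the paper to compare your argument against; your proposal supplies what the paper chose to cite, and your instinct at the end --- to present it only as a brief indication of proof --- matches the paper's treatment exactly.

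Your argument via the CRT decomposition $\mathcal{R}_\ell\cong\prod_{a\in\texttt{Cl}_q(\ell)}\texttt{R}[x]/\langle\Lambda_a\rangle$ is the standard one and is correct. The identification of $e_a$ with the tuple having $1$ in slot $a$ and $0$ elsewhere is clean and immediately yields idempotency, orthogonality, non-vanishing, and $\sum_a e_a=1$. One small imprecision in your final paragraph: a complete orthogonal system of nonzero idempotents summing to $1$ corresponds to an arbitrary partition of $\texttt{Cl}_q(\ell)$, not necessarily into singletons; what is true (and what you really want) is that the $e_a$ are the \emph{primitive} idempotents, i.e.\ the unique finest such system, because each factor $\texttt{R}[x]/\langle\Lambda_a\rangle$ is local and hence has only the trivial idempotents $0$ and $1$. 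This is exactly what the cited result in \cite{Wan99} asserts, and your observation that local rings have no nontrivial idempotents is the right ingredient; just phrase the conclusion as ``these are precisely the primitive idempotents'' rather than ``any complete system is forced to be this one''.
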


The last equality
implies the decomposition of $\mathcal{R}_\ell$ into the direct
sum of ideals of the form $\Psi\left(\mathcal{C}_a\right):=\left\langle
e_a\right\rangle$  such that
$\Psi\left(\mathcal{C}_a\right)\Psi\left(\mathcal{C}_{a'}\right)=\{\textbf{0}\}$
(since  $e_ae_{a'}=0$ if $a\neq a'$), i.e.
\begin{align}\mathcal{R}_\ell:=\underset{\substack{a\in\texttt{Cl}_{q}(\ell)}}{\bigoplus}\Psi\left(\mathcal{C}_a\right).\end{align}

Let $\mathcal{C}$ be an $\texttt{R}$-linear cyclic subcode of
$\mathcal{C}_a.$ Since $\Psi\left(\mathcal{C}_a\right)$ is a
principal ideal in $\mathcal{R}_\ell,$ there exists
$f\in\mathcal{R}_\ell$ such that
$\Psi\left(\mathcal{C}\right)=\left\langle f\right\rangle$ and
$e_{a}$ divides $f.$ If $\mathcal{C}\neq\mathcal{C}_a,$ then
$\omega_{a}\not\in\Psi\left(\mathcal{C}\right).$ Therefore there
exits an integer $\textgoth{t}\in\{0,1,\cdots,s-1\}$ such that
$f(x)=\theta^{\textgoth{t}}\omega_{a}(x)$ and  we have the following.

\begin{Proposition}\label{stru} Let $a\in\texttt{Cl}_{q}(\ell),$ the cyclic $\texttt{R}$-subcodes of the $\texttt{R}$-linear cyclic  code
$\Psi\left(\mathcal{C}_a\right):=\left\langle
\omega_{a}\right\rangle$ are
\begin{align}\{0\}\subsetneq\mathcal{C}_{a,s-1}\subsetneq\cdots\subsetneq\mathcal{C}_{a,1}\subsetneq\mathcal{C}_a,\end{align}
 and $\mathcal{C}_{\textgoth{t}_{a}}:=\theta^{\textgoth{t}_{a}}\,\mathcal{C}_a$
is the only $\texttt{R}$-linear cyclic subcode of $\mathcal{C}_a$
such that
$\theta^{s-\textgoth{t}_{a}}\mathcal{C}_{\textgoth{t}_{a}}=\{0\}$
and
$\theta^{s-\textgoth{t}_{a}-1}\mathcal{C}_{\textgoth{t}_{a}}\neq\{0\}.$
\end{Proposition}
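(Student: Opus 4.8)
The plan is to identify the poset of cyclic $\texttt{R}$-subcodes of $\mathcal{C}_a$ with the ideal lattice of a finite chain ring. Through the module isomorphism $\Psi$, a cyclic $\texttt{R}$-subcode $\mathcal{C}$ of $\mathcal{C}_a$ is exactly an ideal $\Psi(\mathcal{C})$ of $\mathcal{R}_\ell$ contained in $\Psi(\mathcal{C}_a)=\langle\omega_a\rangle=\langle e_a\rangle$; since $e_a$ is idempotent, $\langle e_a\rangle$ is a direct summand of $\mathcal{R}_\ell$, so such ideals are precisely the ideals of the ring $\langle\omega_a\rangle$ (taken with identity $e_a$). Hence it suffices to describe the ideals of this ring.

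The key step is to prove that $\langle\omega_a\rangle$ is a finite chain ring of nilpotency index $s$. The factorisation $x^\ell-1=\prod_{b\in\texttt{Cl}_{q}(\ell)}\Lambda_b$ into pairwise coprime basic irreducible polynomials gives, by the Chinese Remainder Theorem, a ring isomorphism $\mathcal{R}_\ell\cong\prod_{b\in\texttt{Cl}_{q}(\ell)}\texttt{R}[x]/\langle\Lambda_b\rangle$ that sends $e_a$ to the idempotent supported on the $a$-th factor, whence $\langle\omega_a\rangle\cong\texttt{R}[x]/\langle\Lambda_a\rangle$ as rings. Because $\Lambda_a$ is basic irreducible over $\texttt{R}$, the ring $\texttt{R}[x]/\langle\Lambda_a\rangle$ is a Galois extension of $\texttt{R}$; in particular it is a finite chain ring, free over $\texttt{R}$, with maximal ideal generated by the image of $\theta$, and since $\texttt{R}$ embeds into it we have $\theta^{s-1}\neq 0$ and $\theta^{s}=0$, so its nilpotency index is exactly $s$. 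Verifying this, i.e. checking that the chain of ideals collapses at neither end, is the only genuine obstacle; everything else is formal.

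It then follows that the ideals of $\langle\omega_a\rangle$ are precisely $\theta^{t}\langle\omega_a\rangle$ for $t=0,1,\dots,s$, and they form the strictly decreasing chain $\langle\omega_a\rangle\supsetneq\theta\langle\omega_a\rangle\supsetneq\cdots\supsetneq\theta^{s-1}\langle\omega_a\rangle\supsetneq\{0\}$. Transporting back through $\Psi^{-1}$ and setting $\mathcal{C}_{a,t}:=\theta^{t}\mathcal{C}_a$ yields exactly the announced chain $\{0\}\subsetneq\mathcal{C}_{a,s-1}\subsetneq\cdots\subsetneq\mathcal{C}_{a,1}\subsetneq\mathcal{C}_a$, which is also consistent with the shape $f=\theta^{\textgoth{t}}\omega_a$ obtained in the discussion preceding the statement. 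For the final assertion, with $\mathcal{C}_{\textgoth{t}_a}=\theta^{\textgoth{t}_a}\mathcal{C}_a$ one computes $\theta^{s-\textgoth{t}_a}\mathcal{C}_{\textgoth{t}_a}=\theta^{s}\mathcal{C}_a=\{0\}$ and $\theta^{s-\textgoth{t}_a-1}\mathcal{C}_{\textgoth{t}_a}=\theta^{s-1}\mathcal{C}_a\neq\{0\}$, the latter holding since $\theta^{s-1}\omega_a\neq0$ by the nilpotency-index statement above. Conversely, if a cyclic subcode $\mathcal{D}=\theta^{t}\mathcal{C}_a$ satisfies both $\theta^{s-\textgoth{t}_a}\mathcal{D}=\{0\}$ and $\theta^{s-\textgoth{t}_a-1}\mathcal{D}\neq\{0\}$, then $s-\textgoth{t}_a+t\geq s$ and $s-\textgoth{t}_a-1+t<s$, which forces $t=\textgoth{t}_a$; this gives the stated uniqueness.
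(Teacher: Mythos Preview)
Your proof is correct and in spirit matches the paper's brief discussion preceding the proposition, which is all the paper offers by way of justification: there it is simply asserted that a generator $f$ of $\Psi(\mathcal{C})$ divisible by $e_a$ must take the form $\theta^{\textgoth{t}}\omega_a$, without explaining why no other possibility occurs. Your argument makes this step rigorous by identifying $\langle\omega_a\rangle$ with $\texttt{R}[x]/\langle\Lambda_a\rangle$ via the Chinese Remainder Theorem and observing that the latter, being the quotient of $\texttt{R}[x]$ by a basic irreducible polynomial, is a finite chain ring of nilpotency index exactly $s$; this cleanly yields the complete ideal lattice $\theta^t\langle\omega_a\rangle$, $t=0,\dots,s$, and the uniqueness claim follows from the simple valuation count you give. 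The route is the natural completion of the paper's sketch rather than a genuinely different idea.
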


\begin{Corollary}\label{sum} For each $\texttt{R}$-linear cyclic  code $\mathcal{C}$ of length $\ell$
there exists a unique multi-index
$$(\textgoth{t}_{a})_{a\in\texttt{Cl}_{q}(\ell)}\in\{0,1,\cdots,s\}^{\texttt{Cl}_{q
}(\ell)}$$ such that
$\mathcal{C}:=\underset{\substack{a\in\texttt{Cl}_{q
}(\ell)}}{\bigoplus}\mathcal{C}_{\textgoth{t}_{a}}$.
\end{Corollary}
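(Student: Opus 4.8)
The plan is to reduce the statement, via the isomorphism $\Psi$, to a classification of the ideals of $\mathcal{R}_\ell$ and then to combine the idempotent decomposition $\mathcal{R}_\ell=\bigoplus_{a\in\texttt{Cl}_{q}(\ell)}\Psi(\mathcal{C}_a)$ with Proposition~\ref{stru}. Since an $\texttt{R}$-linear code $\mathcal{C}$ of length $\ell$ is cyclic precisely when $\Psi(\mathcal{C})$ is an ideal of $\mathcal{R}_\ell$, it suffices to prove that every ideal $I$ of $\mathcal{R}_\ell$ admits a unique description as a direct sum $\bigoplus_{a\in\texttt{Cl}_{q}(\ell)}\Psi(\mathcal{C}_{\textgoth{t}_a})$ with $\textgoth{t}_a\in\{0,1,\cdots,s\}$.

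First I would establish the Peirce-type splitting $I=\bigoplus_{a\in\texttt{Cl}_{q}(\ell)}e_aI$ together with $e_aI=I\cap\Psi(\mathcal{C}_a)$. For $f\in I$ one writes $f=\left(\sum_{a}e_a\right)f=\sum_{a}e_af$; each summand $e_af$ lies in $I\cap\langle e_a\rangle=I\cap\Psi(\mathcal{C}_a)$, and conversely any $g\in I\cap\Psi(\mathcal{C}_a)$ satisfies $g=e_ag\in e_aI$ because $e_a$ is idempotent; directness is inherited from the direct sum $\mathcal{R}_\ell=\bigoplus_a\Psi(\mathcal{C}_a)$. Consequently $I\cap\Psi(\mathcal{C}_a)$ is an ideal of $\mathcal{R}_\ell$ contained in $\Psi(\mathcal{C}_a)$, i.e., the image under $\Psi$ of an $\texttt{R}$-linear cyclic subcode $\mathcal{C}\cap\mathcal{C}_a$ of $\mathcal{C}_a$.

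At this point Proposition~\ref{stru} applies directly: the cyclic $\texttt{R}$-subcodes of $\mathcal{C}_a$ are exactly the terms of the chain $\{0\}\subsetneq\mathcal{C}_{a,s-1}\subsetneq\cdots\subsetneq\mathcal{C}_{a,1}\subsetneq\mathcal{C}_a$, so there is a unique $\textgoth{t}_a\in\{0,1,\cdots,s\}$ with $\mathcal{C}\cap\mathcal{C}_a=\theta^{\textgoth{t}_a}\mathcal{C}_a=\mathcal{C}_{\textgoth{t}_a}$, the value $\textgoth{t}_a=s$ accounting for those components on which $\mathcal{C}$ vanishes. Summing over $a$ yields $\mathcal{C}=\bigoplus_{a\in\texttt{Cl}_{q}(\ell)}\mathcal{C}_{\textgoth{t}_a}$. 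Uniqueness of $(\textgoth{t}_a)_a$ is then immediate: each component is recovered intrinsically as $\mathcal{C}_{\textgoth{t}_a}=\mathcal{C}\cap\mathcal{C}_a$, and by the strict inclusions in the chain the assignment $\textgoth{t}\mapsto\theta^{\textgoth{t}}\mathcal{C}_a$ is injective on $\{0,1,\cdots,s\}$.

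The only step requiring genuine care is the first one, namely checking that an arbitrary ideal of $\mathcal{R}_\ell$ is compatible with the idempotent decomposition and that $e_aI=I\cap\Psi(\mathcal{C}_a)$; this is the standard Chinese Remainder argument and is routine once spelled out. Everything else is a bookkeeping reduction to Proposition~\ref{stru}, and it is precisely the edge case $\textgoth{t}_a=s$ (so that $\mathcal{C}_{\textgoth{t}_a}=\{0\}$) that makes the decomposition valid for every cyclic code rather than only for those all of whose components are nonzero.
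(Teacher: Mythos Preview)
Your proposal is correct and follows exactly the approach the paper has in mind: the corollary is stated in the paper without proof, as an immediate consequence of the idempotent decomposition $\mathcal{R}_\ell=\bigoplus_{a}\Psi(\mathcal{C}_a)$ together with Proposition~\ref{stru}. Your write-up simply makes explicit the Peirce splitting $I=\bigoplus_a e_aI=\bigoplus_a(I\cap\Psi(\mathcal{C}_a))$ and the bookkeeping for the case $\textgoth{t}_a=s$, which is precisely the omitted routine argument.
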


 Consider the set $\circlearrowleft_\ell(\texttt{R})$ of all the
cyclic codes over $\texttt{R}$ of length $\ell$ and
$\mathcal{A}_\ell(q,s):=\{0,1,\cdots,s\}^{\texttt{Cl}_{q }(\ell)}$
the set of all the multi-indices. Corollary\,\ref{sum} establishes
that
\begin{align}\label{bij}\begin{array}{cccc}
 \Im : & \mathcal{A}_\ell(q,s) & \rightarrow & \circlearrowleft_\ell(\texttt{R}) \\
    & ~\underline{\textbf{t}}~ & \mapsto &\underset{\substack{a\in\texttt{Cl}_{q
    }(\ell)}}{\bigoplus}\mathcal{C}_{\textgoth{t}_a}.
\end{array}
\end{align}
is a bijection between the sets $\mathcal{A}_\ell(q,s)$ and
$\circlearrowleft_\ell(\texttt{R}).$ Let $\textbf{\textbf{t}}$ be
the multi-index associate to a $\texttt{R}$-linear cyclic code
$\mathcal{C}$,  the integers
$k_j=|\{a\in\texttt{Cl}_{q}(\ell)\,|\, \textgoth{t}_{a}=j\}|$ with $0\leq j\leq s-1$
determine the type
$(\ell;k_0,k_1,\cdots,k_{s-1})$ of $\mathcal{C}.$  Moreover,
$\mathcal{C}_a$ is a minimal free $\texttt{R}$-linear cyclic  code
of $\texttt{R}$-rank $z_a.$ In the rest of the paper  we will face
this question:
\begin{quote}
\textit{Let $\mathcal{C}$ be an $\texttt{R}$-linear cyclic code of
length $\ell$ How one can construct an $\texttt{S}$-linear cyclic
code $\mathcal{B}$ of length $\ell,$ such that
$\mathcal{C}=\texttt{Res}_\texttt{R}(\mathcal{B})$ and
$\mathcal{B}$ is Galois invariant? }\end{quote}

\noindent Consider the set $A:=\{a_1,a_2,\cdots,a_k\}\subseteq
\{0,1,\cdots, \ell-1\}$ and the evaluation $\texttt{ev}_\xi$ in
$\underline{\xi}:=(1,\xi,\xi^2,\cdots,\xi^{\ell-1})$ defined by
$$\begin{array}{cccc}
  \texttt{ev}_\xi: & \mathcal{P}(A) & \rightarrow & \texttt{S}^\ell \\
   & f & \mapsto &
   (f(1),f(\xi),\cdots,f(\xi^{\ell-1})).
\end{array}$$
The  $\texttt{R}$-module $\mathcal{P}(A)$ is  free and spanned by
$\{x^a\,|\,a\in A\}.$ Thus $\texttt{ev}_\xi(\mathcal{P}(A))$ is
the free $\texttt{S}$-linear code $\mathcal{B}(A)$ with generator matrix \begin{align}\label{won} W_A:=\left(%
\begin{array}{cccc}
  1 & \xi^{a_1} & \cdots & \xi^{(\ell-1)a_1} \\
  \vdots & \vdots &   & \vdots \\
  1 & \xi^{a_k} & \cdots & \xi^{(\ell-1)a_k}
\end{array}%
\right)\end{align}
and the subset $A$
of $\{0,1,\cdots,\ell-1\},$ is called \emph{defining set} of
$\mathcal{B}(A).$

 Let
$u\in\{0,1,\cdots,\ell\},$ the set of\emph{multiples} of $u$ is
 $uA:=\{ua\,\texttt{mod}\,\ell\,|\,a\in A\}$, the
\emph{opposite} of $A$ is $-A:=\{\ell-1-a\,|\,a\in A\}$ and  a
subset $A$ is said \emph{$q$-invariant } if $A=qA.$
 The
$q$-closure of $A$ is
$$\widetilde{A}:=\underset{a\in A}{\cup
}Z_a.$$ It is clear that the $q$-closure of $A$ is the smallest
$q$-invariant subset of $\{0,1,\cdots,\ell-1\}$ contained $A.$ The
\emph{complementary} of $A$ is
$\overline{A}:=\{a\in\{0,1,\cdots,\ell-1\}\,|\,a\not\in A\}.$

\begin{Proposition}\label{cyclic} Let $A$ be a subset of $\{0,1,\cdots,\ell-1\}.$ Then $\mathcal{B}(A)$ is cyclic and its generator polynomial
 is $\prod\limits_{a\in \overline{A}}(x-\xi^{-a}).$
\end{Proposition}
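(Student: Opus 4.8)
\emph{Proof sketch.} The plan is to transfer everything to the quotient ring $\mathcal{S}_\ell:=\texttt{S}[x]/\langle x^\ell-1\rangle$, via the $\texttt{S}$-linear isomorphism $\Psi:\texttt{S}^\ell\to\mathcal{S}_\ell$ which is the obvious analogue over $\texttt{S}$ of the map $\Psi$ introduced above; as in the base case, an $\texttt{S}$-linear code $\mathcal{C}$ is cyclic exactly when $\Psi(\mathcal{C})$ is an ideal of $\mathcal{S}_\ell$. Two elementary facts about $\mathcal{S}_\ell$ will be used, both inherited from the residue field through the factorization $x^\ell-1=\prod_{i=0}^{\ell-1}(x-\xi^i)$. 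First, $\ell$ is a unit of $\texttt{S}$, since $(\ell,q)=1$ with $q=p^n$ forces $p\nmid\ell$, so $\ell$ is non-zero in the residue field. Second, the factors $x-\xi^i$ are pairwise comaximal in $\texttt{S}[x]$: for $i\neq j$ the element $\xi^i-\xi^j=\xi^j(\xi^{i-j}-1)$ is a unit, because $\pi(\xi)$ is a primitive $\ell$-th root of unity so $\pi(\xi)^{i-j}-1\neq0$. By the Chinese Remainder Theorem this produces a surjective ring isomorphism
\[
\mathcal{S}_\ell\;\xrightarrow{\ \sim\ }\;\prod_{i=0}^{\ell-1}\texttt{S}[x]/\langle x-\xi^i\rangle,\qquad h\longmapsto\bigl(h(\xi^0),\ldots,h(\xi^{\ell-1})\bigr);
\]
in particular, for any $T\subseteq\{0,\ldots,\ell-1\}$ the ideal $\{h\in\mathcal{S}_\ell\mid h(\xi^i)=0\text{ for all }i\in T\}$ equals $\bigl\langle\prod_{i\in T}(x-\xi^i)\bigr\rangle$, and this product is a monic divisor of $x^\ell-1$.

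First I would pin down $\Psi(\mathcal{B}(A))$. Put $u_a:=\Psi(\texttt{ev}_\xi(x^a))=\sum_{j=0}^{\ell-1}\xi^{ja}x^j$ for $0\le a<\ell$. These are the $\Psi$-images of the rows of the full $\ell\times\ell$ matrix $(\xi^{ja})$, whose determinant is the Vandermonde product $\prod_{0\le a<a'<\ell}(\xi^{a'}-\xi^a)$, a unit of $\texttt{S}$; hence $\{u_0,\ldots,u_{\ell-1}\}$ is an $\texttt{S}$-basis of $\mathcal{S}_\ell$, and $\Psi(\mathcal{B}(A))=\sum_{a\in A}\texttt{S}\,u_a$ is the direct summand spanned by the $u_a$ with $a\in A$. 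A one-line reindexing using $x^\ell=1$ gives, for all $a,b$,
\[
u_a(\xi^{-b})=\sum_{j=0}^{\ell-1}\xi^{j(a-b)}=\ell\,\delta_{a,b},
\]
the sum vanishing when $a\not\equiv b\pmod{\ell}$ because then $\xi^{a-b}-1$ is a unit while $(\xi^{a-b}-1)\sum_{j}\xi^{j(a-b)}=\xi^{\ell(a-b)}-1=0$. Thus, writing $\Psi(\textbf{c})=\sum_{a=0}^{\ell-1}\lambda_a u_a$ with $\lambda_a\in\texttt{S}$, we have $\textbf{c}\in\mathcal{B}(A)$ iff $\lambda_b=0$ for all $b\in\overline{A}$, while evaluating at $\xi^{-b}$ gives $\Psi(\textbf{c})(\xi^{-b})=\ell\,\lambda_b$; since $\ell$ is a unit this shows
\[
\Psi(\mathcal{B}(A))=\{h\in\mathcal{S}_\ell\mid h(\xi^{-b})=0\text{ for all }b\in\overline{A}\}.
\]

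The right-hand side is patently an ideal of $\mathcal{S}_\ell$, being the kernel of the ring homomorphism $h\mapsto(h(\xi^{-b}))_{b\in\overline{A}}$; hence $\mathcal{B}(A)$ is cyclic. Moreover $\{\xi^{-b}:b\in\overline{A}\}$ is a subset of $\{\xi^0,\ldots,\xi^{\ell-1}\}$ with pairwise differences units, so the CRT description above — applied with $T=\{(\ell-b)\bmod\ell\mid b\in\overline{A}\}$, using $h(\xi^{-b})=h(\xi^{(\ell-b)\bmod\ell})$ — identifies this ideal with $\bigl\langle\prod_{b\in\overline{A}}(x-\xi^{-b})\bigr\rangle$. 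Therefore $\mathcal{B}(A)$ is cyclic with generator polynomial $\prod_{a\in\overline{A}}(x-\xi^{-a})$, as claimed. The one point requiring genuine care is the transfer of the classical DFT/CRT picture over $\mathbb{F}_{q^m}$ to the chain ring $\texttt{S}$; this causes no trouble precisely because the scalars that must be inverted — namely $\ell$ and the differences $\xi^i-\xi^j$ for $i\neq j$ — are units of $\texttt{S}$.
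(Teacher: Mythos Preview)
Your proof is correct. Both your argument and the paper's rest on the same orthogonality relation $\sum_{j=0}^{\ell-1}\xi^{j(a-b)}=\ell\,\delta_{a,b}$, which holds in $\texttt{S}$ precisely because $\ell$ and the differences $\xi^i-\xi^j$ ($i\neq j$) are units. The packaging differs slightly: the paper first establishes cyclicity by an explicit one-step shift computation (showing that the cyclic shift of $\texttt{ev}_\xi(f)$ is $\texttt{ev}_\xi(g)$ for a suitable $g\in\mathcal{P}(A)$), and then identifies the generator polynomial by locating the common roots of the polynomials $\Psi(\textbf{c}_f)$ among the $\ell$-th roots of unity, using a rank count to conclude. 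You instead set up the CRT decomposition $\mathcal{S}_\ell\cong\texttt{S}^\ell$, recognize $\Psi(\mathcal{B}(A))$ as the kernel of the evaluation ring homomorphism (hence an ideal, so cyclicity comes for free), and read off the generator directly from CRT. Your route is a bit more structural and avoids the separate shift verification; the paper's is marginally more elementary in that it never names CRT explicitly. The underlying computation is identical.
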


\begin{proof}
Consider the codeword $\textbf{c}_f =\texttt{ev}_\xi(f)=
\left(f(0); \cdots;f(\xi^{\ell-2}); f(\xi^{\ell-1})\right)$
determined by $f(x) = \sum\limits_{i=1}^k
f_ix^{a_i}\in\mathcal{P}(A).$ For $g(x) = \sum\limits_{i=1}^k
f_i\xi^{-a_i}x^{a_i}\in \mathcal{P}(A)$ and $ \left(g(0); g(\xi);
\cdots ; g(\xi^{\ell-1})\right)$ is the shift of $\textbf{c}_f$ and therefore
$\mathcal{B}(A)$ is a cyclic code. On the other hand we have
$\Psi(\textbf{c}_f)=\sum\limits_{j=1}^{\ell-1}f(\xi^j)x^j$
and
\begin{align*}
  \Psi(\textbf{c}_f)(\xi^{a}) = \sum\limits_{i=1}^kf_i\xi^{-a_i}\left(\sum\limits_{j=1}^{\ell-1}\xi^{j(a_i+a)}\right)
     \ell\sum\limits_{i=1}^kf_i\xi^{-a_i}\delta_{-a_i,a},\quad a=0,\cdots,\ell-1.
\end{align*}
Thus $a\in -\overline{A}$ if and only if
$\Psi(\textbf{c}_f)(\xi^{a})=0$ and therefore
$\Psi(\textbf{c}_f)(\xi^{a})=\left(\prod\limits_{a\in
\overline{A}}(x-\xi^{-a})\right)f(x).$ Note that $\mathcal{B}(A)$ is
an $\texttt{S}$-free module of rank $|A|$ and $\Psi$ is an
$\texttt{S}$-module isomorphism, thus $\prod\limits_{a\in
\overline{A}}(x-\xi^{-a})$ is the generator polynomial of
$\mathcal{B}(A).$

\end{proof}

It is easy to check that
$\sum\limits_{j=0}^{\ell-1}\xi^{ij}=\ell\delta_{i,0},$ for all
$i=0,1,\cdots,\ell-1$, therefore the following result holds.

\begin{Lemma}\label{dual} Let $A$ and $B$ be two subsets of  $\{0,1,\cdots,\ell-1\}.$ Then
\begin{enumerate}
    \item $A\subseteq B$ if and only if
$\mathcal{B}(A)\subseteq\mathcal{B}(B);$
    \item $A\cap(-B)=\emptyset$ if and only if $\mathcal{B}(A)\perp\mathcal{B}(B).$
\end{enumerate}

\end{Lemma}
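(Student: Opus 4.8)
The plan is to derive both equivalences from the single observation that the evaluation vectors $\textbf{e}_a:=\texttt{ev}_\xi(x^{a})$, $0\le a\le\ell-1$, form a free $\texttt{S}$-basis of $\texttt{S}^{\ell}$. This amounts to $W_{\{0,1,\dots,\ell-1\}}=(\xi^{ij})_{0\le i,j\le\ell-1}\in\texttt{GL}_\ell(\texttt{S})$, which I would check via the Vandermonde determinant $\det W_{\{0,1,\dots,\ell-1\}}=\prod_{0\le i<j\le\ell-1}(\xi^{j}-\xi^{i})$: each factor $\xi^{j}-\xi^{i}$ with $i\neq j$ is a unit of $\texttt{S}$, since $\pi(\xi^{j}-\xi^{i})=\pi(\xi)^{j}-\pi(\xi)^{i}\neq0$ in $\mathbb{F}_{q^m}$ ($\pi(\xi)$ being a primitive root of $x^{\ell}-1$, so $\pi(\xi)^{j-i}\neq1$ for $1\le|j-i|\le\ell-1$), hence the whole product is a unit. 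It follows that for every $T\subseteq\{0,\dots,\ell-1\}$ one has $\mathcal{B}(T)=\texttt{row}(W_T)=\bigoplus_{a\in T}\texttt{S}\,\textbf{e}_a$.

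For the first equivalence, the implication $A\subseteq B\Rightarrow\mathcal{B}(A)\subseteq\mathcal{B}(B)$ is immediate, because then the generating set $\{\textbf{e}_a:a\in A\}$ of $\mathcal{B}(A)$ is contained in the generating set $\{\textbf{e}_b:b\in B\}$ of $\mathcal{B}(B)$. Conversely, assume $\mathcal{B}(A)\subseteq\mathcal{B}(B)$ and let $a\in A$; then $\textbf{e}_a\in\mathcal{B}(B)$, so $\textbf{e}_a=\sum_{b\in B}\lambda_b\textbf{e}_b$ for suitable $\lambda_b\in\texttt{S}$. If $a\notin B$ this exhibits $\textbf{e}_a$ as an $\texttt{S}$-combination of basis vectors $\textbf{e}_c$ with $c\neq a$, contradicting the linear independence of $\{\textbf{e}_c:0\le c\le\ell-1\}$; hence $a\in B$, and $A\subseteq B$.

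For the second equivalence I would compute the Euclidean products of the generators. Using $\xi^{\ell}=1$ and the identity $\sum_{j=0}^{\ell-1}\xi^{ij}=\ell\,\delta_{i,0}$ recalled just before the lemma (valid for every integer $i$ by periodicity in $i$), $(\textbf{e}_a,\textbf{e}_b)_\texttt{E}=\sum_{j=0}^{\ell-1}\xi^{j(a+b)}$ equals $\ell\cdot 1_\texttt{S}$ if $\ell\mid a+b$ and $0$ otherwise; moreover $\ell\cdot 1_\texttt{S}$ is a unit of $\texttt{S}$ because $(\ell,q)=1$, so it is nonzero. Hence $(\textbf{e}_a,\textbf{e}_b)_\texttt{E}=0$ precisely when $\ell\nmid a+b$, i.e. when $b$ is not the opposite of $a$ modulo $\ell$. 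By bilinearity of $(-,-)_\texttt{E}$, $\mathcal{B}(A)\perp\mathcal{B}(B)$ holds if and only if $(\textbf{e}_a,\textbf{e}_b)_\texttt{E}=0$ for all $a\in A$ and $b\in B$, which by the previous sentence is exactly the condition $A\cap(-B)=\emptyset$.

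Both parts are then routine; the one step that genuinely needs care is the opening one --- showing that the Vandermonde evaluation matrix is invertible over the chain ring $\texttt{S}$, i.e. that differences of distinct powers of $\xi$ are units --- and this is precisely where the standing hypothesis $(\ell,q)=1$ enters, since it forces $\pi(\xi)$ to have multiplicative order exactly $\ell$ (so those differences survive reduction modulo the maximal ideal) and it also makes $\ell\cdot 1_\texttt{S}$ a unit, as needed in the second part.
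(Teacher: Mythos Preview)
Your proof is correct and matches the paper's approach: the paper merely records the identity $\sum_{j=0}^{\ell-1}\xi^{ij}=\ell\,\delta_{i,0}$ immediately before the lemma and leaves the rest to the reader, while you supply the details, including the Vandermonde invertibility over $\texttt{S}$ that yields the free basis $\{\textbf{e}_a\}$ (a fact the paper uses but does not prove). Note only that your reading of $-B$ as negation modulo $\ell$ is the one that makes part~(2) true and is consistent with how $-A$ is used elsewhere in the paper, even though the paper's displayed definition $-A:=\{\ell-1-a:a\in A\}$ appears to be a misprint.
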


Consider $2^{\{0,1,\cdots,\ell-1\}}$ the set of the subsets of
$\{0,1,\cdots,\ell-1\}$ and $\circlearrowleft_\ell(\texttt{S})$
the set of cyclic codes over $\texttt{S}$ of length $\ell$, from  {Lemma}\,\ref{dual} we get the following.

\begin{Corollary}\label{dual} For $~\textgoth{t}=0,1,\cdots,s,$ the map
\begin{align}
\begin{array}{cccc}
   \mathcal{B}_\textgoth{t}: & 2^{\{0,1,\cdots,\ell-1\}} & \rightarrow & \circlearrowleft_\ell(\texttt{S}) \\
    &  A & \mapsto & \theta^\textgoth{t}\mathcal{B}(A),
\end{array}
\end{align}
is a monomorphism of lattices. Moreover,
$\mathcal{B}_\textgoth{t}(A)$ decomposes as
$$\mathcal{B}_\textgoth{t}(A)=\underset{a\in\texttt{Cl}_q(\ell)}{\oplus
}\mathcal{B}_\textgoth{t}(A\cap Z_a)$$ and
$\mathcal{B}_\textgoth{t}(A)^\perp=\theta^{s-t}\mathcal{B}(-\overline{A})$
for any subset $A\subseteq\{0,1,\cdots,\ell-1\}.$
\end{Corollary}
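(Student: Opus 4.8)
The plan is to verify the three assertions of Corollary~\ref{dual} (monomorphism of lattices, direct-sum decomposition, and the dual formula) by reducing everything to the already-established facts about the codes $\mathcal{B}(A)$. First I would check that $\mathcal{B}_\textgoth{t}$ is well-defined: by Proposition~\ref{cyclic} each $\mathcal{B}(A)$ is cyclic, and multiplying by the scalar $\theta^\textgoth{t}$ preserves cyclicity and $\texttt{S}$-linearity, so $\theta^\textgoth{t}\mathcal{B}(A)\in\circlearrowleft_\ell(\texttt{S})$. For injectivity, observe that $\theta^\textgoth{t}\mathcal{B}(A)=\theta^\textgoth{t}\mathcal{B}(B)$ forces, after comparing generator polynomials or using the containment criterion of Lemma~\ref{dual}(1) applied in both directions, that $A=B$; the key point is that the rows of $W_A$ in \eqref{won} are $\texttt{S}$-independent (a Vandermonde-type argument using that the $\xi^{a_i}$ are distinct in $\Gamma(\texttt{S})$), so distinct defining sets give genuinely distinct codes. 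For the lattice-morphism property I would use Lemma~\ref{dual}(1) together with the elementary identities $\mathcal{B}(A\cap B)=\mathcal{B}(A)\cap\mathcal{B}(B)$ and $\mathcal{B}(A\cup B)=\mathcal{B}(A)\vee\mathcal{B}(B)$, which follow because $\mathcal{P}(A\cup B)=\mathcal{P}(A)+\mathcal{P}(B)$ and $\mathcal{P}(A\cap B)=\mathcal{P}(A)\cap\mathcal{P}(B)$ as submodules of the polynomial module spanned by the monomials $x^a$, and $\texttt{ev}_\xi$ is $\texttt{R}$-linear and injective; scaling by $\theta^\textgoth{t}$ commutes with $\cap$ and $\vee$ on these free pieces.

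Next I would prove the decomposition $\mathcal{B}_\textgoth{t}(A)=\bigoplus_{a\in\texttt{Cl}_q(\ell)}\mathcal{B}_\textgoth{t}(A\cap Z_a)$. Since $A$ is the disjoint union of the pieces $A\cap Z_a$ over $a\in\texttt{Cl}_q(\ell)$, the monomial basis of $\mathcal{P}(A)$ splits accordingly, so $\mathcal{B}(A)=\sum_a \mathcal{B}(A\cap Z_a)$, and this sum is direct because by Proposition~\ref{cyclic} the code $\mathcal{B}(A\cap Z_a)$ has generator polynomial $\prod_{b\in\overline{A\cap Z_a}}(x-\xi^{-b})$; these correspond, via $\Psi$, to the ideals supported on disjoint cyclotomic blocks, i.e. each $\mathcal{B}(A\cap Z_a)$ lands inside the component $\mathcal{C}_a$ of the decomposition $\mathcal{R}_\ell=\bigoplus_a\Psi(\mathcal{C}_a)$, whose pairwise products vanish. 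Multiplying through by $\theta^\textgoth{t}$ preserves directness of the sum.

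Finally, for the dual formula $\mathcal{B}_\textgoth{t}(A)^\perp=\theta^{s-\textgoth{t}}\mathcal{B}(-\overline{A})$, I would argue in two steps. By Lemma~\ref{dual}(2), $\mathcal{B}(A)\perp\mathcal{B}(-\overline{A})$ since $A\cap(-(-\overline{A}))=A\cap\overline{A}=\emptyset$; moreover $\mathcal{B}(A)$ is free of rank $|A|$ and $\mathcal{B}(-\overline{A})$ is free of rank $|\overline{A}|=\ell-|A|$, so $\mathcal{B}(-\overline{A})=\mathcal{B}(A)^\perp$ exactly (the free dual of a free cyclic code of this shape). Then the effect of the scalar $\theta^\textgoth{t}$ on the Euclidean dual is the standard fact that for a free code $\mathcal{C}$ one has $(\theta^\textgoth{t}\mathcal{C})^\perp=\theta^{s-\textgoth{t}}\mathcal{C}^\perp$, which can be read off from the type transformation in \cite[Theorem 3.10\,(ii)]{NS00} already cited in the paper, or checked directly by noting $\varphi(\theta^\textgoth{t}\textbf{a},\textbf{b})=0$ for all $\textbf{a}\in\mathcal{C}$ iff $\theta^\textgoth{t}\varphi(\textbf{a},\textbf{b})=0$ iff $\varphi(\textbf{a},\textbf{b})\in\theta^{s-\textgoth{t}}\texttt{S}$. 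Combining, $\mathcal{B}_\textgoth{t}(A)^\perp=(\theta^\textgoth{t}\mathcal{B}(A))^\perp=\theta^{s-\textgoth{t}}\mathcal{B}(A)^\perp=\theta^{s-\textgoth{t}}\mathcal{B}(-\overline{A})$.

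The main obstacle I anticipate is not any single step but keeping the bookkeeping honest: namely, justifying carefully that $\texttt{ev}_\xi$ restricted to $\mathcal{P}(A)$ is injective with image exactly $\texttt{row}(W_A)$, and that $W_A$ is full-rank over $\texttt{S}$, since the whole identification of lattice operations on defining sets with lattice operations on codes rests on that. Once the freeness and the $q$-cyclotomic block structure are pinned down, the decomposition and the duality formula are formal consequences of results already in the paper.
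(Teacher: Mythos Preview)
Your treatment of the lattice-monomorphism and the direct-sum decomposition is correct and is exactly the fleshing-out the paper intends when it writes that the corollary follows from Lemma~\ref{dual}; the paper itself gives no further argument, so on those two points you and the paper agree.

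The dual-formula step, however, contains a genuine error. The identity $(\theta^\textgoth{t}\mathcal{C})^\perp=\theta^{s-\textgoth{t}}\mathcal{C}^\perp$ you invoke for a free code $\mathcal{C}$ is false for $0<\textgoth{t}<s$, and your own direct check reveals why: your chain of equivalences ends with the condition ``$\varphi(\textbf{a},\textbf{b})\in\theta^{s-\textgoth{t}}\texttt{S}$ for all $\textbf{a}\in\mathcal{C}$'', but that condition is satisfied by every $\textbf{b}\in\mathcal{C}^\perp$ (where $\varphi(\textbf{a},\textbf{b})=0$), so $\mathcal{C}^\perp\subseteq(\theta^\textgoth{t}\mathcal{C})^\perp$ always, whereas $\mathcal{C}^\perp\not\subseteq\theta^{s-\textgoth{t}}\mathcal{C}^\perp$ unless $\textgoth{t}=0$ or $\mathcal{C}^\perp=\{\textbf{0}\}$. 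The type computation you cite confirms the mismatch: for $\mathcal{C}$ free of rank $k$ the code $\theta^\textgoth{t}\mathcal{C}$ has type $(\ell;0,\ldots,k,\ldots,0)$, and by \cite[Theorem 3.10\,(ii)]{NS00} its dual has type $(\ell;\ell-k,0,\ldots,k,\ldots,0)$ with the second $k$ in slot $s-\textgoth{t}$, which is not the type of $\theta^{s-\textgoth{t}}\mathcal{C}^\perp$. In fact the formula as stated already fails the sanity check $\textgoth{t}=0$, where it would give $\mathcal{B}(A)^\perp=\theta^{s}\mathcal{B}(-\overline{A})=\{\textbf{0}\}$. What your orthogonality-plus-rank argument via Lemma~\ref{dual}(2) does establish correctly is $\mathcal{B}(A)^\perp=\mathcal{B}(-\overline{A})$; combining this with the correct identity $(\theta^\textgoth{t}\mathcal{C})^\perp=\mathcal{C}^\perp+(\theta^{s-\textgoth{t}}\texttt{S})^\ell$ yields $\mathcal{B}_\textgoth{t}(A)^\perp=\mathcal{B}(-\overline{A})+\theta^{s-\textgoth{t}}\texttt{S}^\ell=\mathcal{B}(-\overline{A})\oplus\theta^{s-\textgoth{t}}\mathcal{B}(-A)$, so the dual clause in the corollary needs to be amended rather than proved as written.
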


For $\textgoth{t}=0,1,\cdots,s,$ we have
$\mathcal{B}_\textgoth{t}(\emptyset)=\{\textbf{0}\}$ and
$\mathcal{B}_\textgoth{t}(\{0,1,\cdots,\ell-1\})=(\texttt{S}\theta^{\textgoth{t}})^\ell$,
and  the following properties of the code
$\mathcal{B}_{\textgoth{t}}(A)$ hold.

\begin{Theorem}\label{main} Let $A\subseteq \{0,1,\cdots,\ell-1\}$, then $\mathcal{B}_\textgoth{t}(A)$ is Galois invariant
if and only if
$A$ is $q$-invariant.
\end{Theorem}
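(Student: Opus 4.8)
The plan is to establish both implications by reducing everything to the behaviour of the generator matrix $W_A$ (or rather the row standard form of $\mathcal{B}_\textgoth{t}(A)$) under the generator $\sigma$ of $G$. First I would observe that $\sigma$ acts on the evaluation vectors coordinatewise, and since $\xi\in\Gamma(\texttt{S})$ we have $\sigma(\xi)=\xi^{q}$; hence for the $a$-th generator row $\textbf{r}_a=(1,\xi^{a},\xi^{2a},\cdots,\xi^{(\ell-1)a})$ of $W_A$ one gets $\sigma(\textbf{r}_a)=(1,\xi^{qa},\xi^{2qa},\cdots,\xi^{(\ell-1)qa})=\textbf{r}_{qa\bmod\ell}$. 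Multiplying by $\theta^\textgoth{t}$ (which is fixed by $\sigma$) this shows that $\sigma$ permutes the generating set $\{\theta^\textgoth{t}\textbf{r}_a : a\in A\}$ of $\mathcal{B}_\textgoth{t}(A)$ exactly according to multiplication by $q$ on the index set $A$. Therefore $\sigma(\mathcal{B}_\textgoth{t}(A))=\theta^\textgoth{t}\mathcal{B}(qA)=\mathcal{B}_\textgoth{t}(qA)$.

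For the ``if'' direction, if $A$ is $q$-invariant, i.e. $qA=A$, the identity $\sigma(\mathcal{B}_\textgoth{t}(A))=\mathcal{B}_\textgoth{t}(qA)=\mathcal{B}_\textgoth{t}(A)$ is immediate, and since $G=\langle\sigma\rangle$ this gives $\tau(\mathcal{B}_\textgoth{t}(A))=\mathcal{B}_\textgoth{t}(A)$ for all $\tau\in G$, so $\mathcal{B}_\textgoth{t}(A)$ is Galois invariant by {Definition}~\ref{G-inv}. For the ``only if'' direction, suppose $\mathcal{B}_\textgoth{t}(A)$ is Galois invariant; then $\mathcal{B}_\textgoth{t}(qA)=\sigma(\mathcal{B}_\textgoth{t}(A))=\mathcal{B}_\textgoth{t}(A)$. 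Now I invoke that $\mathcal{B}_\textgoth{t}$ is a monomorphism of lattices ({Corollary}~\ref{dual}), so injectivity forces $qA=A$, i.e. $A$ is $q$-invariant. (One can equivalently argue via the generator polynomial from {Proposition}~\ref{cyclic}: the generator polynomial $\prod_{a\in\overline{A}}(x-\xi^{-a})$ has coefficients in $\texttt{R}$ precisely when $\overline{A}$, equivalently $A$, is a union of $q$-cyclotomic cosets, then combine with {Theorem}~\ref{thm0}.)

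The one point that needs a little care — and which I expect to be the main obstacle — is the claim $\sigma(\xi)=\xi^{q}$ and, more importantly, that distinct $a\in\{0,\dots,\ell-1\}$ give distinct rows $\textbf{r}_a$ so that ``$\mathcal{B}_\textgoth{t}$ injective $\Rightarrow qA=A$'' really does what I want at the level of index sets rather than merely at the level of codes. This follows from the standing hypothesis $(\ell,q)=1$ together with the choice of $\xi\in\Gamma(\texttt{S})$ with $\xi^\ell=1$ and $\xi^i\neq 1$ for $0<i<\ell$: under the canonical projection $\pi$, the element $\pi(\xi)$ is a primitive $\ell$-th root of unity in $\mathbb{F}_{q^m}$, the Frobenius $x\mapsto x^q$ acts on powers of $\pi(\xi)$ by $q$-multiplication of exponents modulo $\ell$, and this lifts through the Teichmüller set since $\sigma$ is defined coordinatewise on $\theta$-adic decompositions with $\gamma_0(\xi)^q$ as leading term — so $\sigma$ fixes $\Gamma(\texttt{S})$ setwise and acts as $q$-th power there. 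Granting that, the injectivity of $\mathcal{B}_\textgoth{t}$ from {Corollary}~\ref{dual} closes the argument cleanly.
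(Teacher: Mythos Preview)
Your proposal is correct and follows essentially the same approach as the paper: the paper's proof consists of the single observation that $\sigma(\mathcal{B}_\textgoth{t}(A))=\mathcal{B}_\textgoth{t}(qA)$, and you have supplied the details behind that identity (via $\sigma(\xi)=\xi^{q}$ on the Teichm\"uller set and Lemma~\ref{in-du}) together with the explicit use of the injectivity of $\mathcal{B}_\textgoth{t}$ from Corollary~\ref{dual} to recover $qA=A$.
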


\begin{proof} Just note that
$\sigma(\mathcal{B}_\textgoth{t}(A))=\mathcal{B}_\textgoth{t}(qA),$
for all $\sigma\in G.$
\end{proof}

  The following result extends \cite[Theorem 5]{Bie02} to finite chain rings.

\begin{Corollary}\label{Gclos}  Let $A\subseteq \{0,1,\cdots,\ell-1\}$, then
$\mathcal{B}\left(\widetilde{A}\right)$ is the Galois closure of
$\mathcal{B}(A).$
\end{Corollary}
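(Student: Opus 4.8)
The plan is to compute the Galois closure of $\mathcal{B}(A)$ directly from {Proposition}\,\ref{clo} and to identify the result with $\mathcal{B}(\widetilde{A})$. The tools I would use are: the description $\widetilde{\mathcal{B}}=\bigvee_{\sigma\in G}\sigma(\mathcal{B})$ from {Proposition}\,\ref{clo}; the identity $\sigma(\mathcal{B}_\textgoth{t}(A))=\mathcal{B}_\textgoth{t}(qA)$ established in the proof of {Theorem}\,\ref{main}; the fact that $A\mapsto\mathcal{B}(A)$ is a monomorphism of lattices (so it preserves finite joins); and the combinatorial identity $\bigcup_{j\ge0}q^{j}A=\bigcup_{a\in A}Z_{a}=\widetilde{A}$.

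Concretely, I would argue as follows. By {Proposition}\,\ref{clo}, $\widetilde{\mathcal{B}(A)}=\bigvee_{\sigma\in G}\sigma(\mathcal{B}(A))$. Iterating the identity $\sigma(\mathcal{B}_\textgoth{t}(A))=\mathcal{B}_\textgoth{t}(qA)$ (with $\textgoth{t}=0$) gives $\sigma^{j}(\mathcal{B}(A))=\mathcal{B}(q^{j}A)$ for all $j$, and since $G=\langle\sigma\rangle$ has order $m=\texttt{ord}_\ell(q)$ we obtain $\widetilde{\mathcal{B}(A)}=\bigvee_{j=0}^{m-1}\mathcal{B}(q^{j}A)$. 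Since $A\mapsto\mathcal{B}(A)$ is a lattice morphism it sends the join (i.e. union) $\bigcup_{j=0}^{m-1}q^{j}A$ in $2^{\{0,1,\cdots,\ell-1\}}$ to the join $\bigvee_{j=0}^{m-1}\mathcal{B}(q^{j}A)$, so $\widetilde{\mathcal{B}(A)}=\mathcal{B}\bigl(\bigcup_{j=0}^{m-1}q^{j}A\bigr)$. Finally, $q^{m}\equiv 1\pmod{\ell}$ makes the sequence $(q^{j}A)_{j}$ periodic with period dividing $m$, whence $\bigcup_{j=0}^{m-1}q^{j}A=\bigcup_{j\ge0}q^{j}A=\bigcup_{a\in A}Z_{a}=\widetilde{A}$, and therefore $\widetilde{\mathcal{B}(A)}=\mathcal{B}(\widetilde{A})$, which is exactly the claim.

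The only step that is not purely formal is the use of the lattice-morphism property, i.e. that $\mathcal{B}_0\bigl(\bigcup_{j}q^{j}A\bigr)=\bigvee_{j}\mathcal{B}_0(q^{j}A)$; this is precisely the assertion that $\mathcal{B}_0$ preserves the join operation, which is included in its being a morphism of lattices. Everything else is bookkeeping with $q$-cyclotomic cosets. As a cross-check, one can instead establish the two inclusions separately: $\mathcal{B}(\widetilde{A})$ is Galois invariant by {Theorem}\,\ref{main} (since $\widetilde{A}$ is $q$-invariant) and contains $\mathcal{B}(A)$, so $\widetilde{\mathcal{B}(A)}\subseteq\mathcal{B}(\widetilde{A})$ by the minimality in {Definition}\,\ref{defi}; and the reverse inclusion follows from $q^{j}A\subseteq\widetilde{A}$ for all $j$ together with the formula $\widetilde{\mathcal{B}(A)}=\bigvee_{j}\mathcal{B}(q^{j}A)$ above.
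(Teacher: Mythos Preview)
Your main argument is correct and is precisely the natural way to flesh out what the paper leaves implicit: the corollary is stated without proof, as an immediate consequence of Theorem~\ref{main} (more precisely, of the identity $\sigma(\mathcal{B}(A))=\mathcal{B}(qA)$ noted there) together with Proposition~\ref{clo} and the lattice-morphism property of $\mathcal{B}_0$ from Corollary~\ref{dual}. Your chain $\widetilde{\mathcal{B}(A)}=\bigvee_{j}\mathcal{B}(q^{j}A)=\mathcal{B}\bigl(\bigcup_{j}q^{j}A\bigr)=\mathcal{B}(\widetilde{A})$ is exactly right.

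One small slip in the optional cross-check: the sentence ``the reverse inclusion follows from $q^{j}A\subseteq\widetilde{A}$ for all $j$'' actually yields the \emph{same} direction again, since $q^{j}A\subseteq\widetilde{A}$ gives $\mathcal{B}(q^{j}A)\subseteq\mathcal{B}(\widetilde{A})$ and hence $\widetilde{\mathcal{B}(A)}=\bigvee_{j}\mathcal{B}(q^{j}A)\subseteq\mathcal{B}(\widetilde{A})$. For the genuine reverse inclusion $\mathcal{B}(\widetilde{A})\subseteq\widetilde{\mathcal{B}(A)}$ you want the converse observation: every $a\in\widetilde{A}$ lies in some $q^{j}A$, so each generating row of $\mathcal{B}(\widetilde{A})$ already sits inside some $\mathcal{B}(q^{j}A)\subseteq\widetilde{\mathcal{B}(A)}$. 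This does not affect the validity of your primary argument, which already proves equality outright.
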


Consider the set $\mathcal{G}^\circ_\ell(\texttt{S})$ of all the
$\texttt{S}$-linear cyclic codes of length $\ell,$ which are
Galois invariant. Then the map
\begin{align}\begin{array}{cccc}
 \mathcal{B} : & \mathcal{A}_\ell(q,s) & \rightarrow & \mathcal{G}^\circ_\ell(\texttt{S}) \\
    & ~\underline{\textbf{t}}~ & \mapsto
    &\underset{\substack{a\in\texttt{Cl}_{q}(\ell)}}{\bigoplus}\mathcal{B}_{\underline{\textbf{t}}_a}(Z_a).
\end{array}
\end{align}
is a bijection. We consider the
$\texttt{R}$-linear cyclic code $\Im_\textgoth{t}(A)$ defined by
\begin{align}\label{cons}\Im_\textgoth{t}(A):=\texttt{Tr}_\texttt{R}^\texttt{S}(\mathcal{B}_{s-\textgoth{t}}(A))^\perp,\end{align}
where $A$ is an $q$-invariant subset of $\{0,1,\cdots,\ell-1\}.$
According to {Theorem\,\ref{main}}, $\mathcal{B}\left(A\right)$ is
Galois invariant, and by
 {Remark\,\ref{rem}} and
 {Theorem\,\ref{ans1}} we have
$\Im_\textgoth{t}(A)=\texttt{Res}_\texttt{R}(\mathcal{B}_{\textgoth{t}}(A)^\perp),$
and by  {Corollary\,\ref{dual}}
 it follows that
$\Im_\textgoth{t}(A)=\texttt{Res}_\texttt{R}(\mathcal{B}_{s-\textgoth{t}}(-\overline{A})).$
Hence
$\Im_{\textgoth{t}_{a}}(\overline{Z_a})=\texttt{Res}_\texttt{R}\left(\mathcal{B}_{s-\textgoth{t}_a}(-Z_a)\right)$
and the bijection in Equation~(\ref{bij}) can be rewritten as
\begin{align}\begin{array}{cccc}
 \Im : & \mathcal{A}_\ell(q,s) & \rightarrow & \circlearrowleft_\ell(\texttt{R}) \\
    & \underline{\textbf{t}} & \mapsto & \underset{\substack{a\in\texttt{Cl}_{q}(\ell)}}{\bigoplus}\texttt{Res}_\texttt{R}\left(\mathcal{B}_{s-\textgoth{t}_a}(-Z_a)\right).
\end{array}
\end{align}
Consider now the $\texttt{S}$-linear cyclic code given by
\begin{align*}\mathcal{B}(~\underline{\textbf{t}~}~)^\perp:=\underset{\substack{a\in\texttt{Cl}_{q}(\ell)}}{\bigoplus}\mathcal{B}_{s-\textbf{t}_a}(-Z_a),\end{align*}
from  {Proposition\,\ref{ope}},
$\Im(~\underline{\textbf{t}}~)=\texttt{Res}_\texttt{R}\left(\mathcal{B}(~\underline{\textbf{t}~}~)^\perp\right)$ and
by {Theorem\,\ref{main}(2)}
$\mathcal{B}(~\underline{\textbf{t}~}~)^\perp$ is Galois
invariant.  The following theorem gives an answer to the previous
question.
\begin{Theorem}\label{end} For each $\underline{\textbf{t}}$ in
$\mathcal{A}_\ell(q,s)$ we have that $$
\Im(~\underline{\textbf{t}}~)=\texttt{Res}_\texttt{S}\left(\mathcal{B}(~\underline{\textbf{t}~}~)^\perp\right), \quad \texttt{rank}_{\texttt{R}} (\Im(~\underline{\textbf{t}}~) )=\sum_{i=1}^u z_{a_i},$$
and $$W_{\underline{\textbf{t}}}:=\left(%
\begin{array}{c}
  \theta^{s-\textbf{t}_{a_1}}W_{a_1}  \\
  \theta^{s-\textbf{t}_{a_2}}W_{a_2} \\
  \vdots \\
  \theta^{s-\textbf{t}_{a_u}}W_{a_u} \\
\end{array}%
\right)$$ is a generator matrix of
$\mathcal{B}(~\underline{\textbf{t}~~}~)^\perp$ where $W_{a_i}$'s
are generator matrices of $\mathcal{B}(-Z_{a_i})$'s in Equation~(\ref{won}).

\end{Theorem}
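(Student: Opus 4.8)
The plan is to assemble the statement from the machinery already built up, treating the three assertions in turn. For the identity $\Im(\underline{\textbf{t}})=\texttt{Res}_\texttt{S}\left(\mathcal{B}(\underline{\textbf{t}})^\perp\right)$ — which should of course read $\texttt{Res}_\texttt{R}$ — I would simply invoke Proposition~\ref{ope}: since $\mathcal{B}(\underline{\textbf{t}})^\perp=\bigoplus_{a\in\texttt{Cl}_{q}(\ell)}\mathcal{B}_{s-\textbf{t}_a}(-Z_a)$ is a direct sum of codes of the form $\mathcal{B}_{s-\textbf{t}_a}(-Z_a)$ with each $-Z_a$ being $q$-invariant (the $q$-cyclotomic cosets are exactly the $q$-invariant ``atoms''), Theorem~\ref{main} gives that each summand, hence the whole code, is Galois invariant; then Corollary~\ref{ans1} (equivalently Remark~\ref{rem}) yields $\texttt{Res}_\texttt{R}\left(\mathcal{B}(\underline{\textbf{t}})^\perp\right)=\texttt{Tr}_\texttt{R}^\texttt{S}\left(\mathcal{B}(\underline{\textbf{t}})^\perp\right)$, which is precisely the code $\Im(\underline{\textbf{t}})$ as rewritten just before the statement. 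So this part is a bookkeeping step combining Proposition~\ref{ope}, Theorem~\ref{main} and the Galois-invariance criterion.

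For the rank formula I would argue summand by summand. Because the decomposition $\mathcal{R}_\ell=\bigoplus_a\Psi(\mathcal{C}_a)$ is a direct sum of pairwise-annihilating ideals, the restriction and the $\texttt{R}$-rank distribute over the direct sum, so it suffices to compute $\texttt{rank}_\texttt{R}\left(\texttt{Res}_\texttt{R}\left(\mathcal{B}_{s-\textbf{t}_a}(-Z_a)\right)\right)$ for a single coset. Now $\mathcal{B}(-Z_a)=\mathcal{B}(Z_a)$ up to the opposite, which is a free $\texttt{S}$-linear code of rank $|Z_a|=z_a$ whose generator matrix $W_a$ has all the symmetry needed; multiplying by $\theta^{s-\textbf{t}_a}$ only scales it. The code $\mathcal{B}_{s-\textbf{t}_a}(-Z_a)$ is Galois invariant by Theorem~\ref{main}, so by Corollary~\ref{type-inv} its restriction has the same type, and in particular the same number of generators at each level; summing the free ranks over the cosets that actually contribute (indexed by $a_1,\dots,a_u$, i.e. the elements of $\texttt{Cl}_q(\ell)$ with $\textbf{t}_{a_i}\neq s$) gives $\sum_{i=1}^u z_{a_i}$. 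The one subtlety is to make sure the $\theta$-power does not change the $\texttt{R}$-rank: a free $\texttt{S}$-code of rank $r$ scaled by $\theta^{j}$ with $j<s$ is still of $\texttt{S}$-rank $r$, and being Galois invariant its restriction is of $\texttt{R}$-rank $r$ as well; this is exactly the content of the type-preservation corollaries.

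The generator-matrix claim is the most concrete part. I would observe that by Corollary~\ref{dual} each block $\mathcal{B}_{s-\textbf{t}_a}(-Z_a)=\theta^{s-\textbf{t}_a}\mathcal{B}(-Z_a)$ has generator matrix $\theta^{s-\textbf{t}_a}W_a$ with $W_a$ the Vandermonde-type matrix of Equation~(\ref{won}) built on the exponents in $-Z_a$; stacking these blocks gives $W_{\underline{\textbf{t}}}$ as displayed. What must be checked is that the stacked matrix indeed generates the direct sum $\bigoplus_a\mathcal{B}_{s-\textbf{t}_a}(-Z_a)$ and not some smaller code: this follows because the supports of the blocks in the spectral decomposition are disjoint (the $e_a$ are orthogonal idempotents, so $\mathcal{B}_{s-\textbf{t}_a}(-Z_a)=e_a\mathcal{R}_\ell$-components), hence the rows coming from different blocks are $\texttt{S}$-independent and their span is exactly the direct sum.

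\medskip
\textbf{Main obstacle.} The routine parts are genuinely routine; the one place that needs care is the rank computation for a single $\theta$-scaled block $\theta^{s-\textbf{t}_a}\mathcal{B}(-Z_a)$ — specifically arguing that restriction to $\texttt{R}$ preserves the $\texttt{S}$-rank here. This is where Galois invariance (Theorem~\ref{main}) together with the type-preservation statement (Corollary~\ref{type-inv}) does the real work, and one should be careful to treat the degenerate case $\textbf{t}_a=s$ (the zero block) separately so that only the $u$ cosets with $\textbf{t}_{a_i}\neq s$ enter the sum $\sum_{i=1}^u z_{a_i}$.
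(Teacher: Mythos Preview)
Your proposal is correct and follows essentially the same route as the paper: the paper does not give a separate proof block for this theorem, but the paragraph immediately preceding it assembles the first identity from Proposition~\ref{ope}, Theorem~\ref{main}, Remark~\ref{rem}/Corollary~\ref{ans1} and Corollary~\ref{dual}, exactly as you do, and the rank and generator-matrix claims are implicit in the earlier direct-sum decomposition and the fact that each $\mathcal{C}_a$ is free of $\texttt{R}$-rank $z_a$. Your use of Corollary~\ref{type-inv} to transfer the $\texttt{S}$-rank of each Galois-invariant block to its restriction is a clean way to make the rank computation explicit, and your flagging of the degenerate case $\textbf{t}_a=s$ is apt, since as literally written the sum $\sum_{i=1}^u z_{a_i}$ in the paper runs over all of $\texttt{Cl}_q(\ell)$ and would equal $\ell$; the intended reading is indeed the one you give.
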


{Theorem}\,\ref{end}  generalizes the construction of
\cite[Theorem 4.14]{NS00} to linear cyclic codes over finite chain
rings. Finally, it
is important to note {Theorem\,\ref{thm0}} implies that for a subset $A\subseteq\{0,1,\cdots,\ell-1\}$
 the matrix in Equation~(\ref{won}) verifies
$\texttt{RSF}(W_A)\in\texttt{R}^{|A|\times\ell}$ if and only if
$A$ is $q$-invariant.

 Finally we will show a BCH-like bound for the minimum Hamming distance ($d_H$) of this type of codes. A subset $I\subseteq\{0,1,\cdots,\ell-1\}$ is an \emph{interval} of length $v$ if
there exists $(u,w)\in\{0,1,\cdots,\ell-1\}^2$ such that
$(w,\ell)=1$ and
\begin{align}\label{interval}I=\biggl\{wu\,\texttt{mod}\,\ell;w(u+1)\,\texttt{mod}\,\ell;\cdots;w(u+v-1)\,\texttt{mod}\,\ell\biggr\}.\end{align}

\begin{Theorem}[BCH-bound] If $A$ is an interval of length
$v$ then $d_{\texttt{H}}\left(\Im_{\textgoth{t}}(A)\right)\geq
v+1.$
\end{Theorem}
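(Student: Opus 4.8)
The plan is to reduce the statement about $\Im_{\textgoth{t}}(A)$ to the classical BCH argument via the constructions already established. First I would recall from Equation~(\ref{cons}) and the discussion following it that $\Im_{\textgoth{t}}(A)=\texttt{Res}_\texttt{R}\left(\mathcal{B}_{s-\textgoth{t}}(-\overline{A})\right)$, so the code is the restriction to $\texttt{R}^\ell$ of the $\texttt{S}$-linear cyclic code $\theta^{s-\textgoth{t}}\mathcal{B}(-\overline{A})$. By Proposition~\ref{cyclic} this code has generator polynomial $\theta^{s-\textgoth{t}}\prod_{a\in A}(x-\xi^{-a})$ (since the complementary of $-\overline{A}$ is $-A$, and applying the $q$-invariance so that $-A$ itself is a union of cyclotomic cosets and the product lies over $\texttt{R}$). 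The restriction code $\Im_{\textgoth{t}}(A)$ therefore consists exactly of those $\textbf{c}\in\texttt{R}^\ell$ whose associated polynomial vanishes at $\xi^{a}$ for every $a\in A$; equivalently, every codeword $\textbf{c}=(c_0,\ldots,c_{\ell-1})$ satisfies $\sum_{j=0}^{\ell-1}c_j\xi^{ja}=0$ for all $a\in A$, up to the scaling by $\theta^{s-\textgoth{t}}$ which does not affect the zero pattern of the defining conditions.

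Next I would exploit that $A$ is an interval: by Equation~(\ref{interval}) there are $u$ and $w$ with $(w,\ell)=1$ and $A=\{wu,w(u+1),\ldots,w(u+v-1)\}\bmod\ell$. The plan is to run the standard Vandermonde/van der Monde argument. Suppose for contradiction there is a nonzero codeword $\textbf{c}$ of Hamming weight $\le v$, supported on coordinates $j_1,\ldots,j_v$ (padding with arbitrary extra indices if the weight is strictly smaller, so we get exactly $v$ columns). The $v$ parity-check equations indexed by $a=wu,w(u+1),\ldots,w(u+v-1)$ restricted to the support give a homogeneous linear system whose coefficient matrix is $\left(\xi^{w(u+r)j_t}\right)_{0\le r<v,\ 1\le t\le v}$. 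Factoring $\xi^{wuj_t}$ out of column $t$, this matrix equals $\mathrm{diag}(\xi^{wuj_t})\cdot\left((\xi^{wj_t})^{r}\right)_{r,t}$, a Vandermonde matrix in the nodes $\xi^{wj_1},\ldots,\xi^{wj_v}$. Since $(w,\ell)=1$ and $\xi$ is a primitive $\ell$-th root of unity in $\Gamma(\texttt{S})$, the nodes $\xi^{wj_t}$ are pairwise distinct; the Vandermonde determinant is then $\prod_{t<t'}(\xi^{wj_{t'}}-\xi^{wj_t})$, a product of differences of distinct Teichm\"uller elements, hence a unit in the chain ring $\texttt{S}$ (its reduction mod $\theta$ is a nonzero product of differences of distinct roots of unity in $\mathbb{F}_{q^m}$, so nonzero). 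Therefore the matrix is invertible over $\texttt{S}$ and the only solution is the zero vector on the support, contradicting $\textbf{c}\neq\textbf{0}$. This forces $d_\texttt{H}(\Im_{\textgoth{t}}(A))\ge v+1$.

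The main obstacle, and the point that needs the most care, is the invertibility of the Vandermonde matrix over the chain ring rather than over a field: one must argue that differences of distinct powers of $\xi$ are units, not merely nonzero, and handle the fact that an element of $\Gamma(\texttt{S})$ minus another element of $\Gamma(\texttt{S})$ need not lie in $\Gamma(\texttt{S})$. The clean way is to pass to the residue field via the projection $\pi$: since $\pi(\xi)$ is a primitive $\ell$-th root of unity in $\mathbb{F}_{q^m}$ and $(\ell,q)=1$, the images $\pi(\xi^{wj_t})$ are distinct, so the Vandermonde determinant reduces to a nonzero element of $\mathbb{F}_{q^m}$, hence is a unit of $\texttt{S}$ by the characterisation of units in a finite chain ring. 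A secondary subtlety is that $\textbf{c}$ lies in $\texttt{R}^\ell$ while the computation takes place over $\texttt{S}$; but the parity-check equations hold in $\texttt{S}$ regardless, so restricting the ground ring only strengthens the conclusion, and the factor $\theta^{s-\textgoth{t}}$ is irrelevant since it multiplies the generator polynomial and does not enlarge the set of zeros a codeword must satisfy. With these two points dispatched, the remainder is the verbatim BCH computation.
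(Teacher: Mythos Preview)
Your proposal is correct and follows essentially the same route as the paper: both identify $\Im_{\textgoth{t}}(A)$ with the restriction of $\mathcal{B}_{s-\textgoth{t}}(-\overline{A})$, observe that $W_A$ serves as a parity-check matrix, restrict a hypothetical low-weight codeword to its support, and derive a contradiction from the invertibility of the resulting Vandermonde matrix. Your justification of invertibility via reduction modulo $\theta$ is in fact more careful than the paper's one-line appeal to $\zeta\in\Gamma(\texttt{S})^*$, but the overall strategy is identical.
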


\begin{proof} Let $A=\biggl\{wa_1\,\texttt{mod}\,\ell;w(a_1+1)\,\texttt{mod}\,\ell;\cdots;w(a_1+a_k-1)\,\texttt{mod}\,\ell\biggr\}$
 for some integer $w$  such
that $(w,\ell)=1.$ Then $\zeta:=\xi^w$ is also a primitive root of
$x^\ell-1.$  Suppose that $\textbf{c}$ is a nonzero codeword of
$\mathcal{B}_{s-\textgoth{t}}(-\overline{A})$ with the least
Hamming weight. Then $W_{A}\textbf{c}^T=\textbf{0}.$ Consider $
\{j\,|\,c_j\neq
0\}\subseteq\{j_1,j_2,\cdots,j_v\}:=\underline{v}.$ Consider
$\textbf{m}=(c_{j_1},c_{j_2},\cdots,c_{j_v})$ where
$\textbf{c}=(\cdots,0,c_{j_i},0,\cdots,0,c_{j_{i+1}},0,\cdots).$
Thus the equality $W_{A}\textbf{c}^T=\textbf{0}$ becomes
$W_{\underline{v}}\textbf{m}^T=\textbf{0},$ where $W_{\underline{v}}:=\left(%
\begin{array}{cccc}
  \zeta^{j_1a_1} & \cdots & \xi^{j_va_1}   \\
  \vdots &   & \vdots   \\
  \zeta^{j_1(a_1+a_v-1)} & \cdots & \xi^{j_v(a_1+a_v-1)}
\end{array}%
\right).$  We have
$$\texttt{det}(W_{\underline{v}})=-\zeta^{v\left(\sum\limits_{t=1}^vj_t\right)}\underset{1\leq
a<b\leq v}{\prod}\left(\zeta^{j_a}-\zeta^{j_b}\right)$$ is
invertible since $\zeta\in\Gamma(\texttt{S})^*.$ Therefore
$\textbf{m}=\textbf{0}$ which is a contradiction because
$\textbf{c}\neq\textbf{0}.$ Hence
$d_{\texttt{H}}\left(\Im_{\textgoth{t}}(A)\right)\geq
d_{\texttt{H}}(\mathcal{B}_{s-\textgoth{t}}(-\overline{A}))\geq
v+1.$
\end{proof}



\bibliographystyle{elsarticle-num}

\end{document}